%% file: main.tex
\documentclass[11pt,hyphens]{article}
\usepackage[utf8]{inputenc}

\usepackage{fullpage}

\usepackage{amsmath}
\usepackage{amsthm}
\usepackage{amssymb}
\usepackage{natbib}
\usepackage{color}
\usepackage{mathrsfs}
\usepackage{comment}
\usepackage[dvipsnames]{xcolor}
\usepackage{hyperref}
\hypersetup{
    colorlinks,
    linkcolor=BrickRed,
    citecolor=NavyBlue,
    urlcolor=Cerulean,
    linktoc=page
}
\usepackage{tikz}
\usepackage{dsfont}
\usepackage{bbm}
\usepackage{soul}
\usepackage{braket}

\usepackage{caption}
\usepackage{subcaption}

\input{macros}

\newcommand{\degminus}{\mathrm{adeg}^{-}}
\newlang{\exone}{E1}

\title{QMA Lower Bounds for Batch Verification via Approximate Degree}
\author{%
Mark Bun\thanks{Mark Bun was supported in part by NSF award CNS-2046425 and a Sloan Research Fellowship.} \\
Boston University \\
\texttt{mbun@bu.edu}
\and
Mandar Juvekar\thanks{Mandar Juvekar was supported in part by NSF award CNS-2046425 and an IBM PhD Fellowship.} \\
Boston University \\
\texttt{mandarj@bu.edu}
\and
Samuel King\thanks{Samuel King was supported in part by NSF CAREER award CCF-2338730.} \\
Georgetown University \\
\texttt{sik29@georgetown.edu}
}

\begin{document}

\maketitle

\begin{abstract}
    	We study batch verification in QMA query and communication complexity, where the goal is to understand how the resources needed to verify $m$ copies of a Boolean function $f$ depend on $m$. We give a general technique for proving lower bounds on the witness-query tradeoff needed to batch verify a function $f$ in terms of its approximate degree. Applying this technique to an explicit family of DNF formulas $f$, we show that attempting to save even a constant factor on the witness length of the baseline approach to batch verifying $f$ necessitates a large polynomial increase in the query cost. We also obtain new lower bounds on the QMA query complexity of read-once CNF formulas and on the surjectivity and $k$-element distinctness functions.
        Our lower bounds also lift to give communication analogs of these results.
\end{abstract}

\tableofcontents
\clearpage

\section{Introduction}
\input{sections/intro}

\section{Preliminaries} \label{sec:prelims}
\input{sections/prelims}

\section{SBQP Lower Bounds from High Approximate Degree}
\input{sections/sbqp-lower-bound}

\section{\QMAdt{} Lower Bounds}
\input{sections/qma-dt-batch-ver}

\section{Witness/Communication Tradeoffs for \QMAcc{} Batch Verification}
\input{sections/qma-cc-batch-ver}

\section*{Acknowledgments}
We would like to thank Nadezhda Voronova for her insightful comments on a draft of this paper, Sasha Golovnev for invaluable discussions during this work, and Robin Kothari and Justin Thaler for stimulating our interest in these problems.
We would also like to thank Robin Kothari for spotting an error in a previous manuscript.

\bibliographystyle{alpha}
\bibliography{mainbib}

\end{document}

%% file: macros.tex
\usepackage{complexity}

\newtheorem{theorem}{Theorem}[section]

\newtheorem{lemma}[theorem]{Lemma}

\newtheorem{corollary}[theorem]{Corollary}

\newtheorem{remk}[theorem]{Remark}

\theoremstyle{definition}
\newtheorem{definition}[theorem]{Definition}

\renewcommand{\R}{{\mathbb R}}
\newcommand{\N}{{\mathbb{N}}}

\newcommand{\zo}{\{0,1\}}
\newcommand{\pmone}{\{-1,1\}}

\let\E\relax
\DeclareMathOperator*{\E}{\mathbb{E}}

\newcommand{\eps}{\varepsilon}
\newcommand{\one}{\mathbf{1}}

\newcommand{\numberthis}{\addtocounter{equation}{1}\tag{\theequation}}

\newcommand{\abs}[1]{\left\lvert #1 \right\rvert}
\newcommand{\norm}[1]{\left\lVert #1 \right\rVert}
\newcommand{\ang}[1]{\left\langle #1 \right\rangle}

\newclass{\dt}{dt}
\newclass{\SBQP}{SBQP}
\newclass{\coQMA}{coQMA}

\newlang{\exhalf}{ExactHalf}
\newlang{\UOR}{UOR}
\newlang{\SURJ}{SURJ}
\newlang{\DIST}{DIST}
\newlang{\ED}{ED}

\newcommand{\QMAdt}{\textup{QMA}$^{\textup{dt}}$}
\newcommand{\QMAcc}{\textup{QMA}$^{\textup{cc}}$}

\newcommand{\sgn}{\operatorname{sgn}}

\newlang{\THR}{THR}
\newlang{\MAJ}{MAJ}
\newlang{\AND}{AND}
\newlang{\OR}{OR}
\newlang{\XOR}{XOR}

\newlang{\EQ}{EQ}
\newlang{\GT}{GT}
\newlang{\PrOR}{PrOR}

\newcommand{\apdeg}{\mathrm{adeg}}

%% file: sections/intro.tex
A powerful but untrusted server wants to convince a computationally constrained client that it has correctly performed a large list of computations that were delegated to it.
It can do so by proving to the client the veracity of $m$ statements $x_1, \ldots, x_m$.
As communication is costly, the proof must be as short as possible.
If the statements are instances of an NP (or more generally for this work, QMA) language, a baseline strategy is to send as a proof witnesses $w_1, \ldots, w_m$ for all of the statements---the client can then verify each statement individually.
Both the communication and the client's computational cost in this strategy are $m$ times the cost for a single statement.
For what statements is it possible to do better?
What tradeoffs are achievable between communication and computation?

This \emph{batch verification} problem has generated exciting research in complexity theory and cryptography. Highly communication-efficient interactive proofs are known for batch verifying UP statements (i.e., NP statements with unique witnesses) \cite{rei-rot-rot:delegating-computation, rei-rot-rot:batch-verification-up, rot-rot:batch-verification-proximity}, while a connection between batch verification and statistical witness indistinguishability~\cite{bit-etal:batch-statistically-hiding} suggests extending such a result to all of NP may be impossible. Zero-knowledge batch verification protocols are known for problems admitting non-interactive statistical zero-knowledge proofs~\cite{kas-etal:batch-zero-knowledge, KaslasiRV21, MuNRV24, KeretRV24}. Under cryptographic assumptions, computationally sound noninteractive batch arguments are known for all of NP~\cite{KalaiPY19, ChoudhuriJJ21, ChoudhuriJJ21b, DevadasGKV22, WatersW22, KalaiLVW23}; recent work has even constructed succinct interactive (classical) batch arguments for QMA~\cite{GoyalJM25}.
In cryptography, the study of batch verification of NP statements has led to new constructions of succinct non-interactive arguments (SNARGs)~\cite{ChoudhuriJJ21b, KalaiVZ21} and non-interactive zero-knowledge arguments (NIZKs)~\cite{ChampionW23} for more expressive classes and from weaker assumptions.
Finally,~\cite{NassarR25} recently initiated a study of quantum witness indistinguishable proofs and showed that they are implied by quantum batch proofs.

In this work, we study the batch verification problem for quantum Merlin-Arthur (QMA) proofs in the query and communication models. These models abstract computation to settings where techniques are available for proving strong lower bounds, yet which shed both conceptual and technical light on the corresponding Turing machine model. In the QMA \emph{query} or \emph{black-box} model, hereafter denoted by \QMAdt{}, a quantum verifier (the resource-bounded mortal ``Arthur'') wishes to evaluate a known function $f : \pmone^n \to \pmone$ on an input $x \in \pmone^n$.%
\footnote{The notation \QMAdt{} alludes to yet another view of query algorithms as generalizations of decision trees. Here and throughout the paper we identify $-1$ with ``true'' and 1 with ``false.''}
He receives a witness consisting of $w$ qubits from a prover (the all-powerful wizard ``Merlin''), and is allowed to make $q$ queries to $x$ in quantum superposition. The protocol is correct if (1) whenever $f(x) = -1$ (i.e., $x$ is a ``true'' input), there exists a witness that causes the verifier to accept with probability at least $2/3$, and (2) whenever $f(x) = 1$ (i.e., $x$ is a ``false'' input), every witness causes the verifier to reject with probability at least $2/3$. The cost of such a protocol is the sum $w + q$ of the witness length and the number of verifier queries, though we are also interested in understanding the tradeoff between these parameters. The QMA query complexity of a Boolean function $f$, denoted $\QMA^\dt(f)$, is the least cost of a protocol computing $f$.

The batch verification problem for QMA query complexity can thus be stated as follows: under what conditions on $f$ does the function $(\AND_m \circ f^m)(x_1, \dots, x_m) := f(x_1) \land \dots \land f(x_m)$ have low \QMAdt{} complexity?
But what should ``low'' mean here?
In the quantum setting, Grover's search algorithm gives a way to ``batch compute'' all functions with bounded-error algorithms: when $f$ has a bounded-error quantum query protocol (i.e., one that does not make use of a witness) with query cost $q$, Grover search implies that $\AND_m \circ f^m$ has a bounded-error quantum query protocol with cost only $O(\sqrt{m} \cdot q)$ \cite{hoy-mos-dew:quantum-search}.
This methodology extends to \QMAdt{} protocols.
If $f$ admits a \QMAdt{} protocol with witness length $w_f$ and query complexity $q$, one can compute $\AND_m \circ f^m$ by taking as witness all $m$ individual witnesses (for a total witness length of $w = m \cdot w_f$) and using Grover search to verify all $m$ copies using $O(\sqrt{m} \cdot q)$ queries. Our batch verification problem thus becomes: for what functions $f$ can one beat this baseline protocol for $\AND_m \circ f^m$ that uses Grover search to verify all $m$ individual witnesses?

Our first main result establishes a strong limitation on batching even when $f$ is highly structured and admits an efficient NP query protocol.

\begin{theorem}[Informal] \label{thm:informal-dnf}
For every $\delta > 0$ there is an explicitly given polynomial-size, constant-width DNF $f : \pmone^n \to \pmone$ such that every \QMAdt{} protocol for $\AND_m \circ f^m$, $m = n^{O(1)}$, with witness length $w \le c m \log n$ for a suitable constant $c \in (0,1)$ must make at least $\widetilde{\Omega}(n^{1-\delta} \sqrt{m/w})$ queries.
\end{theorem}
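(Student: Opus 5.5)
We plan to go through the connection, built up later in the paper, between \QMAdt{} protocols and one-sided approximate degree, applied to a hardness-amplified DNF. The first ingredient is the standard transformation of a \QMAdt{} protocol with witness length $w$ and query cost $q$ into an SBQP-type query algorithm (or, equivalently, a ``small-bias'' polynomial) of degree $O(q)$. We do this by guessing the witness: replace it with the maximally mixed state and run in-place (Marriott--Watrous) amplification to shrink the soundness error before guessing. The acceptance probability then separates true and false inputs by a factor of about $2^{-w}$. In approximate-degree terms, this means the function has a low-degree polynomial that is at least $\approx 2^{-w}$ on true inputs and much smaller on false inputs. The consequence is a lower bound of the form $q \ge \Omega(\degminus_{\epsilon}(F)/\text{poly}(w))$, or more precisely $q\cdot\sqrt{w}\gtrsim$ a suitable approximate degree at error $2^{-w}$. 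The reason is that amplifying by a factor $2^{-\Theta(w)}$ in this regime costs degree $\sqrt{w}$-type overhead.

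The second ingredient is a lower bound on that error-$2^{-\Theta(w)}$ one-sided approximate degree of $F=\AND_m\circ f^m$. For the base function $f$ we take a known constant-width, polynomial-size DNF with one-sided approximate degree (equivalently, positive-one-sided degree) $\Omega(n^{1-\delta})$. The natural candidate is the Bun--Thaler style construction, in which $\OR$ is composed with a surjectivity-like gadget, or the ``$k$-distinctness/surjectivity'' DNF family that achieves $n^{1-\delta}$.

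We then compose with $\AND_m$ using a dual-polynomial argument. The dual witness for $\AND_m$ at error $1-2^{-w}$, namely a threshold-degree or large-error approximate degree witness, has degree about $\sqrt{m/w}$ when $w\le cm$. We combine it with the dual witness for $f$ via dual block composition. This should yield
\[
\mathrm{adeg}_{1-2^{-\Theta(w)}}(\AND_m\circ f^m)\ \gtrsim\ \sqrt{m/w}\cdot n^{1-\delta}.
\]
Plugging this into the first step gives $q\ge\widetilde\Omega(n^{1-\delta}\sqrt{m/w})$. The constraint $w\le cm\log n$ appears because each copy of $f$ needs roughly $\log n$ bits of witness in the baseline protocol. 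Below that threshold, at least a constant fraction of the copies must be ``unwitnessed'', and that is exactly what the $\AND$ dual witness exploits.

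The main obstacle is the composition step. Dual block composition needs the inner dual witness for $f$ to have strong one-sided structure: small error mass on the inputs where $f=-1$, which is the side the $\AND$ needs. It also needs the correlation to survive the extremely small error $2^{-w}$ with only polylogarithmic loss, so that the $\log n$ per copy is not swallowed. My first approach would be to amplify the inner dual for $f$ by composing with a small $\OR$/$\AND$ gadget of size $O(\log n)$, which drives its false-side error below $1/m$. I would then apply the standard Sherstov/Bun--Thaler error-accounting lemma for block composition with a one-sided outer function.
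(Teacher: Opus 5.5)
\textbf{There is a genuine gap in your second ingredient.} The quantity you propose to lower-bound, $\apdeg_{1-2^{-\Theta(w)}}(\AND_m\circ f^m)$, is not as large as $n^{1-\delta}\sqrt{m/w}$, so no dual block composition can prove your target. Already for the outer function, $\apdeg_{1-\gamma}(\AND_m)=O(\sqrt{\gamma m})$ for $\gamma\ge 1/m$ (Chebyshev). It equals $1$ once $\gamma\le 1/(2m)$, because $(\sum_i z_i+m-1)/(2m)$ is a $(1-\frac{1}{2m})$-approximation; recall $\AND_m$ has threshold degree $1$. With $\gamma=2^{-w}$ this is nowhere near $\sqrt{m/w}$.

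Now plug a $\frac{1}{2m}$-error approximation of $f$, which has degree $O(\apdeg_{1/3}(f)\log m)$, into that linear form. This gives a $(1-\frac{1}{4m})$-approximation of $\AND_m\circ f^m$. Hence $\apdeg_{1-2^{-w}}(\AND_m\circ f^m)=O(\apdeg_{1/3}(f)\log m)=O(n\log m)$ for every $w\ge\log(4m)$, which contradicts your target for, e.g., $m=n$, $w=2\log n$, $\delta<1/2$.

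The underlying problem is that converting an SBQP approximation into a $(1-2^{-\Theta(w)})$-error approximation discards exactly what matters. What matters is tightness to within $\eps\approx 2^{-w}$ on every false input, combined with boundedness by $M\approx 2^{w}$ on true inputs. This is precisely the limitation of prior work that the paper points out. Your accounting in step 1 also does not close. Marriott--Watrous costs a factor $w$, not $\sqrt{w}$, so the polynomial has degree $O(qw)$ (Lemma~\ref{l:qma-to-sbqp}). Even your claimed bound would then only give $q\gtrsim n^{1-\delta}\sqrt{m}/w^{3/2}$.

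\textbf{The missing idea} is to lower-bound SBQP approximations directly, via a relaxed SBQP dual (Lemma~\ref{l:sbqp-dual}). Such a dual has correlation at least $2\eps$, high pure degree, and positive mass on $F^{-1}(-1)$ at most $\eps/M$. The relevant outer quantity is then the vanishing-error one-sided degree $\degminus_{\gamma}(\AND_m)=\Theta(\sqrt{m\log(1/\gamma)})=\Theta(\sqrt{mw})$ with $\gamma\approx 2^{-w}$, whose dual is automatically one-sided. Theorem~\ref{t:sbqp-lb} combines this dual with an ordinary dual for $\apdeg_{1-1/m}(f)>d$ through Sherstov's direct-product construction. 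It uses the polynomial $p_k$ to show that the misclassified true mass is at most $\eta^{k+1}\binom{m}{k+1}/(2(1-\eta)^m)\le\eps/M$. The result is an SBQP degree bound of $\Omega(d\sqrt{mw})$, and $qw\gtrsim d\sqrt{mw}$ gives the theorem.

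So the one-sided structure you try to impose on the inner dual, via an extra $\OR$/$\AND$ gadget, is unnecessary. For Sherstov's DNF only the two-sided bound $\apdeg_{1-1/n^c}(f)=\Omega(n^{1-\delta})$ of Theorem~\ref{t:cnf-high-odeg} is used. In the paper's convention a width-$k$ DNF in fact has $\degminus(f)\le k$: take $1-2N(x)$, where $N(x)$ is the number of satisfied terms. Finally, the restriction $w\le cm\log n$ comes from requiring $2^{w+2}\le M=\eps\,2^{\tau\sqrt{m\log(1/\eps)}\log m}$. It is not a count of unwitnessed copies.
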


One can interpret this result as follows. Since the function $f$ is computed by a polynomial-size, constant-width DNF, it admits an NP query protocol with witness length $O(\log n)$ and $O(1)$ queries. The baseline strategy for verifying $\AND_m \circ f^m$ thus entails a proof with witness length $O(m \log n)$ and uses $O(\sqrt{m})$ queries. Theorem~\ref{thm:informal-dnf} shows that if one seeks to beat this baseline using a protocol with witness length just a constant factor smaller than this $O(m \log n)$, then the requisite query complexity jumps to $\Omega(n^{1-\delta})$, which is $\Omega(m)$ whenever $m = O(n^{1-\delta})$.

We prove Theorem~\ref{thm:informal-dnf} by proving a more general connection between the \QMAdt{} complexity of batch verifying $f$ with the \emph{approximate degree} of $f$.
The $\eps$-approximate degree of $f$, denoted $\apdeg_\eps(f)$, is the minimum degree of a real-valued polynomial that approximates $f$ point-wise to additive error $\eps$.
Approximate degree is an extensively studied measure, and has many connections to randomized and quantum query and communication complexity (see, e.g., \cite{bun-tha:approximate-degree}). In particular, Beals et al.~\cite{bea-etal:quantum-polynomials} showed that the acceptance probability of every $q$-query quantum algorithm can be written as a degree-$2q$ polynomial in its input, and hence the $\eps$-error quantum query complexity of a function $f$ is always at least half of its $\eps$-approximate degree.

We show that lower bounds on the $\eps$-approximate degree of $f$ for various settings of $\eps$ yield witness-query tradeoffs for \QMAdt{} batch verifying $f$.
\begin{theorem}[Informal] \label{thm:informal-generic}
    Let $f : \pmone^n \to \pmone$ be a function.
    \begin{enumerate}
        \item If $\apdeg_{1 - 1/m}(f) > d$ then every \QMAdt{} protocol for  $\AND_m \circ f^m$ with witness length $w = O(m)$ requires $\Omega(d \sqrt{m/w})$ queries.
        \item If $\apdeg_{1/3}(f) > d$
        then every \QMAdt{} protocol for $\AND_m \circ f^m$ with witness length at most $m$ requires $\Omega(d)$ queries.
    \end{enumerate}
\end{theorem}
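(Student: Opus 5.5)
The plan is to go through the standard link between QMA and SBQP, and then use approximate degree lower bounds for SBQP. The key fact is the Marriott--Watrous witness-reduction: a \QMAdt{} protocol with witness length $w$ and $q$ queries can be amplified and then have its witness replaced by the maximally mixed state. The resulting witness-free quantum algorithm has the following gap:
\begin{itemize}
\item on false inputs, its acceptance probability is at most $2^{-w}\cdot \eps$;
\item on true inputs, it is at least $2^{-w}\cdot(1-\eps)$.
\end{itemize}
Alternatively, one applies the one-sided amplification with $O(w)$ repetitions (or the in-place Marriott--Watrous amplification, which keeps the witness at $w$ qubits) to push soundness to $2^{-\Omega(w)}$ smaller than completeness divided by $2^{w}$. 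By Beals et al., the acceptance probability is a polynomial $p$ of degree $2q'$, where $q'$ is the amplified query count, with small values on false inputs and much larger values on true inputs, all in $[0,1]$. So the whole task is to show that $\AND_m\circ f^m$ has no such "one-sided ratio" polynomial of low degree.

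Step 1 is to extract a polynomial for a single copy of $f$ by restriction. Fix all blocks except block $i$ to true inputs of $f$ and average over the choice of block $i$; this gives a univariate-in-$f$ separation. The main obstacle is that restricting loses the factor-$m$ structure, so I would instead use a dual witness.

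\textbf{Part 1.} Take the dual polynomial $\psi$ for $\apdeg_{1-1/m}(f)>d$. It is orthogonal to degree $\le d$ polynomials, has correlation $>1-1/m$ with $f$, and its false-input mass dominates its true-input mass by a factor of roughly $(1+\Theta(1/m))$. Then form the dual block-composed function $\Psi$ for $\AND_m\circ f^m$ (the $\AND$ outer dual of Sherstov/Bun--Thaler style). This $\Psi$ is orthogonal to polynomials of degree $<d\cdot\sqrt{m}$-ish; I will calibrate the amplification level $k \approx w$ so the outer $\AND_m$ dual handles ratio $2^{w}$, which gives a degree bound of roughly $d\sqrt{m/w}$ following the $\AND$ approximate degree for error $1-2^{-w}$ (degree $\sqrt{m/w}$-type tradeoff). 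On the true part $\Psi$ has mass that, weighted by $p$, exceeds the false-input contribution amplified by $2^{w}$. This yields $\langle \Psi, p\rangle\neq 0$, a contradiction. The delicate point, which I expect to be the main obstacle, is checking that the composed dual places almost all its negative mass (weight $\le 2^{-w}$-relative) on false inputs despite $\psi$ only having $1/m$ advantage. I would handle this with the product structure: $m$ copies of the $1/m$ bias compound to a constant, and then $\AND$ of fan-in $m$ with error parameter tied to $w=O(m)$ contributes the $\sqrt{m/w}$ factor.

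\textbf{Part 2.} With $w\le m$, I would simply restrict to $m' $ copies. The intended argument is that a witness of $\le m$ qubits cannot carry enough to beat $\AND$-composition. Use a constant-error dual $\psi$ of degree $d$ for $f$ with the standard $\AND_m$ composition. Error $1/3$ per copy compounds so that the composed dual has false-side dominance $2^{\Omega(m)}\ge 2^{w}$ (choosing constants), with pure high degree $\ge d$. This forces $2q'\ge d$, i.e.\ $q=\Omega(d)$.
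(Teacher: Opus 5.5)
Your overall framework matches the paper's: Marriott--Watrous amplification, a maximally mixed witness, and then a dual for $\AND_m\circ f^m$ built from an outer $\AND_m$ dual and the inner dual $\varphi$ for $f$. But the quantitative core does not work.

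\textbf{Error regime and degree bookkeeping.} After substituting the maximally mixed state, false inputs are still accepted with probability up to $\eps$, not $2^{-w}\eps$, because soundness holds for every witness. So amplification to error $2^{-\Theta(w)}$ is unavoidable, and it multiplies the degree by $\Theta(w)$ (by $\Theta(m)$ in Part~2). To get $q=\Omega(d\sqrt{m/w})$ you therefore need a dual of pure high degree $\Omega(d\sqrt{mw})$, and for Part~2 one of pure high degree $\Omega(dm)$. Your outer ingredient, an $\AND_m$ dual at error $1-2^{-w}$ of degree $\sqrt{m/w}$, does not exist once $2^{-w}\le 1/(2m-1)$. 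Indeed, $(\sum_i x_i+m-1)/(2m-1)$ shows $\apdeg_{1-\delta}(\AND_m)\le 1$ for all $\delta\le 1/(2m-1)$. Even granting a composed degree of $d\sqrt{m/w}$, the $\Theta(w)$ overhead would only give $q=\Omega(d\sqrt{m}/w^{3/2})$. Likewise in Part~2, pure high degree $d$ against $q'=O(qm)$ gives only $q=\Omega(d/m)$, so ``$2q'\ge d$'' does not yield $q=\Omega(d)$. The SBQP polynomial is $2^{-\Theta(w)}$-tight on \emph{false} inputs, so the right outer object is a one-sided \emph{vanishing-error} dual witnessing $\degminus_\gamma(\AND_m)=\Omega(\sqrt{m\log(1/\gamma)})$ with $\gamma=4\eps\approx 2^{-w}$. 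For Part~2 it is the normalized parity, witnessing $\degminus_{2^{-m}}(\AND_m)\ge m$. These are the sources of the $\sqrt{mw}$ and $m$ factors. In Part~2 you must also first boost the hypothesis to $\apdeg_{0.99}(f)=\Omega(d)$: with inner correlation only $1/3$, errors compound like $(2/3)^m$ rather than helping.

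\textbf{The delicate step.} The step you flag is not resolved by compounding. Correlation $1-1/m$ is not a $1/m$ advantage; it means $\sgn\varphi$ disagrees with $f$ on at most $\eta/2=1/(2m)$ of $\varphi$'s mass. In the plain composition $2^m\psi(\ldots,\sgn\varphi(x_i),\ldots)\prod_i|\varphi(x_i)|$, positive mass that $\psi$ puts on sign patterns one flip away from $-\one_m$ becomes positive mass on true inputs of $\AND_m\circ f^m$ with probability up to about $\eta$. So both the leakage $\sum_{F(x)=-1,\Phi(x)>0}\Phi(x)$ and the loss in correlation can be of order $1/m$. You need the $\approx 2^{-w}$ correlation to survive and the leakage to be at most $\eps/M\approx 2^{-2w}$.

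The missing idea is Sherstov's direct-product correction. Multiply by $p_k(\alpha^m(x))$, where $p_k(z)=\prod_{i=1}^k(|z|-i)$ is nonnegative on $[-1,1]^m$ and vanishes on Boolean points of Hamming weight $1,\ldots,k$; this suppresses inputs with few misclassified blocks. Combined with one-sidedness of the outer dual ($\psi(-\one_m)\le 0$), this bounds both error terms by $O\bigl(\frac{\eta^{k+1}}{(1-\eta)^m}\binom{m}{k+1}\bigr)=2^{-\Omega(k\log m)}$. This is where $(1-\eta)^m=\Omega(1)$ enters. The cost is only that pure high degree drops to $\Omega(d(\delta_{\AND}\sqrt{m\log(1/\gamma)}-k))$. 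Taking $k\approx\frac{1}{2}\delta_{\AND}\sqrt{m\log(1/\gamma)}$ gives degree $\Omega(d\sqrt{mw})$ and leakage $2^{-\Omega(\sqrt{mw}\log m)}\le 2^{-2w-2}$. For Part~2, $\eta=1/100$ and $k=m/2-1$ make the error terms at most $2^{-2m}$ while keeping degree $\Omega(dm)$.
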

As we discuss below, previous work derived \QMAdt{} lower bounds from more stringent variants of approximate degree lower bounds.
However, to our knowledge, ours is the first to draw a direct connection between \QMAdt{} complexity and the standard notion of approximate degree.

As an illustrative example, let $f$ be the complement of the element distinctness function on $n$ bits (defined in Section~\ref{sec:poly-preliminaries}), which is known to have approximate degree $\tilde{\Omega}(n^{2/3})$~\cite{aar-shi:element-distinctness,amb:element-distinctness-range}.
At the same time, $f$ admits a \QMAdt{} protocol with witness length $O(\log n)$ and query complexity $O(\sqrt{\log n})$: the witness consists of the location of a collision, and the verifier uses Grover search to check the collision.
So the na\"ive batching strategy achieves witness length $O(m \log n)$ and query complexity $O(\sqrt{m \log n})$.
Item 2 of Theorem~\ref{thm:informal-generic} shows that any protocol improving on this witness length by even an $O(\log n)$ factor must have its query complexity jump to $\widetilde{\Omega}(n^{2/3})$, which is significantly larger for any $m \ll n^{4/3}$. Item~1 of the theorem gives more fine-grained tradeoffs when one can obtain lower bounds on the approximate degree of $f$ to \emph{large} error.
Similar tradeoffs apply to other functions that admit efficient \QMAdt{} protocols such as DNFs and the complement of the surjectivity function.

Beyond the intrinsic interest of the batch verification problem, Theorem~\ref{thm:informal-generic} also serves as a valuable technical tool for understanding the \QMAdt{} complexity of specific functions that have been central to the study of quantum query complexity.
Sherstov and Thaler~\cite{she-tha:vanishing-error} used \QMAdt{} lower bounds on functions of the form $\AND_m \circ f^m$ to give the sharpest known lower bounds for the influential element distinctness and permutation testing problems. For technical reasons elaborated on below, their method falls short of applying to closely related functions, such as the two-level AND-OR tree, the $k$-element distinctness function, and the surjectivity function. Our framework for proving batch verification lower bounds bypasses the limitations from prior work, allowing us to prove the following lower bounds for these functions.

\begin{theorem}\label{thm:specific}
    Let $\SURJ_n$ denote the surjectivity function on $n$ bits  and $k\ED_n$ the $k$-element distinctness function on $n$ bits.
    \begin{enumerate}
        \item $\QMA^\dt(\AND_{n^{1/3}} \circ \OR_{n^{2/3}}^{n^{1/3}}) = \Omega(n^{1/3})$.
        \item $\QMA^\dt(\SURJ_n) = \widetilde{\Omega}(n^{3/7})$.
        \item $\QMA^\dt(k\ED_n) = \widetilde{\Omega}(n^{\frac{3k-1}{7k-1}})$.
    \end{enumerate}
\end{theorem}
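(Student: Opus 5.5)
Each item of Theorem~\ref{thm:specific} follows from Theorem~\ref{thm:informal-generic} applied to an inner function $f$ with large approximate degree. The common device is this: a \QMAdt{} protocol for $F$ of total cost $c$ has witness length at most $c$. If $F$ either equals $\AND_m \circ f^m$ or contains a copy of it as a subfunction (obtained by restricting or relabeling inputs), then such a protocol batch verifies $f$ with witness length $w \le c$. We then pick parameters so that, whenever $c$ is small, $w \le m$ (or $w = O(m)$). Theorem~\ref{thm:informal-generic} then forces the query cost, and hence $c$, to be at least $\Omega(d)$ or $\Omega(d\sqrt{m/w})$. The argument is a case split: either $c \ge m$, or the witness is short and the approximate-degree bound applies. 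So $c \ge \min(m, \text{query bound})$, and we balance $m$ against $d$.

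\textbf{Item 1.} Take $f = \OR_{n^{2/3}}$ and $m = n^{1/3}$. We use the fact that $\apdeg_{1-1/m}(\OR_N) = \Omega(\sqrt{N/m})$, the standard large-error bound for OR from Paturi/Chebyshev-type arguments. With $N = n^{2/3}$ and $m = n^{1/3}$ this gives $d = \Omega(n^{1/6})$.
\begin{itemize}
\item If $w \ge m = n^{1/3}$, we are done, since the cost is at least $w$.
\item Otherwise $w < m$, and Item~1 of Theorem~\ref{thm:informal-generic} gives at least $d\sqrt{m/w}$ queries.
\end{itemize}
In the second case the total cost is at least $w + n^{1/6}\sqrt{n^{1/3}/w}$. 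This is minimized at $w \approx n^{1/3}$, giving $\Omega(n^{1/3})$. Item~2 of Theorem~\ref{thm:informal-generic} with $\apdeg_{1/3}(\OR_{n^{2/3}}) = \Theta(n^{1/3})$ gives the same bound more simply, since $w \le m$ implies at least $n^{1/3}$ queries.

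\textbf{Items 2 and 3.} Here the real work is a reduction: embed $\AND_m \circ f^m$ into $\SURJ_n$, respectively $k\ED_n$, for a suitable $f$ with known approximate degree. For surjectivity, split the range $[R]$ into $m$ blocks and the domain $[N]$ into $m$ blocks. Set $f$ to be surjectivity onto a block of size $R/m$ from $N/m$ inputs, and force all other elements to map within their own blocks. Then $\SURJ$ of the whole input is exactly the AND of the block surjectivities. The inner $f$ is the dual object to (a complement of) surjectivity on a smaller instance, with approximate degree roughly $\tilde\Omega((N/m)^{3/4})$ in the appropriate range regime (Bun--Kothari--Thaler, Sherstov). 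We need a lower bound at the right error level, probably at error $1-1/m$ or at constant error. We then optimize the case split $c \ge \min\bigl(m, \tilde\Omega(\text{block degree})\cdot\sqrt{m/w}\bigr)$ over $m$ and the block sizes. I expect the algebra to yield $n^{3/7}$: for example, the constant-error version with $m = (n/m)^{3/4}$ gives $m = n^{3/7}$. For $k\ED$, use the same blocking with the inner function the complement of $k$-distinctness on a block. Range separation between blocks ensures that a $k$-collision in the whole input occurs only within a block, and the inner approximate-degree bound is $\tilde\Omega(N^{3/4 - 1/(4k)})$. Solving $m = (n/m)^{(3k-1)/(4k)}$ gives $m = n^{(3k-1)/(7k-1)}$, matching the claim.

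The main obstacle is Items 2 and 3. First, the reduction must produce exactly $\AND_m \circ f^m$ with the correct polarity: the complement of the base function is what has cheap witnesses, yet we need AND of the block functions. Second, the embedding must respect the input encoding size, since $\SURJ_n$ is on $n$ bits, which costs only log factors. Third, we need approximate-degree lower bounds for the block function in the exact range/domain regime the blocking creates. If the constant-error bound suffices, Item~2 of Theorem~\ref{thm:informal-generic} handles everything. Otherwise I would fall back on known large-error bounds for these functions via dual polynomials.
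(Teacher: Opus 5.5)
Your plan is essentially the paper's argument. Both use the block/shift reduction, which shows that $\SURJ_{N,R}$ and $k\ED_{N,N}$ contain $\AND_m\circ\SURJ_{N/m,R/m}^m$ and $\AND_m\circ k\ED_{N/m,N/m}^m$. Both then apply the constant-error batch bound (Theorem~\ref{t:qma-dt-lb-small-error}, Part~2, with witness length at most $m-3$) and balance $m$ against the block degree. For Item~1, your second route ($f=\OR_{n^{2/3}}$, $m=n^{1/3}$, $\apdeg(\OR_{n^{2/3}})=\Theta(n^{1/3})$) is exactly the paper's regrouping specialized to $m'=m$.

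There are two slips to fix:
\begin{itemize}
\item Your large-error route for Item~1 does not give the same bound. With $d=n^{1/6}$ the query bound is $n^{1/3}/\sqrt{w}$. The total $w+n^{1/3}w^{-1/2}$ is minimized at $w\approx n^{2/9}$, not $n^{1/3}$, so this route yields only $\Omega(n^{2/9})$. The constant-error route is genuinely needed.
\item The polarity ``obstacle'' you flag is not real, and your $k\ED$ instantiation gets it backwards. The inner function must be $k\ED_{N/m,N/m}$ itself, not its complement. Range separation gives that $k\ED$ of the whole list equals the AND of $k\ED$ on the blocks, just as you correctly observed for $\SURJ$. The theorem lower-bounds the cost of certifying surjectivity or distinctness, so the cheap-witness complement plays no role. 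In any case $\apdeg$ is invariant under negation.
\end{itemize}

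Your hedging about error levels and regimes also resolves favorably. Blocking preserves the ratio $R/m=(N/m)/2$ and the canonical $k\ED_{N/m,N/m}$ setting. So the constant-error Bun--Kothari--Thaler and Mande--Thaler--Zhu bounds apply directly, and no large-error bounds are needed.
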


The best lower bounds that followed straightforwardly from prior work were $\QMA^\dt(\AND_{n^{1/3}} \circ \OR_{n^{2/3}}^{n^{1/3}}) = \Omega(n^{1/6})$, $\QMA^\dt(\SURJ_n) = \widetilde{\Omega}(n^{1/4})$ and $\QMA^\dt(k\ED_n) \geq \QMA^\dt(\ED_n) = \widetilde{\Omega}(n^{2/5})$.

\subsection{Technical Overview}

Our main results follow from a general result (Corollaries~\ref{c:large-error-to-sbqp} and \ref{c:const-error-to-sbqp}, which formalize Theorem~\ref{thm:informal-generic})
showing that $\AND_m \circ f^m$ has high \QMAdt{} complexity whenever $f$ has high \emph{approximate degree}.

Suppose there is a QMA protocol for a function $F$ using witness length $w$ and $q$ queries. The in-place amplification technique of Marriott and Watrous~\cite{mar-wat:quantum-arthur} gives rise to a protocol for $F$ with improved error $2^{-8w}$ using witness length $w$ and $O(qw)$ queries. Replacing the witness in this protocol with the maximally mixed state yields a witness-free quantum query protocol for $F$, and hence an approximating polynomial $p$ with the following properties:
\begin{gather*}
    F(x) = -1 \implies  -2^{8w}\le p(x) \le -1, \\
    F(x) = 1 \implies 1 - 2^{-8w} \le p(x) \le 1.
\end{gather*}
That is, $p$ very tightly approximates $F$ on its false inputs, but (roughly) subject to having the right sign behavior, can become exponentially large on true inputs. Thus, in order to prove a lower bound on the QMA query complexity of $F$, it suffices to rule out low-degree approximating polynomials $p$ of this form. Up to this point, this lower bound method, which is inspired by the complexity class containment $\QMA \subseteq \SBQP$, is well-established for proving QMA query complexity lower bounds~\cite{Aaronson12, aar-etal:approximate-counting, she-tha:vanishing-error, dal-etal:quantum-proofs}. We refer to such an approximating polynomial $p$ as an ``SBQP approximation'' to $F$.

Prior work proved degree lower bounds for SBQP approximations by relating it to other better-studied notions of polynomial approximation, including $\varepsilon$-approximate degree in the ``large error'' regime where $\varepsilon \approx 1 - 2^{-w}$, threshold degree, and vanishing error one-sided approximate degree. We depart from prior work by studying SBQP approximations more directly. We do this using the method of dual polynomials, whereby one proves approximability lower bounds by constructing dual solutions to a linear program capturing the approximation problem. We introduce a notion of ``relaxed SBQP duals'' to capture feasible solutions in this dual space that are relatively clean to construct and reason about. A relaxed SBQP dual for a function $F$ is a real-valued function $\Phi$ that is $\varepsilon$-correlated with $F$ for some $\varepsilon \ge 2^{-8w}$, orthogonal to all low degree polynomials, and which has ``almost one-sided error'' in the sense that the total weight it places on $\{x : F(x) = -1 \land \Phi(x) > 0\}$ is at most $2^{-8w}\varepsilon$. The proof of our general theorem proceeds by showing that whenever $f$ has high approximate degree, we can construct an appropriate relaxed SBQP dual for the function $\AND_m \circ f^m$. The connection outlined above, in turn, implies our \QMAdt{} tradeoffs.

To construct this relaxed SBQP dual, we appeal to the fact that the high approximate degrees of $\AND_m$ and $f$ are themselves witnessed by dual objects $\psi$ and $\varphi$, respectively.
Building on a construction of Sherstov~\cite{she:direct-product}, we show how to combine $\psi$ and $\varphi$ to produce a new dual witness to the high SBQP degree of the composed function $F = \AND_m \circ f^m$. Note that Sherstov's original construction was tailored to showing only that $F$ has high approximate degree. Since we need a stronger SBQP degree lower bound, our analysis needs to first take advantage of the fact that $\psi$ actually witnesses the high \emph{one-sided} approximate degree of $\AND_m$. That is, it rules out one-sided approximations which tightly approximate $\AND_m$ on false inputs, but may be arbitrarily unbounded on true inputs. This manifests in the dual witness $\psi$ as the property that it places zero weight on the set $\{x : \AND_m(x) = -1 \land \psi(x) > 0\}$. Second, our analysis needs to keep tighter track of how errors on the ``true'' side of $F$ accumulate to show that it has the required almost one-sided error.

With Theorem~\ref{thm:informal-generic} established, Theorem~\ref{thm:informal-dnf} follows by taking the $f$ to be an explicit, constant-width, polynomial-size DNF with approximate degree $\Omega(n^{1-\delta})$ as constructed by Sherstov~\cite{she:dnf-cnf}. We obtain the results in Theorem~\ref{thm:specific} by building on the ideas of Sherstov and Thaler~\cite{she-tha:vanishing-error}. For every parameter $m$, the functions $k\ED_n$ and $\SURJ_n$ contain, as subfunctions, $\AND_m \circ (k\ED_{n/m})^m$ and $\AND_m \circ \SURJ_{n/m}^m$, respectively. We can thus invoke our general theorem on these subfunctions for appropriate choices of $m$ that balance the final witness length and query costs in our lower bound.

The reason why Sherstov and Thaler could use this insight to prove lower bounds for element distinctness and permutation testing, but not for $k$-element distinctness or surjectivity, owes to a gap between the type of polynomial approximation lower bounds they could prove on functions of the form $\AND_m \circ f^m$ and the type they identified as yielding SBQP approximation lower bounds. Specifically, they showed a lower bound on the vanishing-error approximate degree $\apdeg_{\varepsilon}(\AND_m \circ f^m)$, but needed to invoke a stronger lower bound on the \emph{one-sided} vanishing-error approximate degree $\degminus_{\varepsilon}(\AND_m \circ f^m)$ to rule out SBQP approximations. For some functions with special structure, including $\AND$, element distinctness, and permutation testing, ordinary approximate degree and one-sided approximate degree fortuitously coincide, and hence vanishing-error approximate degree lower bounds imply \QMAdt{} lower bounds. But such a coincidence is not (yet) known for $k$-distinctness or surjectivity, and false for other simple functions such as the two-level AND-OR tree. By working directly with SBQP approximations via our relaxed SBQP duals, we can prove \QMAdt{} lower bounds for arbitrary composed functions $\AND_m \circ f^m$, regardless of whether ordinary and one-sided approximate degree agree.

\subsection{Lifting to QMA Communication Lower Bounds}

In the QMA \emph{communication} model, denoted by  \QMAcc{}, Alice holds an input $x \in \pmone^n$, Bob holds an input $y \in \pmone^n$, and they wish to compute a joint function $f : \pmone^n \times \pmone^n \to \pmone$ of their inputs. Bob receives a witness of $w$ qubits from the prover and the parties have unlimited access to shared entanglement. They wish to minimize the amount of communication between them (including the witness) required to compute $f(x, y)$. The success criterion, cost of a protocol, and \QMAcc{} complexity of a function $f$ are defined analogously to the query model.

Sherstov's pattern matrix method~\cite{she:pattern-matrix} shows how to generically ``lift'' an approximate degree lower bound for a function $f$ to a communication lower bound for a related function $G := f \circ g^n$, where $g$ is a certain constant-sized two-party gadget. By adapting the pattern matrix method to show how to lift SBQP degree lower bounds to QMA communication lower bounds, we establish QMA lower bounds for communication analogs of the batch verification problems we study. Our analysis builds closely on prior generalizations of the pattern matrix method lifting one-sided approximate degree to (multiparty) MA communication~\cite{gav-she:np-conp, she:dnf-cnf} and to QMA communication~\cite{kla:arthur-merlin-communication}, but the formulation of the statement that is most useful to us appears to be new. 

\subsection{Open Questions}
\begin{itemize}
\item Is the witness-query tradeoff given by Theorem~\ref{thm:informal-generic} tight? That is, can one find a function $f$ such that for every $w = O(m)$, $\AND_m \circ f^m$ admits a \QMAdt{} protocol with witness length $w$ and query complexity $O(\apdeg_{1-1/m}(f) \cdot \sqrt{m/w})$?

\item Our \QMAdt{} batch verification lower bounds depend on approximate degree lower bounds for the base function $f$. Recall that approximate degree is a lower bound on the quantum query complexity of $f$, but can be much smaller. Can one obtain similar batch verification lower bounds starting from arbitrary quantum query lower bounds on $f$?

\item How tight are the \QMAdt{} lower bounds in Theorem~\ref{thm:specific}?
Can our ideas be used to obtain improved \QMAdt{} lower bounds for element distinctness and permutation testing?
\end{itemize}

%% file: sections/prelims.tex
We let $\N = \{1, 2, \ldots\}$ denote the set of natural numbers and $\R$ denote the set of real numbers.
We write $[n]$ for the set $\{1, 2, \ldots, n\}$.

We identify the Boolean values ``true'' and ``false'' with $-1$ and $1$ respectively.
The functions $\OR_n : \pmone^n \to \pmone$ and $\AND_n : \pmone^n \to \pmone$ are the usual functions $\OR_n(x) = \bigvee_{i = 1}^n x_i$ and $\AND_n(x) = \bigwedge_{i = 1}^n x_i$. The Hamming weight of a string $x \in \pmone^n$, denoted $|x|$, is the number of indices $i$ such that $x_i = -1$.
A function $f$ is said to be computable by a \emph{disjunctive normal form (DNF)} formula of size $s$ and width $w$ if there exist $S_1, S_2, \ldots, S_s \subseteq [n]$ with cardinality $w$ such that $f(x) = \bigvee_{i \in [s]} \bigwedge_{j \in S_i} x_j$.
The conjunctions $\bigwedge_{j \in S_i} x_j$ are called \emph{terms}.

We use the following standard real-valued functions. We use $\exp_k$ to denote the base-$k$ exponential, i.e., $\exp_k(x) = k^x$. The base-2 logarithm is denoted by $\log(\cdot)$. For any predicate $P$, $\mathbbm{1}[P(x)]$ equals 1 if $P(x)$ is true and 0 otherwise.
The notation $\abs{x}$ represents the absolute value, Hamming weight, or cardinality of $x$ depending on if $x$ is a real number, binary string, or set respectively.

For real valued functions $f$ and $g$ with the same finite support $X$, we use the usual notation for the $\ell_1$ norm and inner product.
\begin{align*}
    \norm{f}_1 &= \sum_{x \in X} \abs{f(x)} \\
    \ang{f, g} &= \sum_{x \in X} f(x) g(x)
\end{align*}
We often refer to the inner product $\ang{f, g}$ as the \emph{correlation} between $f$ and $g$.
The function $f$ is \emph{balanced} if $\sum_{x \in X} f(x) = 0$.

We use superscripts to denote the direct product of a function with itself: for every $f : X \to Y$ and $m \in \N$, $f^m : X^m \to Y^m$ is the function $f^m(x_1, \ldots, x_m) = (f(x_1), \ldots, f(x_m))$.
The symbol $\circ$ is used for function composition, so that $(f \circ g)(x) = f(g(x))$.
Consequently, for every $g : X \to Y$ and $f : Y^m \to Z$, $f \circ g^m$ denotes the ``block composition'' of $f$ and $g$: $(f \circ g^m)(x_1, \ldots, x_m) = f(g(x_1), \ldots, g(x_m))$.

For any set $S$, we use $2^S$ to denote the set of all subsets of $S$.
For every $T \subseteq [n]$, we denote by $\chi_T : \pmone^n \to \pmone$ the Fourier character $\chi_T(x) = \prod_{i \in T} x_i$.
We recall that for every function $f : \pmone^n \to \R$ there exists a unique function $\widehat{f} : 2^{[n]} \to \R$ called the \emph{Fourier transform} of $f$ such that for all $x$,
\begin{equation*}
    f(x) = \sum_{T \subseteq [n]} \widehat{f}(T) \chi_T(x).
\end{equation*}
Furthermore, this unique function is given by
\begin{equation*}
    \widehat{f}(T) = 2^{-n} \sum_{x \in \pmone^n} f(x) \chi_T(x).
\end{equation*}

\subsection{Polynomial approximation} \label{sec:poly-preliminaries}

We say that a polynomial $p$ $\eps$-\emph{approximates} a function $f : \pmone^n \to \pmone$ if for all $x \in \pmone^n$, $\abs{f(x) - p(x)} \leq \eps$.
The $\eps$-\emph{approximate degree}, denoted $\apdeg_\eps(f)$, of $f$ is the minimum degree of a polynomial that $\eps$-approximates $f$.
A polynomial $p$ $\eps$-\emph{one-sided approximates} $f$ if the following conditions hold.
\begin{enumerate}
    \item For every $x \in f^{-1}(-1)$,  we have $p(x) \in (-\infty, -1+\eps]$.
    \item For every $x \in f^{-1}(1)$, we have $p(x) \in [1-\eps, 1+\eps]$.
\end{enumerate}
The $\eps$-\emph{one-sided approximate degree} of $f$,
denoted $\degminus_\eps(f)$, is the minimum degree of a polynomial that $\eps$-one-sided approximates $f$.
One can also define a complementary notion of one-sided approximation where the approximating polynomial must tightly approximate true inputs and is allowed to be arbitrarily positive on false inputs, though we do not consider that notion or its associated one-sided approximate degree measure in this paper.
For both one-sided and regular approximate degree, when $\eps$ is not specified, we assume that $\eps = 1/3$.

We recall the standard dual formulation of approximate degree.
\begin{theorem}[\cite{she:pattern-matrix, bun-tha:hardness-amplification}]
    \label{t:apdeg-dual}
    Let $f : \pmone^n \to \pmone$ be a Boolean function. Let $d \geq 1$ be an integer and $\eps > 0$ be a real number. Then $\apdeg_\eps(f) > d$ if and only if there exists a function $\psi : \pmone^n \to \R$ such that
    \begin{enumerate}
        \item $\ang{f, \psi} \geq \eps$,
        \item $\norm{\psi}_1 = 1$, and
        \item \label{item:adeg-phd} for every $T \subseteq [n]$ with $|T| \leq d$, $\ang{\psi, \chi_T} = 0$.
    \end{enumerate}
    Furthermore, $\degminus_\eps(f) > d$ if and only if there exists a function $\psi$ that satisfies the conditions above and has the additional property that for all $x$ such that $f(x) = -1$, $\psi(x) \leq 0$.
\end{theorem}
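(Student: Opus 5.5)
The statement to prove is the dual characterization in Theorem~\ref{t:apdeg-dual}. I would derive it from linear programming duality, first for ordinary approximate degree and then for the one-sided variant.

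\textbf{LP formulation.} Write $P_d$ for the span of $\{\chi_T : |T| \le d\}$. Then $\apdeg_\eps(f) \le d$ exactly when the best $\ell_\infty$ error
\[
\min_{p \in P_d} \max_x |f(x) - p(x)|
\]
is at most $\eps$. This minimum is attained because $P_d$ is finite-dimensional. I would phrase it as the linear program: minimize $t$ subject to $-t \le f(x) - p(x) \le t$ for all $x$, with the Fourier coefficients of $p$ as free variables. The program is feasible and bounded, so strong duality applies.

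\textbf{Dual for ordinary degree.} Attach multipliers $\alpha_x, \beta_x \ge 0$ to the two inequalities at each $x$, and set $\psi = \alpha - \beta$. The Lagrangian computation gives the dual program: maximize $\ang{f,\psi}$ subject to $\norm{\psi}_1 \le 1$ and $\ang{\psi,\chi_T} = 0$ for all $|T| \le d$. A more transparent route is that $\ell_1$ is the dual norm of $\ell_\infty$, so the distance from $f$ to the subspace $P_d$ equals $\max \{ \ang{f,\psi} : \norm{\psi}_1 \le 1,\ \psi \perp P_d \}$.

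\begin{itemize}
\item \emph{Easy direction.} If $\psi$ satisfies the three conditions and $p \in P_d$, then
\[
\ang{f,\psi} = \ang{f-p,\psi} \le \norm{f-p}_\infty \norm{\psi}_1 .
\]
Hence every degree-$d$ polynomial has error at least $\ang{f,\psi} \ge \eps$.
\item \emph{Converse.} If the optimal error exceeds $\eps$, strong duality produces $\psi$ with $\ang{f,\psi} > \eps$ and $\norm{\psi}_1 \le 1$. I would rescale so that $\norm{\psi}_1 = 1$; this keeps the correlation above $\eps$ since the correlation is positive.
\end{itemize}

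\textbf{Strict versus non-strict inequality.} The statement pairs the strict inequality $\apdeg_\eps(f) > d$ with $\ang{f,\psi} \ge \eps$. As literally read, there is a boundary case: an optimal error of exactly $\eps$ makes $\apdeg_\eps(f) \le d$, yet yields a dual with correlation exactly $\eps$. I would follow the cited sources and read the condition as strict inequality, or equivalently as error exactly $\eps$ up to the standard convention.

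\textbf{One-sided variant.} Take the primal: minimize $t$ subject to $|p(x) - f(x)| \le t$ on $f^{-1}(1)$, and $p(x) \le -1 + t$ on $f^{-1}(-1)$. The lower constraint $p(x) \ge -1 - t$ on $f^{-1}(-1)$ is removed, so the corresponding multiplier $\beta_x$ no longer exists for those $x$. The dual is then identical except that $\psi(x) = \alpha_x \ge 0$ in the Lagrangian sign convention, which corresponds to $\psi(x) \le 0$ on $f^{-1}(-1)$ once signs are matched with $\ang{f,\psi}$.

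I would check the easy direction explicitly. On $f^{-1}(-1)$ we have $\psi(x) \le 0$ and $p(x) + 1 \le \eps$, so
\[
(f(x) - p(x))\,\psi(x) = -(1 + p(x))\,\psi(x) \le \eps\,|\psi(x)| .
\]
On $f^{-1}(1)$ the two-sided bound gives $(f(x) - p(x))\,\psi(x) \le \eps\,|\psi(x)|$. Summing over $x$ and using $\psi \perp P_d$ gives $\ang{f,\psi} \le \eps$, with the same strictness caveat as before.

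\textbf{Main obstacle.} The only delicate step is getting the sign conventions right in the one-sided dual. Unboundedness of $p$ on true inputs must translate into a sign constraint on $\psi$ there, and not into a change in the $\ell_1$ normalization.
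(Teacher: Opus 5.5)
The paper cites this result without proof. Your proposal is correct and is the standard LP-duality derivation behind the cited sources: the primal is the $\ell_\infty$ (or one-sided) fitting program, strong duality and rescaling $\norm{\psi}_1 \le 1$ up to $1$ give the converse, and dropping the lower constraint on $f^{-1}(-1)$ yields the sign constraint. The paper uses the same technique in its own proof of Lemma~\ref{l:sbqp-dual}.

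Your boundary remark is also right. With $\ang{f,\psi} \ge \eps$, the ``if'' direction fails when the optimal error is exactly $\eps$. For example, take parity on $n \ge 2$ bits, $d = n-1$, $\eps = 1$ and $\psi = 2^{-n}\chi_{[n]}$; the zero polynomial already has error exactly $1$. So the statement should be read with $\ang{f,\psi} > \eps$, as in the cited sources. Alternatively, you can define $\eps$-approximation with strict error $< \eps$; this makes the non-strict dual condition exact, whereas ``error exactly $\eps$'' does not fix the boundary case.
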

The function $\psi$ is called a \emph{dual polynomial} for the lower bound $\apdeg_\eps(f) > d$ or $\degminus_\eps(f) > d$ (depending on whether it has the additional property).
Condition~\ref{item:adeg-phd} is often referred to as $\psi$ having ``pure high degree at least $d$.''

Sherstov showed that there exist functions computable by polynomial-size, constant-width DNF formulas that have near-maximal approximate degree.

\begin{theorem}[\cite{she:dnf-cnf}] \label{t:cnf-high-odeg}
    For all constants $\delta > 0$ and $c \geq 1$, for every $n \in \N$, there is an explicit function $f_n : \pmone^n \to \pmone$ computable by a polynomial-size, constant-width DNF such that $\apdeg_{1 - 1/n^c}(f_n) = \Omega(n^{1-\delta})$.%
\end{theorem}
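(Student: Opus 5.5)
Since this result is imported from \cite{she:dnf-cnf}, the plan is to reconstruct the shape of its proof, not to re-derive it from scratch. The approach is \emph{iterated hardness amplification that preserves DNF structure}. At each stage we hold a function $g$ on $N$ bits which is a polynomial-size DNF of constant width $k$ and satisfies $\apdeg_{1-1/N^{c}}(g) \ge N^{\alpha}$. We then produce a new function $g'$, again a polynomial-size DNF of constant width $k' = k + O(1)$, on $N'$ bits, satisfying $\apdeg_{1-1/N'^{c}}(g') \ge N'^{\alpha'}$ with $\alpha' > \alpha$. Iterating a constant number of times (depending on $\delta$) drives the exponent above $1-\delta$. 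The base case can be $\OR_n$, whose approximate degree is $\Omega(\sqrt{n})$ even to error $1 - 1/n^{c}$ after mild error amplification, or Minsky--Papert style AND-OR trees.

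Each amplification step has two parts. First, compose: take $h = \OR_M \circ g^M$, or more generally $\AND$/$\OR$ of copies of $g$. Here one uses dual polynomials as in Theorem~\ref{t:apdeg-dual}, gluing a one-sided dual $\psi$ for $\OR_M$ with a dual $\varphi$ for $g$ in the style of Sherstov's direct-product and dual-block-composition constructions. The aim is to show that the error tends to $1$ polynomially fast while the degree multiplies by roughly $\sqrt{M}$. Second, compress: re-encode the input of $h$ so that the number of variables drops while the degree is essentially preserved. One views a sparse input as a list of $O(\log)$-bit indices of its nonzero positions, in the spirit of the element-distinctness and surjectivity reductions. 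A term of $h$ that checks $k$ specific positions becomes a term that checks that $k$ list entries equal specified addresses. This requires width $O(k\log N)$ per term, so the encoding must be chosen with a constant-size alphabet or a constant-width gadget in order to keep the width constant.

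For the degree bound under compression, the plan is to symmetrize the dual witness over permutations of the list and to use the fact that the dual for $h$ can be taken to be supported, up to small $\ell_1$ mass, on inputs of low Hamming weight. Inputs of high weight are handled by the one-sided nature of $\psi$ together with a pure-high-degree-preserving zeroing argument, as in Bun--Thaler hardness amplification. One then checks that the per-stage parameters satisfy $\alpha' = \alpha + \Omega((1-\alpha)\cdot\text{const})$, so that a constant number of rounds suffices for any fixed $\delta$. The polynomial size and the explicitness are then clear from the construction, and the error $1-1/n^{c}$ for an arbitrary constant $c$ is obtained by taking $M$ a sufficiently large polynomial in the final round.

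I expect the main obstacle to be the compression step while keeping the width constant. The standard range-reduction trick naturally yields width $\Theta(\log n)$. To fix this, I would first try encoding each index in unary-like blocks with a constant-width ``pointer'' gadget, and then show that the zeroing-out argument still loses only a $1/\mathrm{poly}$ amount of correlation. This is precisely where Sherstov's delicate construction is needed.
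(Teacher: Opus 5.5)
The paper gives no proof of this statement; it imports it from Sherstov. So the question is whether your sketch would establish it, and it does not.

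\textbf{The crux is deferred.} What you describe is the Bun--Thaler amplification scheme: form $\OR_M\circ g^M$, then re-encode low-weight inputs as lists of indices, with $1-\alpha'=\Theta(1-\alpha)$. That scheme is known to give only AC$^0$ circuits whose depth grows with $1/\delta$, or quasipolynomial-size DNFs of polylogarithmic width. The reason is that each round substitutes $\Theta(\log N)$-width equality tests for the variables, and a negated variable, once re-encoded, becomes a conjunction over all list entries. What the cited theorem achieves beyond this is that polynomial size and \emph{constant} width survive every round. That is exactly the step you defer (``precisely where Sherstov's delicate construction is needed'').

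Your suggested fix does not close it. A unary or one-hot code for an index in $[MN]$ costs $MN$ variables per list entry, so it expands the input rather than compressing it. More fundamentally, on a block of bits at most $2^{w}$ width-$w$ tests can be pairwise exclusive. So with polynomially many index values, most assignments to the new variables switch on many original variables at once. The total function must remain hard on these non-codewords. Building a dual that controls them is the main construction, not a $1/\mathrm{poly}$ correlation loss in a zeroing argument.

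\textbf{Two further claims in your plan are false as stated.}

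First, the base case. For $c\ge 1$ and $\eta=n^{-c}$, the degree-$1$ polynomial $p(x)=\eta-2\eta|x|$ is a $(1-\eta)$-approximation to $\OR_n$. So $\OR_n$ has no large-error hardness at all. The Minsky--Papert function does have large-error hardness (via threshold degree), but its DNF width is polynomial. Neither satisfies your invariant.

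Second, $\OR$-composition cannot push the error to $1-1/\mathrm{poly}$ while multiplying the degree by $\sqrt M$. Composing an outer approximant for $\OR_M$ with error-reduced inner approximants gives
$\apdeg_{1-\eta}(\OR_M\circ g^M)\le \apdeg_{1-2\eta}(\OR_M)\cdot\apdeg_{\eta/(4M)}(g)=O\bigl((\sqrt{M\eta}+1)\,\apdeg_{1/3}(g)\log(M/\eta)\bigr)$.
Hence for $\eta\le 1/M$ (for example $\eta=n^{-c}$ with $M\le n$) there is no gain from $M$ whatsoever. Your lower bounds for the compressed function are inherited from $h$, so they are capped by this same quantity. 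Consequently, ``taking $M$ a sufficiently large polynomial in the final round'' cannot produce the $1-1/n^{c}$ error, and that part of the statement also needs a separate mechanism.
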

``Explicit'' here means that there is a uniform algorithm that, given $n$, prints out a description of $f_n$ in time polynomial in $n$.

The \emph{surjectivity} function $\SURJ_{N,R} : \pmone^{N \log R} \to \pmone$ interprets its input as a list of $N$ numbers in $[R]$ and evaluates to true if and only if every element of $[R]$ appears in the list at least once.
As a default we let $n = N \log R$ and $R = N/2$.
We write $\SURJ_n$ for $\SURJ_{N,R}$ with the default setting of parameters.
Bun, Kothari, and Thaler~\cite{bun-kot-tha:strikes-back} showed that the approximate degree of $\SURJ_n$ is $\widetilde{\Omega}(n^{3/4})$.

The \emph{element distinctness} problem $\ED_{N,R} : \pmone^{N \log R} \to \pmone$ interprets its input as a list of $N$ numbers in $[R]$ and evaluates to true if and only if no element appears more than once.
The canonical setting of parameters is $N = R$, as the function is trivially false when $N > R$ and it was shown by Ambainis~\cite{amb:element-distinctness-range} that the $\eps$-approximate degree of $\ED_{N,R}$ is the same for all $N \leq R$.
Aaronson and Shi~\cite{aar-shi:element-distinctness} showed that $\apdeg(\ED_n) = \widetilde{\Theta}(n^{2/3})$ when $R$ is large compared to $N$.
The aforementioned result of Ambainis extended their bounds to our setting of parameters.

The \emph{$k$-element distinctness} problem $k\ED_{N,R} : \pmone^{N \log R} \to \pmone$ is a generalization of element distinctness that evaluates to true if and only if no range element appears $k$ or more times.
We use the same canonical setting of parameters as for element distinctness: $k\ED_n = k\ED_{N,N}$ where $n = N \log N$.
Mande, Thaler, and Zhu~\cite{man-tha-zhu:k-distinctness} showed that $\apdeg(k\ED_n) = \widetilde{\Omega}(n^{3/4 - 1/(4k)})$.

Finally, for our lower bound on the \QMAdt{} complexity of the two-level AND-OR tree we will need the lower bound $\apdeg(\AND_m \circ \OR_n) = \Omega(\sqrt{mn})$, proved by Bun and Thaler~\cite{bun-tha:markov-bernstein} and Sherstov~\cite{she:approximating-and-or}.

\subsection{Quantum Merlin-Arthur protocols}

Let $f : \pmone^n \to \pmone$ be a function.
A \emph{QMA query protocol} for $f$ with witness length $w$, query complexity $q$, completeness error $\eps_1$, and soundness error $\eps_2$, is a quantum oracle algorithm $A$ such that
\begin{enumerate}
    \item For all $x \in f^{-1}(-1)$ there exists an $w$-qubit ``witness'' $\ket{\psi}$ such that $\Pr_A[A^x(\ket{\psi}) = -1] \geq 1 - \eps_1$;
    \item For all $x \in f^{-1}(1)$, for every $w$-qubit state $\ket{\psi}$, $\Pr_A[A^x(\ket{\psi}) = -1] \leq \eps_2$; and
    \item For all $x$ and $w$, $A^x(w)$ makes at most $q$ queries to the input $x$.
\end{enumerate}
The algorithm $A$ is often referred to as a \emph{verifier}.
If a protocol has completeness (resp.\ soundness) error 0, we say that it has perfect correctness (resp.\ soundness).
When the completeness and/or soundness errors are not specified, we set them to the default value $1/3$.
The QMA query complexity of $f$, denoted $\QMA^\dt(f)$, is the minimum $w + q$ such that there exists a QMA query protocol for $f$ with witness length $w$ and query complexity $q$.

QMA communication complexity is defined analogously.
A QMA communication protocol for a two-party function $f : X \times Y \to \pmone$ with witness length $w$, communication cost $c$, completeness error $\eps_1$, and soundness error $\eps_2$ is a two-party quantum communication protocol $(A, B)$ with access to shared entanglement such that for all inputs $(x, y)$ and $w$-qubit states $\ket{\psi}$, the execution $[A(x), B(y,\ket{\psi})]$ involves at most $c$ qubits of communication and satisfies completeness and soundness conditions analogous to the query case.
For a more formal definition of this setting we refer the reader to~\cite{raz-shp:quantum-proofs}.
The QMA communication complexity of $f$, denoted $\QMA^\cc(f)$ is the minimum $w + c$ such that there exists a QMA communication protocol for $f$ with witness length $w$ and communication cost $c$.

In models with classical witnesses (such as MA or QCMA), one can reduce completeness and soundness errors by repeatedly executing the verifier and outputting the majority output.
This does not work for protocols where the witness is a quantum state, since executing the verifier might alter the witness state.
Nevertheless, Marriott and Watrous showed that one can, in fact, reduce the completeness and soundness errors for QMA protocols.
The proof of this theorem can be adapted to achieve the same error reduction in the QMA communication setting.

\begin{theorem}[\cite{mar-wat:quantum-arthur}] \label{t:marriott-watrous}
    Let $f : \pmone^n \to \pmone$ be a function. Suppose there exists a $q(n)$-query quantum algorithm $Q$, numbers $a(n), b(n) \in [0,1]$, and $w(n), p(n) \leq \poly(n)$ such that
    \begin{enumerate}
        \item for all $x \in f^{-1}(-1)$ there exists a $w(n)$-qubit state $\ket{\psi}$ such that $\Pr_Q[Q^x(\ket{\psi})$ accepts$] \geq a(n)$,
        \item for all $x \in f^{-1}(1)$, for all $w(n)$-qubit states $\ket{\psi}$, $\Pr_Q[Q^x(\ket{\psi})$ accepts$] \leq b(n)$, and
        \item $a(n) - b(n) \geq 1/p(n)$.
    \end{enumerate}
    Then for every $r(n) \leq \poly(n)$ there exists a \QMAdt{} protocol for $f$ with completeness and soundness error $2^{-r(n)}$, witness length $w(n)$, and query complexity $O(q(n) \cdot r(n) \cdot p(n)^2)$.
\end{theorem}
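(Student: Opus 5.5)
The plan is to prove the theorem by Marriott--Watrous in-place amplification. The verifier alternates between two projective measurements on the joint witness--ancilla register. It estimates the acceptance probability of $Q$ from the statistics of the outcome sequence, and never copies or re-requests the witness. Let $V_x$ denote the unitary part of $Q^x$ acting on the $w(n)$ witness qubits together with an ancilla register initialized to $\ket{0\cdots 0}$. Define two projectors:
\begin{itemize}
\item $\Pi_0 = I_{\mathrm{wit}} \otimes \ket{0\cdots0}\bra{0\cdots0}_{\mathrm{anc}}$, which checks that the ancilla is in its initial state;
\item $\Pi_1^x = V_x^\dagger (\ket{1}\bra{1}_{\mathrm{out}} \otimes I) V_x$, which checks that running $Q^x$ would accept.
\end{itemize}
Measuring $\{\Pi_1^x, I-\Pi_1^x\}$ means applying $V_x$, measuring the output qubit, and applying $V_x^\dagger$. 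This costs $2q(n)$ queries. Measuring $\Pi_0$ costs no queries.

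The new verifier starts from $\ket{\psi}\otimes\ket{0\cdots0}$ and performs $N$ alternating measurements $\Pi_1^x, \Pi_0, \Pi_1^x, \Pi_0, \ldots$. It records outcome bits $y_1, \ldots, y_{N}$, prepending $y_0 = 1$ for the initial state lying in $\mathrm{im}\,\Pi_0$. It then counts the number of indices $i$ with $y_i = y_{i-1}$ and accepts iff this fraction is at least $(a+b)/2$. We take $N = \Theta(r(n)\, p(n)^2)$, so the total query count is $O(q(n)\, r(n)\, p(n)^2)$ and the witness length is still $w(n)$.

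The analysis rests on Jordan's lemma applied to $\Pi_0$ and $\Pi_1^x$. The space decomposes into invariant subspaces of dimension at most $2$. In each such subspace $S_j$, both projectors have rank at most one and meet at angle $\theta_j$. The number $p_j = \cos^2\theta_j$ is exactly an eigenvalue of $\Pi_0\Pi_1^x\Pi_0$ restricted to $\mathrm{im}\,\Pi_0$. Moreover, $\max_{\ket{\psi}} \Pr[Q^x(\ket\psi)\text{ accepts}]$ equals the largest such eigenvalue.

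The key claim is the following: if the initial state lies in $S_j \cap \mathrm{im}\,\Pi_0$, then the indicators $\mathbbm{1}[y_i = y_{i-1}]$ are i.i.d.\ Bernoulli$(p_j)$. The reason is that after each measurement the state is one of the two basis vectors of $S_j$ belonging to the projector just measured. The next measurement, onto the other projector's pair of vectors, then agrees with probability $\cos^2\theta_j$ regardless of history.

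For a general initial state $\ket\psi\otimes\ket{0\cdots0} = \sum_j \alpha_j \ket{v_j}$, the subspaces are invariant and mutually orthogonal. Since the measurements commute with the decomposition, the outcome distribution is the mixture $\sum_j |\alpha_j|^2 \cdot (\text{distribution for } v_j)$.

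With this in hand, the two cases go as follows.
\begin{itemize}
\item \textbf{Yes instances.} Take the witness to be the top eigenvector of $\Pi_0\Pi_1^x\Pi_0$, whose eigenvalue is $p_j \ge a$. A Chernoff bound with gap $(a-b)/2 \ge 1/(2p(n))$ and $N = \Theta(r\,p^2)$ samples shows that the estimate falls below $(a+b)/2$ with probability at most $2^{-r}$.
\item \textbf{No instances.} Every eigenvalue satisfies $p_j \le b$. Each mixture component therefore accepts with probability at most $2^{-r}$ by the same Chernoff bound, and hence so does the mixture, for every witness.
\end{itemize}

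The main obstacle is the claim that within a Jordan block the agreement indicators are independent with the same bias $p_j$. This must be verified carefully, including the degenerate one-dimensional blocks, where $p_j \in \{0,1\}$ and the process is deterministic. Everything else reduces to Jordan's lemma and a Chernoff bound. The same argument transfers verbatim to the communication setting. There, $V_x$ is replaced by the unitary of the communication protocol, so that implementing $V^\dagger$ doubles the communication, and the ancilla includes the parties' workspace and the shared entanglement, whose reinitialization check is performed locally.
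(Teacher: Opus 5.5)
Your proposal is correct and is the Marriott--Watrous in-place amplification argument itself: alternating measurements of $\Pi_0$ and $\Pi_1^x$, Jordan's lemma making the agreement indicators i.i.d.\ Bernoulli$(p_j)$ within each block, a mixture over the orthogonal blocks, and a Chernoff bound with $N=\Theta(r\,p^2)$. The paper does not reprove this theorem; it only cites it, so this is the intended argument. One caveat concerns only your closing aside, not the stated query-model theorem: in the communication setting, $\Pi_0$ would have to project the shared-entanglement register back onto a fixed entangled state, and that measurement cannot be carried out by local operations alone, so the extension is not quite verbatim.
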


\subsection{Linear algebra}

Our results on communication complexity require some standard concepts from matrix analysis which we recall here.
Unless otherwise specified, all matrices we consider are over the field of real numbers.
We use the notation $[A_{x,y}]_{x \in X, y \in Y}$ to refer to a $|X| \times |Y|$ matrix $A$ whose rows and columns are indexed by elements of $X$ and $Y$ respectively and whose entries are $A_{x,y}$.
For any two matrices $[A_{x,y}]_{x \in X, y \in Y}$ and $[B_{x,y}]_{x \in X, y \in Y}$, the inner product of $A$ and $B$ is defined as
\begin{equation*}
    \ang{A, B} = \sum_{x \in X, y \in Y} A_{x,y} B_{x,y}.
\end{equation*}
The singular values of an $m \times n$ matrix $A$ are denoted by $\sigma_1(A) \geq \sigma_2(A) \geq \cdots \geq \sigma_{\min\{m,n\}}(A) \geq 0$.
We recall four standard notions of matrix norm.
\begin{align*}
    \norm{A}_1 &= \sum_{x, y} |A_{x,y}| \tag{The $L_1$ norm} \\
    \norm{A} &= \max_i \sigma_i(A) \tag{The spectral norm} \\
    \norm{A}_\Sigma &= \sum_i \sigma_i(A) \tag{The trace norm} \\
    \norm{A}_F &= \sqrt{\sum_i \sigma_i(A)^2} \tag{The Frobenius norm}
\end{align*}
We will use the following well-known identities.
For all compatible real matrices $A$ and $B$,
\begin{align*}
    \norm{AB}_\Sigma &\leq \norm{A}_F \norm{B}_F, \text{ and} \\
    \ang{A, B} &\leq \norm{A}_\Sigma \norm{B}.
\end{align*}

%% file: sections/sbqp-lower-bound.tex
In this section we show that if a function $f$ has high large-error approximate degree, then $\AND_m \circ f^m$ has high SBQP cost.
We start by defining a notion of SBQP cost in terms of polynomial approximations and deriving a corresponding dual notion for it.

\begin{definition}
    A polynomial $(M,\eps)$-SBQP-approximates $f: \pmone^n \to \pmone$ if for all $x \in \pmone^n$,
    \begin{enumerate}
        \item if $f(x) = -1$ then $-M-1-\eps \leq p(x) \leq -1 + \eps$, and
        \item if $f(x) = 1$ then $1 - \eps \leq p(x) \leq 1 + \eps$.
    \end{enumerate}
\end{definition}

\begin{lemma} \label{l:sbqp-dual}
    For all $f : \pmone^n \to \pmone$ and $M^*, \eps, d \geq 0$, if there exists a function $\varphi : \pmone^n \to \R$ with the following properties, then there does not exist a polynomial of degree at most $d$ that $(M^*, \eps)$-SBQP-approximates $f$.
    \begin{enumerate}
        \item $\ang{\varphi, f} \geq 2\eps$,
        \item $\norm{\varphi}_1 \leq 1$,
        \item for all $S \subseteq [n]$ such that $|S| \leq d$, $\ang{\varphi, \chi_S} = 0$,
        \item $\sum_{x : f(x) = -1, \varphi(x) > 0} \varphi(x) \leq \eps/M^*$.
    \end{enumerate}
\end{lemma}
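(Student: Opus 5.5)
The plan is a standard weak-duality argument. Suppose, for contradiction, that $p$ has degree at most $d$ and $(M^*,\eps)$-SBQP-approximates $f$. Every polynomial of degree at most $d$ is a linear combination of characters $\chi_S$ with $|S|\le d$, so property~3 gives $\ang{\varphi,p}=0$. Hence
\[
\ang{\varphi,f}=\ang{\varphi,f-p}=\sum_{x}\varphi(x)\bigl(f(x)-p(x)\bigr),
\]
and I will bound this sum from above pointwise, splitting $\pmone^n$ into three parts.

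\textbf{False inputs.} If $f(x)=1$, the approximation condition gives $|f(x)-p(x)|\le\eps$. So the term at $x$ is at most $\eps\abs{\varphi(x)}$.

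\textbf{True inputs with $\varphi(x)\le 0$.} If $f(x)=-1$, then $-M^*-1-\eps\le p(x)\le -1+\eps$, so $f(x)-p(x)=-1-p(x)\in[-\eps,\,M^*+\eps]$. When $\varphi(x)\le 0$, the product $\varphi(x)(f(x)-p(x))$ is largest when $f(x)-p(x)$ is as small as possible. It is therefore at most $\varphi(x)\cdot(-\eps)=\eps\abs{\varphi(x)}$.

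\textbf{True inputs with $\varphi(x)>0$.} Here the product is at most $\varphi(x)(M^*+\eps)=\eps\abs{\varphi(x)}+M^*\varphi(x)$.

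Summing the three cases and applying properties~2 and~4 gives
\[
\ang{\varphi,f}\le \eps\norm{\varphi}_1+M^*\sum_{x:f(x)=-1,\varphi(x)>0}\varphi(x)\le \eps+M^*\cdot\frac{\eps}{M^*}=2\eps .
\]

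The main subtlety is that this yields only the non-strict bound $\ang{\varphi,f}\le 2\eps$, whereas property~1 gives $\ang{\varphi,f}\ge 2\eps$, so a contradiction needs strictness somewhere. The statement genuinely needs $\eps>0$: for $\eps=0$ the function $\varphi\equiv 0$ satisfies all four properties, yet $f$ itself has an exact polynomial representation.

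To close this gap I would try the following, in order.
\begin{enumerate}
\item Show that equality forces every inequality above to be tight simultaneously: $\norm{\varphi}_1=1$, $p(x)=1+\eps$ on the false inputs in the support of $\varphi$, and $p(x)=-1+\eps$ or $p(x)=-M^*-1-\eps$ on the true inputs in the support, according to the sign of $\varphi(x)$. Then rule this configuration out, for instance because it would make $p-\eps$ agree with a fixed function on $\operatorname{supp}(\varphi)$, contradicting $\ang{\varphi,p}=0$ given $\ang{\varphi,f}>0$.
\item If that fails in degenerate cases, adopt the harmless convention used in the applications: there the hypotheses hold with a little slack (property~1 strict, or property~4 with slack), and the argument above then gives $\ang{\varphi,f}<2\eps$, the desired contradiction.
\end{enumerate}
Everything else is routine bookkeeping.
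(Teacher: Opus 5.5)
Your three-case computation is the paper's argument unpacked. The paper writes the linear program for the least $M$ admitting a degree-$d$ $(M,\eps)$-SBQP approximation, passes to its dual, plugs in $\psi=M^*\varphi/\eps$ and invokes weak duality. For that LP, weak duality is exactly your inequality $\ang{\varphi,f}\le \eps\norm{\varphi}_1+M\sum_{x:f(x)=-1,\varphi(x)>0}\varphi(x)$.

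You are right that this yields only the non-strict bound, and the paper's proof has the same defect. Weak duality shows that every feasible pair $(M,p)$ has $M\ge M^*$, which does not exclude $M=M^*$. So the paper's step ``it is sufficient to show that the dual program has objective at least $M^*$'' fails exactly in the boundary case.

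Your first proposed repair cannot succeed. If all your inequalities are tight, then $\ang{\varphi,p}=\ang{\varphi,f}-\eps\norm{\varphi}_1-M^*\sum_{x:f(x)=-1,\varphi(x)>0}\varphi(x)=2\eps-\eps-\eps=0$ automatically, so orthogonality gives no contradiction. The boundary case really occurs, and not only in your degenerate $\eps=0$ example.

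Take $n=2$, $d=1$, $\eps=1/4$, $M^*=1$, $f=\OR_2$ (so $f(1,1)=1$ and $f=-1$ elsewhere), and $\varphi(x)=\frac14x_1x_2$. Then:
\begin{itemize}
\item $\norm{\varphi}_1=1$ and $\varphi$ is orthogonal to $1,x_1,x_2$;
\item $\ang{\varphi,f}=\frac12=2\eps$;
\item the only true input with $\varphi>0$ is $(-1,-1)$, contributing $\frac14=\eps/M^*$.
\end{itemize}
Yet $p(x)=\frac34(x_1+x_2-1)$ has degree $1$ and takes the values $\frac34=1-\eps$ at $(1,1)$, $-\frac34=-1+\eps$ at $(\pm1,\mp1)$, and $-\frac94=-M^*-1-\eps$ at $(-1,-1)$. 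Hence it $(1,\frac14)$-SBQP-approximates $f$, and the lemma as literally stated is false.

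Your second option is the correct fix: require strict inequality in property~1 or property~4, or conclude only that no $(M,\eps)$-approximation with $M<M^*$ exists. This costs nothing downstream, since the duals built in Theorem~\ref{t:sbqp-lb} have slack in the applications (the communication section even uses $\ang{\Phi,h}>2\eps$).

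One small slip in your tightness list: on a false input in the support, equality forces $p(x)=1-\eps\,\sgn\varphi(x)$, not $1+\eps$ in general.
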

\begin{proof}
    Let $f : \pmone^n \to \pmone$ be arbitrary.
    Fix any $\eps \geq 0$.
    The smallest $M$ such that there exists a degree-$d$ $(M, \eps)$-SBQP-approximation to $f$ is given by the following linear program.
    \begin{align*}
        \min_{\substack{M \\ p \text{ of deg } < d}} \hspace*{1em} &M \\
        \text{s.t.} \hspace*{1em} & p(x) \geq f(x) - \eps & \forall x \in f^{-1}(1) \\
        & p(x) +  M \geq f(x) - \eps & \forall x \in f^{-1}(-1) \\
        & -p(x) \geq -f(x) - \eps & \forall x \in \pmone^n \\
        & M \geq 0
    \end{align*}
    The dual of this linear program is the following.
    \begin{align*}
        \max_{\psi^+, \psi^- : \pmone^n \to \R} \hspace*{1em} &\sum_{x \in \pmone^n} (\psi^+(x) - \psi^-(x))f(x) - \varepsilon \sum_{x \in \pmone^n} (\psi^+(x) + \psi^-(x))  \\
        \text{s.t.} \hspace*{1em} & \ang{\psi^+ - \psi^-, \chi_S} = 0 \hspace*{8em} \forall S \subseteq [n], |S| \leq d \\
        & \sum_{x : f(x) = -1} \psi^+(x) \leq 1 \\
        &\psi^-, \psi^+ \geq 0
    \end{align*}
    By a standard transformation (which appears, e.g., in the derivation of the dual formulation of approximate degree), any optimal solution to the dual program can be turned into one where $\psi^+$ and $\psi^-$ have disjoint support.
    Thus, we can set $\psi = \psi^+ - \psi^-$ to get the following equivalent dual linear program.
    \begin{align*}
        \max_{\psi : \pmone^n \to \R} \hspace*{1em} & \ang{\psi, f} - \varepsilon \norm{\psi}_1  \\
        \text{s.t.} \hspace*{1em} & \ang{\psi, \chi_S} = 0  &  \forall S \subseteq [n], |S| \leq d \\
        & \sum_{x : f(x) = -1, \psi(x) > 0} \psi(x) \leq 1
    \end{align*}
    By weak duality, to show that $f$ does not admit an $(M^*, \eps)$-SBQP-approximation it is sufficient to show that the dual program has objective at least $M^*$.
    Suppose we have a $\varphi$ as in the theorem statement.
    Consider $\psi = M^* \varphi / \eps$.
    This $\psi$ satisfies the first constraint of the dual since $\varphi$ has pure high degree at least $d$.
    It satisfies the second constraint since $\sum_{x : f(x) = -1, \psi(x) > 0} \psi(x) = \frac{M^*}{\eps} \sum_{x : f(x) = -1, \varphi(x) > 0} \varphi(x) \leq 1$.
    Finally, the objective attained by $\psi$ is $\frac{M^*}{\eps}\ang{\varphi, f} - \eps \cdot \frac{M^*}{\eps} \norm{\varphi}_1 \geq 2M^* - M^* = M^*$.
\end{proof}

Functions that satisfy the conditions in Lemma~\ref{l:sbqp-dual} are central to our proofs.
We call a $\varphi$ that has those properties with respect to a function $f$ and parameters $M$, $\eps$, and $d$ a \emph{relaxed $(M, \eps, d)$-SBQP dual to $f$}.
The word ``relaxed'' refers to the fact that while the existence of $\varphi$ is sufficient for a lower bound on the degree of an $(M, \eps)$-SBQP approximation, it might not be necessary.
This is because the proof of Lemma~\ref{l:sbqp-dual} loses tightness when it replaces $\psi$, which is obtained via LP duality, by the weaker (but easier to reason about) object $\varphi$.
Just as SBQP approximations are natural restrictions of one-sided approximations (allowing large but bounded error on one side), property~4 of the dual is a natural relaxation of the one-sidedness property of $\degminus$ duals described in Theorem~\ref{t:apdeg-dual}.
We use relaxed SBQP duals to prove that if a function $f$ has high  approximate degree, then all SBQP approximations to $\AND_m \circ f^m$ must have high degree.
To do so, we use the following well-known fact about the vanishing-error one-sided approximate degree of the $\AND$ function.

\begin{theorem}[\cite{buh-etal:small-error}] \label{t:and-apdeg}
    There exist universal constants $\delta_\AND$ and $\delta'_\AND$ such that for every $\gamma \in [2^{-m}, 1/3]$, $\delta_\AND \sqrt{m \log(1/\gamma)} < \degminus_\gamma(\AND_m) \leq \delta'_\AND \sqrt{m \log(1/\gamma)}$.
\end{theorem}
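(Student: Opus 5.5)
The statement is a cited result of Buhrman et al., but I would reprove both bounds by reducing to univariate polynomials. By Minsky--Papert symmetrization, a degree-$d$ polynomial $p$ that $\gamma$-one-sided approximates $\AND_m$ can be averaged over $S_m$ without losing the approximation property. This yields a univariate $q$ of degree at most $d$ with
\begin{itemize}
\item[(a)] $q(t)\in[1-\gamma,1+\gamma]$ for $t\in\{0,\dots,m-1\}$, and
\item[(b)] $q(m)\le -1+\gamma$.
\end{itemize}
Here $t=|x|$, and the unique true input of $\AND_m$ is the all-$(-1)$ string. Conversely, any such $q$ gives $p(x)=q(|x|)$ of the same degree. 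So everything reduces to finding the least degree of a univariate $q$ with (a) and (b).

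\textbf{Lower bound.} Set $r=(q-1)/\gamma$. Then $|r|\le 1$ on the integers $0,\dots,m-1$, while $|r(m)|\ge (2-\gamma)/\gamma\ge 1/\gamma$.
\begin{itemize}
\item First, apply the Coppersmith--Rivlin inequality. A degree-$d$ polynomial bounded by $1$ at the integers $0,\dots,m-1$ is bounded by $\exp(O(d^2/m))$ on the whole real interval $[0,m-1]$.
\item Second, extrapolate one step outside the interval. By the Chebyshev extremal property, a degree-$d$ polynomial bounded by $B$ on $[0,m-1]$ has absolute value at most $B\cdot T_d(1+2/(m-1))\le B\exp(O(d/\sqrt m))$ at $t=m$.
\end{itemize}
Combining these gives $1/\gamma\le \exp(O(d^2/m+d/\sqrt m))$. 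Since $\gamma\ge 2^{-m}$ and trivially $d\le m$, the $d^2/m$ term dominates up to constants whenever $\log(1/\gamma)\ge 1$. Hence $d=\Omega(\sqrt{m\log(1/\gamma)})$, which gives the constant $\delta_\AND$.

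\textbf{Upper bound.} A plain Chebyshev polynomial $q(t)=1-\gamma T_d(s(t))$, with $s$ mapping $[0,m-1]$ onto $[-1,1]$, only achieves $d\approx\sqrt m\log(1/\gamma)$. To reach $\sqrt{m\log(1/\gamma)}$ I would use the two-part construction of Buhrman et al.

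Let $k\approx\log(1/\gamma)$.
\begin{itemize}
\item Take a Chebyshev polynomial $T_d(s(t))$ with $s$ mapping $[0,m-k]$ onto $[-1,1]$. At $t=m$ we are outside that interval by a relative amount $\approx k/m$, so $T_d(s(m))\approx\exp(\Theta(d\sqrt{k/m}))$. This is already $\ge 2^{\Theta(k)}$ once $d\approx\sqrt{mk}$.
\item Multiply by $\prod_{j=m-k+1}^{m-1}(t-j)/(m-j)$, a polynomial of degree $k-1\le d$. It vanishes on the last $k-1$ false weights and equals $1$ at $t=m$.
\item Set $q=1-c\,\gamma\cdot(\text{this product})$, choosing the scaling so that $q(m)\le-1+\gamma$.
\end{itemize}

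\textbf{Main obstacle.} The delicate step is controlling the product factor on $t\le m-k$. There it can be as large as about $\binom{m}{k}\approx (m/k)^k$, which eats into the $\gamma$ error budget. I would handle this by taking $k$ a suitable constant fraction of $\log(1/\gamma)$ and absorbing the resulting $k\log(m/k)$ loss into the Chebyshev growth. If that loss is not absorbable, I would fall back on the LP-duality characterization in Theorem~\ref{t:apdeg-dual} and follow the original argument of Buhrman et al.
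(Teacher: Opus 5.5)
\textbf{Upper bound: the construction falls short.} Your lower bound is fine, but the upper-bound construction has a genuine quantitative gap, and the repair you propose does not close it. Because $s$ maps $[0,m-k]$ onto $[-1,1]$, you have $|T_d(s(0))|=1$. Meanwhile the Lagrange factor at $t=0$ has absolute value exactly $\prod_{i=1}^{k-1}\frac{m-i}{i}=\binom{m-1}{k-1}$. So condition (a) at $t=0$ forces $c\le\binom{m-1}{k-1}^{-1}$, and then (b) forces
\[
T_d(s(m))\;\ge\;\frac{2-\gamma}{\gamma}\binom{m-1}{k-1}.
\]
Since $s(m)=1+\frac{2k}{m-k}$ and $\mathrm{arccosh}(1+\delta)\le\sqrt{2\delta}$, you have $T_d(s(m))\le\exp\bigl(2d\sqrt{k/(m-k)}\bigr)$. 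Hence you need $d\gtrsim\sqrt{m/k}\,\bigl(k\log(m/k)+\log(1/\gamma)\bigr)$.

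The loss $\log\binom{m-1}{k-1}\approx k\log(m/k)$ is superlinear in $k$, whereas the Chebyshev growth at $d\approx\sqrt{mk}$ is only $2^{\Theta(k)}$. Taking $k$ to be a constant fraction of $\log(1/\gamma)$ leaves the $\log(m/k)$ factor untouched. Taking $k=O(1)$ returns you to the plain-Chebyshev bound $\sqrt m\log(1/\gamma)$. Decoupling the Chebyshev offset from the number of forced zeros helps only partially: the constraints at $t=0$ and at the last unforced false weight (where the Lagrange factor has absolute value $1$) still leave a factor of roughly $\sqrt{\log(m/\log(1/\gamma))}$. So for, e.g., $\log(1/\gamma)=\sqrt m$, your polynomial misses $O(\sqrt{m\log(1/\gamma)})$ by a growing factor, and your fallback amounts to citing Buhrman et al.

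\textbf{The missing observation, and comparison with the paper.} No construction is needed for the upper bound. Every two-sided $\gamma$-approximation of $\AND_m$ is in particular a $\gamma$-one-sided approximation, so $\degminus_\gamma(\AND_m)\le\apdeg_\gamma(\AND_m)=O(\sqrt{m\log(1/\gamma)})$ by the cited two-sided result. This is how the paper treats the statement. It cites $\apdeg_\gamma(\AND_m)=\Theta(\sqrt{m\log(1/\gamma)})$ and transfers only the lower bound. A dual witness $\psi$ for $\apdeg_\gamma(\AND_m)>d\ge1$ is balanced and satisfies $\ang{\psi,\AND_m}\ge\gamma$. Since $-\one_m$ is the only true input, this forces $\psi(-\one_m)\le-\gamma/2<0$, so $\psi$ already witnesses $\degminus_\gamma(\AND_m)>d$.

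Your lower-bound argument (symmetrize, apply Coppersmith--Rivlin on $\{0,\dots,m-1\}$, extrapolate via Chebyshev to $t=m$) is a valid alternative. It proves the one-sided bound directly, because it only uses $|r(m)|\ge(2-\gamma)/\gamma$ at the single true weight. It is self-contained modulo Coppersmith--Rivlin, whereas the paper's route is a short duality remark on top of the citation. Two small repairs are needed there:
\begin{itemize}
\item Coppersmith--Rivlin is usually stated as $a\,e^{bd^2/m}$ with a constant $a>1$. For $\gamma$ near $1/3$ you should therefore invoke the constant-error $\Omega(\sqrt m)$ bound separately.
\item ``The $d^2/m$ term dominates'' should be a case split. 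For $d<\sqrt m$ you actually get the stronger bound $d=\Omega(\sqrt m\log(1/\gamma))$.
\end{itemize}
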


Strictly speaking, \cite{buh-etal:small-error} only implies $\apdeg_\gamma(\AND_m) = \Theta(\sqrt{m \log(1/\gamma)})$, but it is easy to see that one-sidedness is without loss of generality for the $\AND$ function. (This observation goes back to Gavinsky and Sherstov~\cite{gav-she:np-conp}.)
Suppose $\apdeg_\gamma(\AND_m) > d \geq 1$, and let $\psi$ be a dual witness to that lower bound guaranteed by Theorem~\ref{t:apdeg-dual}.
Since $\psi$ has pure high degree at least 1, we get that $0 = \ang{\psi, 1} = \psi(-\one_m) + \sum_{z \neq -\one_m} \psi(z)$.
That $\psi$ has correlation at least $\gamma$ with $\AND_m$ combined with the fact that $-\one_m$ is the only $-1$ instance of $\AND_m$ gives us $\gamma \leq \ang{\psi, \AND_m} = -\psi(-\one_m) + \sum_{z \neq -\one_m} \psi(z)$.
Combining the two equations yields $\psi(-\one_m) \leq -\gamma/2 < 0$, which means $\psi$ is a witness for $\degminus_\gamma(\AND_m) > d$.
The other ingredient in our construction is the following polynomial $p_k$.

\begin{lemma}[{\cite{she:direct-product}}] \label{l:sherstov-polynomial}
    Let $k$ be an even number in $[m-1]$, and define $p_k : [-1,1]^m \to \R$ as the unique degree-$k$ multilinear polynomial such that for all $z \in \pmone^m$,
    \begin{equation*}
        p_k(z) = \prod_{i = 1}^k (|z| - i).
    \end{equation*}
    Then $p_k(z) \geq 0$ for all $z \in [-1,1]^m$ and for every $\eta \in [0,1)$,
    \begin{equation*}
        \E_{z \sim \Pi(\eta)}[p_k(z)] \leq p_k(\one_m) (1 - \eta)^m \left\{ 1 + \frac{\eta^{k+1}}{(1-\eta)^m} {m \choose k + 1} \right\},
    \end{equation*}
    where $\Pi(\eta)$ is the distribution on $\pmone^m$ where each coordinate is $-1$ with probability $\eta$ and $1$ with probability $1 - \eta$.
\end{lemma}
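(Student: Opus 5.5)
The plan is to treat $p_k$ through its values on Boolean points and reduce both claims to a single binomial sum. First, $|z| = \sum_i (1-z_i)/2$ is affine in $z$, so $\prod_{i=1}^k(|z|-i)$ is a polynomial of degree $k$. Reducing it modulo $z_i^2 = 1$ gives the unique multilinear polynomial agreeing with it on $\pmone^m$, of degree at most $k$, which settles existence. The key structural fact is the standard one for multilinear polynomials: for $z \in [-1,1]^m$,
\begin{equation*}
  p_k(z) = \E_{Z}[p_k(Z)],
\end{equation*}
where $Z \in \pmone^m$ has independent coordinates with $\Pr[Z_i = -1] = (1-z_i)/2$. This holds by multilinearity, since each monomial factorizes over independent coordinates and $\E[Z_i] = z_i$.

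\textbf{Nonnegativity.} By the identity above, it suffices to check that $g(t) := \prod_{i=1}^k (t-i) \ge 0$ for every integer $t \in \{0,\dots,m\}$.
\begin{itemize}
  \item At $t=0$ we get $g(0) = (-1)^k k! = k! > 0$, because $k$ is even.
  \item For $1 \le t \le k$, one factor vanishes, so $g(t)=0$.
  \item For $t \ge k+1$, all factors are positive.
\end{itemize}
A convex combination of nonnegative values is nonnegative, so $p_k(z) \ge 0$ on $[-1,1]^m$. Note also that $p_k(\one_m) = g(0) = k!$.

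\textbf{Expectation bound.} Under $\Pi(\eta)$, $|z|$ is distributed as $\mathrm{Bin}(m,\eta)$, so
\begin{equation*}
  \E_{z \sim \Pi(\eta)}[p_k(z)] = \sum_{t=0}^m \binom{m}{t} \eta^t (1-\eta)^{m-t} g(t).
\end{equation*}
The $t=0$ term equals $k!(1-\eta)^m = p_k(\one_m)(1-\eta)^m$, and the terms with $1\le t\le k$ vanish. So it remains to show that the tail over $t \ge k+1$ is at most $k!\,\eta^{k+1}\binom{m}{k+1}$. Expanding $p_k(\one_m)(1-\eta)^m\{\cdot\}$ then gives exactly the claimed bound.

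For the tail, I would proceed as follows.
\begin{enumerate}
  \item For $t\ge k+1$, write $g(t) = (t-1)!/(t-k-1)! = k!\binom{t-1}{k}$.
  \item Use $\binom{t-1}{k} = \frac{k+1}{t}\binom{t}{k+1} \le \binom{t}{k+1}$, valid since $t \ge k+1$.
  \item Apply the subset-of-a-subset identity $\binom{m}{t}\binom{t}{k+1} = \binom{m}{k+1}\binom{m-k-1}{t-k-1}$.
\end{enumerate}
This bounds the tail by
\begin{equation*}
  k!\binom{m}{k+1}\eta^{k+1}\sum_{s=0}^{m-k-1}\binom{m-k-1}{s}\eta^{s}(1-\eta)^{m-k-1-s} = k!\binom{m}{k+1}\eta^{k+1},
\end{equation*}
by the binomial theorem; here $k\le m-1$ ensures the range is nonempty.

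The only step requiring care is this tail estimate: choosing the comparison $\binom{t-1}{k}\le\binom{t}{k+1}$ so the sum collapses into a full binomial expansion. Everything else is bookkeeping.
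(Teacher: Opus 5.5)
Your proof is correct and complete. The paper does not prove this lemma; it quotes it from Sherstov's direct product paper. So there is no in-paper argument to compare with, and what you have written is a valid self-contained derivation.

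Both of your steps are sound.
\begin{itemize}
\item Writing the multilinear extension as $p_k(z)=\E_Z[p_k(Z)]$, over the product distribution with $\E[Z_i]=z_i$, reduces nonnegativity on $[-1,1]^m$ to the values $g(t)$ for $t\in\{0,\dots,m\}$. Evenness of $k$ matters only at $t=0$, which also gives $p_k(\one_m)=k!$.
\item Your tail estimate is the binomial factorial-moment identity $\E\bigl[\binom{X}{k+1}\bigr]=\binom{m}{k+1}\eta^{k+1}$ for $X\sim\mathrm{Bin}(m,\eta)$. You apply it after the pointwise bound $g(t)=k!\binom{t-1}{k}=\frac{k+1}{t}\,k!\binom{t}{k+1}\le k!\binom{t}{k+1}$ for $t\ge k+1$.
\end{itemize}
The only slack in the argument is the factor $(k+1)/t\le 1$, which equals $1$ on the leading $t=k+1$ term.
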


We now prove our main connection between approximate degree and SBQP degree.
Our proof works by way of constructing a relaxed SBQP dual for $\AND_m \circ f^m$ from (1) a dual witness for the vanishing-error one-sided approximate degree of $\AND$, (2) a dual for the assumed approximate degree lower bound on $f$, and (3) the polynomials $p_k$.
The construction itself is a renormalized special case of the dual constructed in \cite[Theorem~6.1]{she:direct-product}.
Consequently, the correlation, $\ell_1$ norm, and pure high degree analyses for the dual follow from the corresponding analyses in that work.
Additionally, to get an SBQP dual, we need to show that the dual construction does not place too much mass on ``true'' inputs that it misclassifies.
We do so by taking advantage of the fact that the dual for $\AND$ is one-sided and keeping tighter track of how errors on misclassified ``true'' inputs accumulate.

\begin{theorem} \label{t:sbqp-lb}
    Let $n$ be a function of $m$, and let $\{f : \pmone^{n(m)} \to \pmone\}_{m \in \N}$ be a family of functions.
    Let $\gamma \in [2^{-m}, 1/3]$, $\eps > 0$, $M > 0$, and $k \in \N$ be such that $k$ is even and
    \begin{gather}
        \varepsilon \le \frac{\gamma}{2} - \frac{\eta^{k+1}}{(1-\eta)^m} \binom{m}{k+1} \text{ and} \label{eq:assm-eps-vs-gamma} \\
        \frac{\varepsilon}{M} \ge \frac{\eta^{k+1}}{2(1-\eta)^m} \binom{m}{k+1}. \label{eq:assm-eps-over-m}
    \end{gather}
    If $\apdeg_{1 - \eta}(f) > d \geq 1$ then
    for sufficiently large $m$, every $(M, \eps)$-SBQP approximation to $F = \AND_m \circ f^m$ has degree $\Omega(d \cdot  (\delta_\AND \sqrt{m\log(1/\gamma)}-k))$.
\end{theorem}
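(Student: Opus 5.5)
We will build a relaxed $(M,\eps,D)$-SBQP dual for $F=\AND_m\circ f^m$ with $D=\Omega(d\cdot(\delta_\AND\sqrt{m\log(1/\gamma)}-k))$ and conclude via Lemma~\ref{l:sbqp-dual}. The ingredients are three. First, set $D'=\lfloor\delta_\AND\sqrt{m\log(1/\gamma)}\rfloor$; by Theorem~\ref{t:and-apdeg} and the remark after it there is a dual $\psi$ for $\degminus_\gamma(\AND_m)>D'$, so $\norm{\psi}_1=1$, $\ang{\psi,\AND_m}\ge\gamma$, $\psi$ has pure high degree $D'$, and $\psi(-\one_m)\le 0$ (the only true input). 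Second, Theorem~\ref{t:apdeg-dual} gives $\varphi$ with $\norm{\varphi}_1=1$, $\ang{\varphi,f}\ge 1-\eta$, and pure high degree $d$. Third, we use $p_k$ from Lemma~\ref{l:sherstov-polynomial}. Following \cite[Theorem~6.1]{she:direct-product}, write $\mu=|\varphi|$, $\mu_{\pm}$ for the normalized restrictions of $\mu$ to $\{f=\pm1\}$, and define
\[
\Phi(x_1,\dots,x_m)=C\sum_{z\in\pmone^m}\psi(z)\,\frac{p_k(\ldots)}{\ldots}\prod_i \mu_{z_i}(x_i),
\]
or, more concretely, Sherstov's form $\Phi(x)=2^m\,\psi(\ldots)\prod_i|\varphi(x_i)|$ corrected by subtracting a degree-$k$ term in the "error pattern" $\sgn\varphi(x_i)\ne f(x_i)$ weighted by $p_k$. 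The point of $p_k$ is that it vanishes whenever the number of misclassified blocks lies in $\{1,\dots,k\}$, so only inputs with $0$ or $>k$ errors contribute. We will take the renormalized special case exactly as there, and cite that paper's analysis for three properties: $\norm{\Phi}_1\le1$; pure high degree at least $d(D'-k)$ (each block contributes degree $d$ per character of $\psi$, and the $p_k$ correction lowers $D'$ by at most $k$); and correlation $\ang{\Phi,F}\ge\gamma-\frac{\eta^{k+1}}{(1-\eta)^m}\binom{m}{k+1}\ge 2\eps$, where the last step is \eqref{eq:assm-eps-vs-gamma}. The error term comes from bounding $\E_{z\sim\Pi(\eta)}p_k(z)$ via Lemma~\ref{l:sherstov-polynomial}, with the per-block misclassification probability at most $\eta/2\le\eta$.

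The new step is property~4 of Lemma~\ref{l:sbqp-dual}: the positive mass of $\Phi$ on $F^{-1}(-1)$ must be at most $\eps/M$. An input $x$ has $F(x)=-1$ exactly when every block satisfies $f(x_i)=-1$. Write $\Phi(x)$ as a $\psi$-value at the sign pattern $z=(\sgn\varphi(x_i))_i$ times product weights. When all blocks are correctly classified we get $z=-\one_m$, and then $\psi(-\one_m)\le0$ makes $\Phi(x)\le0$, so these inputs contribute nothing positive. The positive mass on true inputs therefore comes only from inputs with at least one misclassified block, and because of the $p_k$ factor, effectively from inputs with more than $k$ misclassified blocks. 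That mass is bounded by the probability that a product measure with per-coordinate error rate $\le\eta$ (conditioned on true blocks) has $\ge k+1$ errors, times the normalization. This gives the bound $\frac{\eta^{k+1}}{2(1-\eta)^m}\binom{m}{k+1}$, and \eqref{eq:assm-eps-over-m} turns it into $\le\eps/M$.

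I expect the main obstacle to be this one-sided error accounting. It requires tracking carefully how Sherstov's construction splits into the "main" term, which is supported on patterns where $\psi$ is evaluated at $-\one_m$ on all-true inputs, and the correction term. It also requires checking that the factor $1/2$ and the normalization constant $p_k(\one_m)(1-\eta)^m$ line up with \eqref{eq:assm-eps-over-m}. The plan is to redo Sherstov's decomposition restricted to $F^{-1}(-1)$, using $\psi(-\one_m)\le0$ to discard the zero-error term, and to bound the remainder with Lemma~\ref{l:sherstov-polynomial}. Combining all four properties with Lemma~\ref{l:sbqp-dual} then yields the stated degree bound for sufficiently large $m$; the condition $\gamma\ge2^{-m}$ is needed to invoke Theorem~\ref{t:and-apdeg}.
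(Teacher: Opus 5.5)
Your skeleton matches the paper's. The paper uses the renormalized Sherstov dual $\Phi(x)=\frac{2^m}{\beta}\,\psi(\ldots,\sgn\varphi(x_i),\ldots)\,p_k(\alpha^m(x))\prod_i|\varphi(x_i)|$ with $\beta=p_k(1-2\eta,\ldots,1-2\eta)$. It cites \cite{she:direct-product} for norm, correlation and pure high degree, and uses $\psi(-\one_m)\le 0$ for property~4.

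The gap is in property~4, the one step that is genuinely new. You argue that $p_k$ vanishes when $1$ to $k$ blocks are misclassified, so only inputs with at least $k+1$ errors contribute. That is false for this construction, because $p_k$ is evaluated at $\alpha^m(x)$, not at a Boolean error pattern. On correctly classified blocks $\alpha$ equals $\frac{1-2\eta+\eta^{\pm}}{1-\eta^{\pm}}$, which is in general strictly below $1$. These values are forced by requiring $\E_{\mu_z}[\alpha(x_i)]=1-2\eta$ for both signs. That requirement is what makes $\E_{x\sim\mu_z}[p_k(\alpha^m(x))]=\beta$ for every $z$, and it is already needed for $\norm{\Phi}_1=1$. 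So $p_k(\alpha^m(x))$ can be positive on true inputs with between $1$ and $k$ misclassified blocks, and these inputs carry positive mass whenever $\psi(z)>0$.

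Separately, ``probability of $\ge k+1$ errors times the normalization'' drops the $p_k$ weights. These reach order $m^k$, while $p_k(\one_m)=k!$. Your alternative description of $\Phi$ as an additive correction would also break your sign argument. It is the nonnegativity of the multiplicative factor $p_k$ on $[-1,1]^m$ that gives $\sgn\Phi(x)=\sgn\psi(z)$.

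The missing idea is to bound all misclassified inputs at once through the complementary event. Write the positive mass on $F^{-1}(-1)$ as $\frac1\beta\sum_{z:\psi(z)>0}\psi(z)\,\E_{x\sim\mu_z}[p_k(\alpha^m(x))\mathbbm{1}[F(x)=-1]]$. One-sidedness gives $z\neq-\one_m$, so $F(x)=-1$ forces $f^m(x)\neq z$. Hence
\[
\E_{x\sim\mu_z}\big[p_k(\alpha^m(x))\mathbbm{1}[F(x)=-1]\big]\le\beta-\E_{x\sim\mu_z}\big[p_k(\alpha^m(x))\mathbbm{1}[f^m(x)=z]\big]\le\beta-(1-\eta)^m p_k(\one_m),
\]
where the last step is \cite[Eq.~6.8]{she:direct-product}.

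Only now does Lemma~\ref{l:sherstov-polynomial} apply, to $\beta=\E_{z\sim\Pi(\eta)}[p_k(z)]$. It gives $1-(1-\eta)^m p_k(\one_m)/\beta\le\frac{\eta^{k+1}}{(1-\eta)^m}\binom{m}{k+1}$. The factor $\frac12$ is $\sum_{z:\psi(z)>0}\psi(z)$, which follows from balancedness and $\norm{\psi}_1=1$. The vanishing of $p_k$ on Hamming weights $1,\ldots,k$ is used only inside the lemma, in $\Pi(\eta)$-space, never in $x$-space.

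A minor point: the correlation bound the paper takes from Sherstov's Claim~6.3 is $\gamma-\frac{2\eta^{k+1}}{(1-\eta)^m}\binom{m}{k+1}$, not the stronger bound you state. \eqref{eq:assm-eps-vs-gamma} is calibrated exactly to that weaker bound.
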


\begin{proof}
    Fix $n$, $\eta \in (0,1]$, $f : \pmone^n \to \pmone$, $d \geq 1$, and suppose $\apdeg_{1 - \eta}(f) > d$.
    Suppose $\gamma$, $\eps$, $M$, and $k$ are such that equations \ref{eq:assm-eps-vs-gamma} and \ref{eq:assm-eps-over-m} hold.
    By Lemma~\ref{l:sbqp-dual}, it suffices to construct a relaxed $(M, \eps, \Omega(d \cdot (\delta_\AND \sqrt{m \log(1/\gamma)} - k)) )$-SBQP dual $\Phi$ to $F$.

    By Theorems~\ref{t:apdeg-dual} and \ref{t:and-apdeg} there is a dual witness $\psi : \pmone^m \to \R$ for $\degminus_\gamma(\AND_m) \geq \delta_\AND \sqrt{m \log(1/\gamma)}$.
    Similarly, there is a dual $\varphi : \pmone^n \to \R$ witnessing $\apdeg_{1 - \eta}(f)$.
    Let $\widetilde{\mu}$ be the distribution on $\pmone^n$ with probability mass function $\widetilde{\mu}(y) = |\varphi(y)|$, and let $\mu$ be a distribution consisting of $m$ independent samples from $\widetilde{\mu}$.
    For every $z \in \pmone^m$ define $\mu_z$ to be $\mu$ conditioned on $(\ldots, \sgn \varphi(x_i), \ldots) = z$.

    Note that $\E_{y \sim \widetilde{\mu}}[\sgn \varphi(y)] = \sum_{y \in \pmone^n} \abs{\varphi(y)} \sgn \varphi(y) = \ang{\varphi, \chi_\emptyset}$. Since $\varphi$ has pure high degree $d \geq 1$, $\ang{\varphi, \chi_\emptyset} = 0$ and so $\Pr_{y \sim \widetilde{\mu}}[\varphi(y) > 0] = \Pr_{y \sim \widetilde{\mu}}[\varphi(y) < 0]$.
    In other words, if $y \sim \widetilde{\mu}$, then $\sgn \varphi(y)$ is a uniformly random bit.
    Consequently, if $x \sim \mu$, then $(\ldots, \sgn \varphi(x_i), \ldots)$ is uniformly distributed over $\pmone^m$.

    Since $\varphi$ is a dual for $\apdeg_{1-\eta}(f) > d$, $\ang{\varphi, f} \geq 1 - \eta$. Thus
    $1 - \eta \leq \sum_{y \in \pmone^n} f(y) \varphi(y) = \E_{y \sim \widetilde{\mu}}[f(y) \sgn \varphi(y)]$.
    Notice that for all $y$, $\frac{1 - f(y) \sgn \varphi(y)}{2}$ is equal to 1 if $\sgn \varphi(y) \neq f(y)$ and 0 otherwise. Thus
    \begin{equation*}
        \Pr_{y \sim \widetilde{\mu}}[\sgn \varphi(y) \neq f(y)]
        = \E_{y \sim \widetilde{\mu}} \left[ \frac{1 - f(y) \sgn \varphi(y)}{2} \right]
        \leq \frac{\eta}{2}.
    \end{equation*}
    Now define
    \begin{align*}
        \eta^{+} &= \Pr_{y \sim \widetilde{\mu}}[f(y) \neq \sgn \varphi(y) \mid \varphi(y) > 0], \\
        \eta^{-} &= \Pr_{y \sim \widetilde{\mu}}[f(y) \neq \sgn \varphi(y) \mid \varphi(y) < 0],
    \end{align*}
    and $\alpha : \pmone^n \to [-1, 1]$ as
    \begin{equation*}
        \alpha(y) = \begin{cases}
            \frac{1 - 2\eta + \eta^{-}}{1 - \eta^{-}} & \text{if } f(y) = \sgn \varphi(y) = -1 \\
            \frac{1 - 2\eta + \eta^{+}}{1 - \eta^{+}} & \text{if } f(y) = \sgn \varphi(y) = 1 \\
            -1 & \text{otherwise.}
        \end{cases}
    \end{equation*}
    The range of $\alpha$ is $[-1, 1]$ because $\max\{\eta^{+}, \eta^{-}\} \leq \eta$, which holds since $1 - \eta \leq 2 \Pr_{y \sim \widetilde{\mu}}[\sgn \varphi(y) \neq f(y)] = \frac{1}{2}(1-2\eta^{+}) + \frac{1}{2}(1-2\eta^{-})$.
    For $x \in (\pmone^n)^m$, we write $\alpha^m(x)$ to mean $(\ldots, \alpha(x_i), \ldots)$.
    Note that for $z \in \pmone^m$ and $i \in [m]$, $\E_{x \sim \mu_z}[\alpha(x_i)] = 1 - 2\eta$.

    We are now ready to present the dual.
    Let $p_k : [-1,1]^m \to [0,\infty)$ be the degree-$k$ multilinear polynomial given by Lemma~\ref{l:sherstov-polynomial}.
    Our relaxed SBQP dual $\Phi : (\pmone^n)^m \to \R$ is defined as
    \begin{equation*}
        \Phi(x_1, \ldots, x_m) = \frac{2^m}{p_k(\ldots, 1 - 2\eta, \ldots)} \cdot  \psi(\ldots, \sgn \varphi(x_i), \ldots) \cdot p_k(\alpha^m(x)) \prod_{i=1}^m \abs{\varphi(x_i)}.
    \end{equation*}

    As we now describe, the required correlation, $\ell_1$ norm, and pure high degree properties follow from the corresponding claims in \cite{she:direct-product}.
    First, \cite[Claim~6.3]{she:direct-product} tells us that
    \begin{align*}
        \ang{F, \Phi} &\geq \gamma - \frac{2 \eta^{k+1}}{(1-\eta)^m} {m \choose k + 1} \geq 2\eps,
    \end{align*}
    where the last inequality applies Eq.~\ref{eq:assm-eps-vs-gamma}.
    That $\norm{\Phi}_1 = 1$ follows from \cite[Claim~6.2]{she:direct-product}. Finally, \cite[Eq.~6.7]{she:direct-product} and the fact that $\psi$ has pure high degree at least $\delta_\AND \sqrt{m \log(1/\gamma)}$ implies that $\Phi$ has pure high degree $\Omega(d \cdot (\delta_\AND \sqrt{m \log(1/\gamma)} - k))$.

    It remains to show that $\Phi$ does not place too much mass on $-1$ instances that it ``misclassifies.''
    Notice that the only term in the definition of $\Phi$ that is not non-negative is $\psi(\ldots, \sgn \varphi(x_i), \ldots)$.
    For convenience, let $\beta = p_k(\ldots, 1-2\eta, \ldots)$.
    We can write the error on $-1$ instances as
    \begin{align*}
        &\sum_{x : F(x) = -1, \Phi(x) > 0} \Phi(x) = \sum_{x \in (\pmone^n)^m} \Phi(x) \cdot \mathbbm{1}[F(x) = -1 \land \Phi(x) > 0] \\
        &= \frac{2^m}{\beta} \sum_{x \in (\pmone^n)^m} \psi(\ldots, \sgn \varphi(x_i), \ldots) \cdot p_k(\alpha^m(x_i)) \cdot \mathbbm{1}[F(x) = -1 \land \Phi(x) > 0] \prod_{i=1}^m \abs{\varphi(x_i)} \\
        &= \frac{2^m}{\beta} \E_{x \sim \mu} \big[ \psi(\ldots, \sgn \varphi(x_i), \ldots) \cdot p_k(\alpha^m(x)) \cdot \mathbbm{1}[F(x) = -1 \land \Phi(x) > 0] \big] \\
        &= \frac{1}{\beta} \sum_{z \in \pmone^m} \E_{x \sim \mu_z} \big[ \psi(z) \cdot p_k(\alpha^m(x)) \cdot \mathbbm{1}[F(x) = -1 \land \psi(z) > 0] \big] \\
        &= \frac{1}{\beta} \sum_{z : \psi(z) > 0} \psi(z) \E_{x \sim \mu_z} \big[ p_k(\alpha^m(x)) \cdot \mathbbm{1}[F(x) = -1] \big].
        \numberthis \label{eq:onesided-error}
    \end{align*}
    The second-to-last equality holds since $\Pr_{x \sim \mu}[(\ldots, \sgn \varphi(x_i), \ldots) = z] = 2^{-m}$.
    Fix $z$.
    By multilinearity of $p_k$, $\E_{x \sim \mu_z}[p_k(\alpha^m(x))] = \beta$. Thus,
    \begin{align*}
        &\abs{\E_{x \sim \mu_z} \big[ p_k(\alpha^m(x)) \cdot \mathbbm{1}[F(x) = -1] \big] - \beta \cdot \mathbbm{1}[\AND_m(z) = -1]} \\
        &\hspace*{4em}= \abs{ \E_{x \sim \mu_z} \big[ p_k(\alpha^m(x)) \cdot \mathbbm{1}[F(x) = -1] - p_k(\alpha^m(x)) \cdot \mathbbm{1}[\AND_m(z) = -1] \big] } \\
        &\hspace*{4em}\leq \E_{x \sim \mu_z} \big[ \abs{p_k(\alpha^m(x))} \cdot \big\lvert \mathbbm{1}[F(x) = -1] - \mathbbm{1}[\AND_m(z) = -1] \big\rvert \big].
    \end{align*}
    Recall that $F(x) = \AND_m(\ldots, f(x_i), \ldots)$. So $\abs{\mathbbm{1}[F(x) = -1] - \mathbbm{1}[\AND_m(z) = -1]}$ is 1 when exactly one of $(\ldots, f(x_i), \ldots)$ and $z$ is equal to $-\one_m$ and 0 otherwise.
    By the non-negativity of $p_k$ on inputs in $[-1,1]^m$, it follows that
    \begin{align*}
        \Big\lvert \E_{x \sim \mu_z} \big[ p_k(\alpha^m(x)) \cdot &\mathbbm{1}[F(x) = -1] \big] - \beta \cdot \mathbbm{1}[\AND_m(z) = -1] \Big\rvert \\
        &\leq \E_{x \sim \mu_z} \big[ p_k(\alpha^m(x)) \cdot \mathbbm{1}[(\ldots, f(x_i), \ldots) \neq z] \big] \\
        &= \E_{x \sim \mu_z} \big[ p_k(\alpha^m(x)) \cdot \big( 1 - \mathbbm{1}[(\ldots, f(x_i), \ldots) = z] \big) \big] \\
        &= \beta - \E_{x \sim \mu_z} \big[ p_k(\alpha^m(x)) \cdot \mathbbm{1}[(\ldots, f(x_i), \ldots) = z] \big].
    \end{align*}
    Now \cite[Eq.~6.8]{she:direct-product} shows that \[\E_{x \sim \mu_z} \big[ p_k(\alpha^m(x)) \cdot \mathbbm{1}[(\ldots, f(x_i), \ldots) = z] \big] \geq (1 - \eta)^m p_k(\one_m),\] which means
    \begin{equation*}
        \abs{\E_{x \sim \mu_z} \big[ p_k(\alpha^m(x)) \cdot \mathbbm{1}[F(x) = -1] \big] - \beta \cdot \mathbbm{1}[\AND_m(z) = -1]} \leq \beta - (1 - \eta)^m p_k(\one_m).
    \end{equation*}
    Combining this bound with Eq.~\ref{eq:onesided-error}, we get
    \begin{align*}
        \sum_{x : F(x) = -1, \Phi(x) > 0} \Phi(x) &\leq \frac{1}{\beta} \sum_{z : \psi(z) > 0} \psi(z) \big( \beta \cdot \mathbbm{1}[\AND_m(z) = -1] + \beta - (1 - \eta)^m p_k(\one_m) \big) \\
        &= \left(\sum_{z : \psi(z) > 0} \psi(z) \right) \left(1 - \frac{(1 - \eta)^m p_k(\one_m)}{\beta} \right) \\
        &= \frac{1}{2} \left(1 - \frac{(1 - \eta)^m p_k(\one_m)}{\beta} \right).
    \end{align*}
    The second-to-last equality uses the one-sidedness of $\psi$: whenever $\AND_m(z) = -1$, $\psi(z) < 0$.
    The last equality uses the fact that $\psi$ is balanced.

    Now let $\Pi$ be a distribution on $\pmone^m$ where each coordinate is an independent, uniformly random element of $\pmone$.
    Then by multilinearity, $\E_{z \sim \Pi}[p_k(z)] = \beta$.
    By Lemma~\ref{l:sherstov-polynomial},
    \begin{align*}
        \frac{(1 - \eta)^m p_k(\one_m)}{\beta} &\geq \left( 1 + \frac{\eta^{k+1}}{(1-\eta)^m} {m \choose k + 1} \right)^{-1} \\
        &\geq 1 - \frac{\eta^{k+1}}{(1-\eta)^m} {m \choose k + 1}.
    \end{align*}
    Plugging this back in and using Eq.~\ref{eq:assm-eps-over-m},
    \begin{equation*}
        \sum_{x : F(x) = -1, \Phi(x) > 0} \Phi(x) \leq \frac{\eta^{k+1}}{2(1 - \eta)^m} {m \choose k + 1} \leq \frac{\eps}{M}. \qedhere
    \end{equation*}
\end{proof}

The ensuing corollaries apply Theorem~\ref{t:sbqp-lb} with specific parameters to obtain more ``user-friendly'' connections between the approximate degree of $f$ and the SBQP degree of $\AND_m \circ f^m$.

\begin{corollary} \label{c:large-error-to-sbqp}
    Let $n$ be a function of $m$, and let $\{f : \pmone^{n(m)} \to \pmone\}_{m \in \N}$ be a family of functions.
    Let $c > 0$ be a constant, $d \geq 1$, and $\eps \in [2^{-m-2}, 1/12]$ such that either $c \geq 1$ or $\eps = 2^{-\Omega(m)}$.
    If $\apdeg_{1 - \frac{1}{m^c}}(f) > d$, then there exists a constant $\tau > 0$ such that for sufficiently large $m$, every $(M, \eps)$-SBQP approximation to $F = \AND_m \circ f^m$ with $M \leq \eps 2^{\tau \sqrt{m \log(1/\eps)} \log m}$ must have degree $\Omega(d \sqrt{m \log(1/\eps)})$.
    The constant $\tau$ depends only on $c$ when $c \geq 1$, and on $c$ as well as the constant in the $\Omega$ when $c \in (0,1)$ but $\eps = 2^{-\Omega(m)}$.
\end{corollary}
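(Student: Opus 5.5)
The plan is to apply Theorem~\ref{t:sbqp-lb} directly, with $\eta = 1/m^c$, $\gamma = 4\eps$, and $k$ an even integer close to $\frac{\delta_\AND}{2}\sqrt{m\log(1/\gamma)}$, capped at $m-1$ if necessary. The range $\eps \in [2^{-m-2}, 1/12]$ is exactly what puts $\gamma$ in $[2^{-m},1/3]$, as Theorem~\ref{t:and-apdeg} requires. Since $\log(1/\gamma) = \Theta(\log(1/\eps))$ and $\delta_\AND\sqrt{m\log(1/\gamma)} - k \ge \frac{\delta_\AND}{2}\sqrt{m\log(1/\gamma)}$, the conclusion of Theorem~\ref{t:sbqp-lb} gives degree $\Omega(d\sqrt{m\log(1/\eps)})$. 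What remains is to check hypotheses~\eqref{eq:assm-eps-vs-gamma} and~\eqref{eq:assm-eps-over-m}.

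Set $E := \frac{\eta^{k+1}}{(1-\eta)^m}\binom{m}{k+1}$. With $\gamma = 4\eps$, condition~\eqref{eq:assm-eps-vs-gamma} reads $E \le \eps$. Condition~\eqref{eq:assm-eps-over-m} reads $M \le 2\eps/E$. So it suffices to prove
\[
E \le 2^{-\tau'\sqrt{m\log(1/\eps)}\log m}
\]
for some constant $\tau'>0$. That single bound handles~\eqref{eq:assm-eps-vs-gamma} once we also know $\tau'\sqrt{m\log(1/\eps)}\log m \ge \log(1/\eps)$, and it handles~\eqref{eq:assm-eps-over-m} for every $M \le \eps 2^{\tau\sqrt{m\log(1/\eps)}\log m}$ with $\tau<\tau'$. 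The side inequality holds for large $m$ because $\log(1/\eps) \le m+2$, so $\sqrt{m\log(1/\eps)} \gtrsim \log(1/\eps)$, and the extra $\log m$ factor absorbs the constants.

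To bound $E$, I use $\binom{m}{k+1}\eta^{k+1} \le \bigl(\frac{e m^{1-c}}{k+1}\bigr)^{k+1}$ and split into the two cases of the statement.

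\textbf{Case $c \ge 1$.} Here $(1-1/m^c)^{-m} \le 4$ for large $m$. Since $\eps \le 1/12$, we have $k = \Theta(\sqrt{m\log(1/\eps)}) \ge \Omega(\sqrt m)$, so $\log k = \Omega(\log m)$. This gives $E \le 4(e/(k+1))^{k+1} = 2^{-\Omega(k\log k)} = 2^{-\Omega(\sqrt{m\log(1/\eps)}\log m)}$, with constants depending only on $c$ (and on $\delta_\AND$).

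\textbf{Case $c\in(0,1)$, $\eps = 2^{-\Omega(m)}$.} Now $k = \Theta(m)$, with the constant governed by the $\Omega$ in the exponent of $\eps$, so $\frac{e m^{1-c}}{k+1} = O(m^{-c})$ and the numerator is $2^{-\Omega(c\, m\log m)}$. The denominator satisfies $(1-m^{-c})^{-m} \le e^{2m^{1-c}} = 2^{o(m)}$, which is negligible. Since $\sqrt{m\log(1/\eps)} = \Theta(m)$ here, we again get $E \le 2^{-\tau'\sqrt{m\log(1/\eps)}\log m}$. This time $\tau'$ depends on $c$ and on the hidden constant, matching the dependence claimed in the statement.

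The main obstacle is the bookkeeping in this second case. The choice of $k$ has to respect three constraints at once: $k$ even, $k \le m-1$, and $k \le \frac{\delta_\AND}{2}\sqrt{m\log(1/\gamma)}$. At the same time $k$ must stay $\Omega(m)$, so that the $m^{-c}$ savings in $\eta$ beat both the binomial coefficient and the $(1-\eta)^{-m}$ blow-up. I would handle this by taking $k = 2\lfloor \min\{m-1,\ \frac{\delta_\AND}{2}\sqrt{m\log(1/\gamma)}\}/2\rfloor$ and checking that the minimum is $\Theta(m)$ in this regime.
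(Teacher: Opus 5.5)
Your proposal is correct and follows the paper's proof. It uses the same instantiation of Theorem~\ref{t:sbqp-lb} with $\eta = m^{-c}$, $\gamma = 4\eps$ and $k \approx \frac{\delta_\AND}{2}\sqrt{m\log(1/\gamma)}$, and the same two-case bound on $\frac{\eta^{k+1}}{(1-\eta)^m}\binom{m}{k+1}$. One point in your favor: for $c \ge 1$ you use $\binom{m}{k+1} \le (em/(k+1))^{k+1}$, which gives $2^{-\Omega(k\log k)} = 2^{-\Omega(k\log m)}$ even at $c = 1$. The paper's cruder estimate $\binom{m}{k+1} \le m^{k+1}$ only yields $2e\, m^{-(c-1)(k+1)}$, which is $O(1)$ at $c = 1$, and $c = 1$ is exactly the case Theorem~\ref{t:qma-dt-tradeoffs-c-big} relies on.
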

\begin{proof}
    Let $n$, $f$, $c$, $d$, and $\eps$ be as in the corollary statement.
    Let $\tau > 0$ be a constant to be set later in this proof.
    We apply Theorem~\ref{t:sbqp-lb} with
    \begin{align*}
        \eta &= \frac{1}{m^c}, \\
        \gamma &= 4\eps, \\
        M &= \eps 2^{\tau \sqrt{m \log(1/\eps)} \log m}, \text{ and} \\
        k &= \text{the least even number greater than } \frac{\delta_\AND \sqrt{m \log(1/\gamma)}}{2}.
    \end{align*}
    This will imply that every $(M, \eps)$-SBQP approximation to $\AND_m \circ f^m$ has degree at least $\Omega(d \cdot (\delta_\AND \sqrt{m \log(1/\gamma)} - k)) = \Omega(d \sqrt{m \log(1/\eps)})$.
    The lower bound extends to smaller $M$ immediately since an $(M', \eps)$-SBQP approximation with $M' < M$ is also an $(M, \eps)$-SBQP approximation.

    We now get an upper bound on the quantity $\frac{\eta^{k+1}}{(1-\eta)^m} {m \choose k+1}$, which will allow us to show that the premises of Theorem~\ref{t:sbqp-lb} are satisfied.
    First, suppose $c \geq 1$.
    We have $(1 - \eta)^m = ((1 - 1/m^c)^{m^c})^{m^{1-c}}$.
    For large $m$, this is at least
    $(1/2e)^{m^{1-c}} \geq 1/2e$.
    By Theorem~\ref{t:and-apdeg} there is a universal constant $C$ such that $k \geq C \sqrt{m \log(1/\gamma)}$.
    We have
    \begin{align*}
        \frac{\eta^{k+1}}{(1-\eta)^m} {m \choose k + 1} &\leq 2e \cdot m^{-c\,(k+1)} \cdot m^{k+1} = 2e m^{-(c-1)(k+1)} = 2^{-\Omega(k \log m)}.
    \end{align*}
    The last bound holds because $c \geq 1$ and hence $m^{-(c-1)(k+1)} = 2^{-\Omega(k \log m)}$.

    Now suppose $c \in (0,1)$ but $\eps = 2^{-\Omega(m)}$.
    Since $\gamma = 4\eps$, there is a constant $C_1 \in (0,1]$ such that $\gamma = 2^{-C_1m}$.
    Also, $k + 1 \leq C_2 m$ for an appropriate constant $C_2 \in (0,1)$ that depends on $\delta_\AND$ and $C_1$.
    Using Stirling's approximation,
    \begin{align*}
        \frac{\eta^{k+1}}{(1-\eta)^m} {m \choose k + 1} &\leq \frac{m^{-c\,(k+1)}}{(1-m^{-c})^m} \left( \frac{me}{k+1} \right)^{k+1} \\
        &= \frac{(m^c)^{m-(k+1)}}{(m^c - 1)^m} \left( \frac{e}{C_2} \right)^{C_2 m} \\
        &= \exp_2 \left( (1-C_2) c m \log m - m \log(m^c - 1) + C_2 m \log(e/C_2) \right) \\
        &= 2^{-\Omega(m \log m)} = 2^{-\Omega(k \log m)},
    \end{align*}
    where the last bound holds since for large $m$, $(1 - C_2) cm \log m < m \log(m^c + 1)$.
    
    Thus in either case, it follows from $k \geq \frac{\delta_\AND}{2} \sqrt{m \log(1/\gamma)}$ that for sufficiently large $m$, $\frac{\eta^{k+1}}{(1-\eta)^m} {m \choose k + 1} \leq \gamma / 4$.
    Thus $\frac{\gamma}{2} - \frac{\eta^{k+1}}{(1-\eta)^m} {m \choose k + 1} \geq \frac{\gamma}{4} = \eps$, which establishes Eq.~\ref{eq:assm-eps-vs-gamma}.
    
    Eq.~\ref{eq:assm-eps-over-m} holds similarly. Since $k \geq \frac{\delta_\AND}{2} \sqrt{m \log(1/\gamma)} \geq C_3 \sqrt{m \log(1/\eps)}$ (for an appropriate universal constant $C_3$), there exists $\tau'$ such that
    \begin{align*}
        \frac{\eta^{k+1}}{2(1-\eta)^m} {m \choose k + 1} &\leq 2^{-\tau' C_3 \sqrt{m \log(1/\eps)} \log m}.
    \end{align*}
    Setting $\tau = \tau' \cdot C_3$, this quantity is at most $\eps/M$.
\end{proof}

\begin{corollary} \label{c:const-error-to-sbqp}
    Let $n$ be a function of $m$, and let $\{f : \pmone^{n(m)} \to \pmone\}_{m \in \N}$ be a family of functions.
    If $\apdeg_{1/3}(f) > d \geq 1$, then for sufficiently large $m$, every $(2^{m-1}, 2^{-m-2})$-SBQP approximation to $F = \AND_m \circ f^m$ must have degree $\Omega(d m)$.
\end{corollary}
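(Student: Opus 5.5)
The plan is to apply Theorem~\ref{t:sbqp-lb} directly with $\varepsilon = 2^{-m-2}$, $M = 2^{m-1}$ and $\gamma = 4\varepsilon = 2^{-m}$. This $\gamma$ is the smallest value allowed in Theorem~\ref{t:and-apdeg}, and it gives $\delta_\AND\sqrt{m\log(1/\gamma)} = \delta_\AND m$. We take $k$ to be the least even integer exceeding $\delta_\AND m/2$, so that $\delta_\AND m - k = \Omega(m)$ and the theorem yields degree $\Omega(d' m)$, where $d'$ is whatever large-error approximate degree bound we can feed in.

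The catch is that Theorem~\ref{t:sbqp-lb} needs a bound on $\apdeg_{1-\eta}(f)$ with $\eta$ \emph{small}. With $\eta = 2/3$, which merely restates the hypothesis $\apdeg_{1/3}(f) > d$, the factor $(1-\eta)^{-m} = 3^m$ ruins both inequalities \eqref{eq:assm-eps-vs-gamma} and \eqref{eq:assm-eps-over-m}. The first step is therefore a standard error-amplification argument: $\apdeg_{1-\eta}(f) \ge \Omega(\eta\, \apdeg_{1/3}(f))$ for any constant $\eta \in (0,1)$.

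To prove it, let $p$ approximate $f$ to error $1-\eta$. Then $p(x) \in [\eta, 2-\eta]$ when $f(x)=1$ and $p(x) \in [-2+\eta, -\eta]$ when $f(x)=-1$. Compose $p$ with a univariate polynomial $q$ of degree $O(1/\eta)$ that maps $[\eta,2]$ into $[2/3,4/3]$ and $[-2,-\eta]$ into $[-4/3,-2/3]$. Such a $q$ exists, for instance a suitably rescaled Chebyshev-type or standard amplification polynomial approximating $\sgn$ with gap $\eta$; for constant $\eta$ only its existence with constant degree matters. Then $q\circ p$ is a $1/3$-approximation of degree $O(\deg p/\eta)$. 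Hence $\apdeg_{1-\eta}(f) > c_\eta d =: d'$ for a constant $c_\eta>0$. (If $d'<1$, the bound $\Omega(dm)$ is absorbed by adjusting constants, since $d$ is then $O(1)$, and the pure-high-degree-$\ge 1$ case still applies.)

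Next I verify the two hypotheses of Theorem~\ref{t:sbqp-lb} for a sufficiently small constant $\eta$. Using $\binom{m}{k+1} \le (em/(k+1))^{k+1}$, $k+1 \ge \delta_\AND m/2$ and $(1-\eta)^{-m} \le e^{2\eta m}$, we get
\[
\frac{\eta^{k+1}}{(1-\eta)^m}\binom{m}{k+1} \le \left(\frac{2e\eta}{\delta_\AND}\right)^{\delta_\AND m/2} e^{2\eta m}.
\]
Choosing $\eta$ small enough, depending only on $\delta_\AND$, makes this at most $2^{-3m}$. Then \eqref{eq:assm-eps-vs-gamma} holds because $\gamma/2 - 2^{-3m} = 2^{-m-1} - 2^{-3m} \ge 2^{-m-2} = \varepsilon$. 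Also \eqref{eq:assm-eps-over-m} holds because $\varepsilon/M = 2^{-2m-1} \ge 2^{-3m-1}$.

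Theorem~\ref{t:sbqp-lb} then gives degree $\Omega(d'(\delta_\AND m - k)) = \Omega(dm)$ for large $m$. The main obstacle is recognizing and cleanly justifying the amplification step. The parameter check is routine once $\eta$ is a small enough constant, because the $k^{-(k+1)}$ decay from the binomial bound with $k=\Theta(m)$ beats both $e^{2\eta m}$ and the target $2^{-3m}$.
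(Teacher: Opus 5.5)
Your proposal is correct and follows the paper's proof. Both reduce to a bound on $\apdeg_{1-\eta}(f)$ for a small constant $\eta$ via standard error amplification, then apply Theorem~\ref{t:sbqp-lb} with $\gamma = 2^{-m}$, $\eps = 2^{-m-2}$, $M = 2^{m-1}$, $k = \Theta(m)$, and verify the two hypotheses using $\binom{m}{k+1} \le (em/(k+1))^{k+1}$. The only difference is the choice of one-sided dual for $\AND_m$: the paper uses the normalized parity function (pure high degree $m$), which lets it fix $k = m/2 - 1$ and $\eta = 1/100$ explicitly. You instead use the $\delta_\AND$-dual of Theorem~\ref{t:and-apdeg} with $k \approx \delta_\AND m/2$ and $\eta$ chosen depending on $\delta_\AND$, which matches the literal statement of Theorem~\ref{t:sbqp-lb} more faithfully.
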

\begin{proof}
    By standard error reduction techniques (see, e.g.,~\cite[Section~3.4]{bun-tha:approximate-degree}), it is possible to transform any degree-$k$ $0.99$-approximation to $f$ into a degree-$O(k)$ $1/3$-approximation to $f$.
    Thus it suffices to prove the corollary assuming $\apdeg_{0.99}(f) > d \geq 1$; the corollary as stated follows with at the cost of a constant factor in the degree lower bound.
    
    We use the well-known fact that $\degminus_{2^{-m}}(\AND_m) \geq m$ (note the absence of a constant factor); the dual witness justifying this is simply the normalized parity function on $m$ bits.
    Let $f : \pmone^n \to \pmone$ be such that $\apdeg_{0.99}(f) > d \geq 1$.
    We apply Theorem~\ref{t:sbqp-lb} with
    \begin{align*}
        \eta &= \frac{1}{100}, \\
        \gamma &= 2^{-m}, \\
        \eps &= 2^{-m-2}, \\
        M &= 2^{m-1}, \text{ and} \\
        k &= \frac{m}{2} - 1.\footnotemark
    \end{align*}
    \footnotetext{We assume for simplicity of exposition that $k$ is an even natural number. The case where $m/2 - 1$ is not an even number follows by setting $k$ to the least even number greater than $m/2 - 1$.}
    By Stirling's approximation,
    \begin{align*}
        \frac{\eta^{k+1}}{(1 - \eta)^m} {m \choose k+1} &\leq \frac{1}{100^{k+1}} \cdot \frac{100^m}{99^m} \cdot \left( \frac{me}{k+1} \right)^{k+1} \\
        &= \frac{100^{m/2}}{99^m} \cdot (2e)^{m/2} \tag{$k + 1 = m/2$} \\
        &= \left( \frac{\sqrt{200e}}{99} \right)^m \leq 2^{-2m}. \tag{$\sqrt{200e}/99 \leq 1/4$}
    \end{align*}
    Thus for large $m$,
    \begin{gather*}
        \frac{\gamma}{2} - \frac{\eta^{k+1}}{(1 - \eta)^m} {m \choose k+1} \geq 2^{-m-1} - 2^{-2m} \geq 2^{-m-2} = \eps \text{ and} \\
        \frac{\eta^{k+1}}{2(1 - \eta)^m} {m \choose k+1} \leq 2^{-2m-1} = \frac{\eps}{M},
    \end{gather*}
    satisfying the premises of Theorem~\ref{t:sbqp-lb}.
\end{proof}

%% file: sections/qma-dt-batch-ver.tex
In this section, we apply Corollaries~\ref{c:large-error-to-sbqp} and \ref{c:const-error-to-sbqp} to obtain various \QMAdt{} lower bounds.
Our proofs rely on the following connection between \QMAdt{} protocols and SBQP approximations.

\begin{lemma} \label{l:qma-to-sbqp}
    If $f : \pmone^n \to \pmone$ has a \QMAdt{} protocol with witness length $w$ and query complexity $q$, then for every $\eps > 0$ there is an $O(q \cdot \max\{w, \log(1/\eps)\})$-degree polynomial that $(2^{w+2}, \eps)$-SBQP approximates $f$.
\end{lemma}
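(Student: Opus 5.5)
We follow the technical overview: first amplify with Marriott--Watrous, then replace the witness by the maximally mixed state, and finally read off the polynomial via Beals et al. Let $r = \max\{w+3, \lceil\log(4/\eps)\rceil\}$, chosen so that $2^{-r} \le \min\{2^{-w-3}, \eps/4\}$. Apply Theorem~\ref{t:marriott-watrous} with $a = 2/3$, $b = 1/3$ (so $p(n)=3$) and this $r$. This yields a \QMAdt{} protocol $A$ with completeness and soundness error $2^{-r}$, witness length $w$, and query complexity $O(q r) = O(q \max\{w, \log(1/\eps)\})$. (The polynomial-bound hypotheses in Theorem~\ref{t:marriott-watrous} are harmless: the construction is the same procedure for any $r$.)

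Next we run $A$ on the maximally mixed state $I/2^w$ and let $P(x)$ be the acceptance probability, i.e., the probability of outputting $-1$. By Beals et al., $P$ is a real polynomial of degree at most twice the query count. We then bound $P$ on each kind of input.

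\textbf{False inputs.} Every witness state is accepted with probability at most $2^{-r}$. The mixed state is a convex combination of pure witnesses, so $0 \le P(x) \le 2^{-r}$.

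\textbf{True inputs.} Some pure witness $\ket{\psi}$ is accepted with probability at least $1-2^{-r}$. Write $I/2^w = 2^{-w}\ket{\psi}\bra{\psi} + (1-2^{-w})\rho'$ for a density matrix $\rho'$ (complete $\ket{\psi}$ to an orthonormal basis). Acceptance probability is linear in the state, which gives $P(x) \ge 2^{-w}(1-2^{-r}) \ge 2^{-w-1}$, and trivially $P(x)\le 1$.

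It remains to convert $P$ into a $\pm1$-scaled polynomial. Set $p(x) = 1 - 2^{w+2}P(x)$, which has the same degree as $P$. On false inputs, $p(x) \in [1 - 2^{w+2-r}, 1] \subseteq [1-\eps, 1+\eps]$ once $2^{w+2-r} \le \eps$. On true inputs, $p(x) \le 1 - 2 = -1 \le -1+\eps$ and $p(x) \ge 1 - 2^{w+2} \ge -2^{w+2} - 1 - \eps$. So $p$ is a $(2^{w+2}, \eps)$-SBQP approximation.

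The one place the parameters interact is the false-input condition $2^{w+2-r} \le \eps$: this is exactly why the number of amplification rounds must exceed both $w$ and $\log(1/\eps)$. Accordingly, $r$ is taken to be $w + \log(1/\eps) + O(1)$, which is still $O(\max\{w,\log(1/\eps)\})$. The main subtlety is this interplay together with the mixed-state averaging: soundness against all witnesses is what makes false inputs tight, while the $2^{-w}$ overlap loss on true inputs is what forces the scaling $2^{w+2}$ and hence the bound $M = 2^{w+2}$.
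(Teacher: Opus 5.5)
Your proof follows the same route as the paper's: Marriott--Watrous amplification, then the maximally mixed witness, then Beals et al., then the rescaling $p = 1 - 2^{w+2}P$. Your final choice $r = w + \log(1/\eps) + O(1)$ is exactly what is needed. The paper's own choice $\nu = \min\{\eps/2, 2^{-2w}\}$ drops the factor $2^{w+1}$ on false inputs: it writes $2^{w+1}\widetilde{p}(x) \in [0,\nu]$ where it should be $[0, 2^{w+1}\nu]$. So your condition $2^{w+2-r} \le \eps$ is the correct accounting, and the degree is still $O(q\max\{w,\log(1/\eps)\})$.

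You should, however, replace your opening definition $r = \max\{w+3, \lceil\log(4/\eps)\rceil\}$. It has the same max-versus-sum defect: when $r = w+3$ it gives $2^{w+2-r} = 1/2$, which exceeds $\eps$ whenever $\eps < 1/2$. Use the sum form from your last paragraph instead, e.g.\ $r = w + 3 + \lceil\log(1/\eps)\rceil$ for $\eps \le 1$.
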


\begin{proof}
    Suppose $f$ has a \QMAdt{} protocol with witness length $w$ and query complexity $q$.
    Fix $\eps > 0$, and let $\nu = \min\{\frac{\eps}{2}, 2^{-2w}\}$.
    By Theorem~\ref{t:marriott-watrous}, there is a \QMAdt{} protocol $P$ for $f$ with completeness and soundness error at most $\nu$, witness length $w$, and query complexity $O(q\log(1/\nu))$.
    Let $Q$ be the witness-free protocol that simulates $P$ with the maximally mixed state as the witness and accepts if and only if $P$ outputs $-1$.
    By soundness of $P$, whenever $f(x) = 1$, $\Pr_Q[Q^x$ accepts$] \leq \nu$.
    On the other hand, on inputs where $f(x) = -1$, the probability that $Q$ accepts is at least $(1 - \nu) 2^{-w} \geq (1 - 2^{-2w}) 2^{-w} \geq 2^{-w-1}$.

    Let $\widetilde{p}$ be the $O(q \log(1/\nu))$-degree polynomial computing the acceptance probability of $Q$. Define $p = 1 - 2 \cdot 2^{w+1} \widetilde{p}$.
    The degree of $p$ is $O(q \log(1/\nu)) = O(q \cdot \max\{w, \log(1/\eps)\})$.
    \begin{enumerate}
        \item If $f(x) = -1$ then $2^{w+1} \widetilde{p}(x) \in [1, 2^{w+1}]$ and so $p(x) \in [1 - 2^{w+2}, -1] \subset [-2^{w+2} - 1 - \eps, -1 + \eps]$.
        \item If $f(x) = 1$ then $2^{w+1} \widetilde{p}(x) \in [0, \nu]$ and so $p(x) \in [1 - 2\nu, 1] \subset [1 - \eps, 1 + \eps]$.
    \end{enumerate}
\end{proof}

\subsection{The \QMAdt{} Complexity of Batch Verification}

Our first lower bound exhibits a smooth tradeoff between witness length and query complexity of batch verifying functions with high large-error approximate degree.

\begin{theorem} \label{t:qma-dt-tradeoffs-c-big}
    Let $n$ be a function of $m$, and $\{f : \pmone^{n(m)} \to \pmone\}_{m \in \N}$ be an arbitrary family of functions.
    For all sufficiently large $m$, if $\apdeg_{1 - 1/m}(f) > d \geq 1$ then every \QMAdt{} protocol for $\AND_m \circ f^m$ with witness length $w = O(m)$ requires $\Omega(d \sqrt{m/w})$ queries.
\end{theorem}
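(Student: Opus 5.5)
Suppose there is a \QMAdt{} protocol for $F = \AND_m \circ f^m$ with witness length $w$ and $q$ queries, with $w \le Cm$ for some constant $C$. We may assume $w \geq 1$. The plan is to turn the protocol into an SBQP approximation of low degree via Lemma~\ref{l:qma-to-sbqp}, and then compare that degree with the lower bound of Corollary~\ref{c:large-error-to-sbqp} applied with $c=1$. In that application the hypothesis $\apdeg_{1-1/m}(f) > d$ is exactly the required one, and $\tau$ is a constant depending only on $c$.

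\textbf{Choice of $\eps$.} We want $\log(1/\eps) = \Theta(w)$ while keeping $\eps \in [2^{-m-2}, 1/12]$.
\begin{itemize}
\item If $w \le m$, set $\eps = 2^{-\lambda w}$, possibly with $\lambda w$ bumped up to a constant so that $\eps \le 1/12$. Here $\lambda \in (0,1]$ is a constant chosen below.
\item If $m < w \le Cm$, set $\eps = 2^{-m-2}$. Then $\log(1/\eps) = \Theta(m) = \Theta(w)$.
\end{itemize}

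\textbf{Upper bound on the degree.} Lemma~\ref{l:qma-to-sbqp} gives a polynomial of degree $O(q\max\{w,\log(1/\eps)\}) = O(qw)$ that $(2^{w+2},\eps)$-SBQP approximates $F$.

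\textbf{Applying the corollary.} To use Corollary~\ref{c:large-error-to-sbqp} we need
\[
2^{w+2} \le \eps\, 2^{\tau\sqrt{m\log(1/\eps)}\log m},
\]
that is,
\[
w + 2 + \log(1/\eps) \le \tau\sqrt{m\log(1/\eps)}\log m .
\]
This is the step that needs care.
\begin{itemize}
\item \emph{First case.} With $\log(1/\eps)=\lambda w$, the left side is $O(w)$ and the right side is $\tau\sqrt{\lambda m w}\log m$. Since $w \le m$, we have $w = \sqrt{w\cdot w} \le \sqrt{mw}$. So the inequality holds once $\log m$ is large enough to absorb the constants, i.e.\ for sufficiently large $m$, whatever fixed $\lambda$ we pick. (The $\log m$ factor is what gives the slack.)
\item \emph{Second case.} The left side is $O(m)$ and the right side is $\Theta(m\log m)$, so it also holds for large $m$.
\end{itemize}

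\textbf{Conclusion.} The corollary then forces
\[
qw \ge \Omega\bigl(d\sqrt{m\log(1/\eps)}\bigr) = \Omega(d\sqrt{mw}),
\]
hence $q = \Omega(d\sqrt{m/w})$.

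The main obstacle is checking that the dependence on $w$ of both the degree blowup $O(qw)$ and the witness-size parameter $M = 2^{w+2}$ fits the corollary's constraint uniformly over all $w = O(m)$. It is only the $\log m$ factor in the exponent of $M$ that makes this work, and the boundary cases $w = O(1)$ and $w$ near $Cm$ have to be handled by clamping $\eps$ as described above.
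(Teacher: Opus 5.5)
Your proof is correct and follows essentially the same route as the paper: the paper takes $\eps = \max\{2^{-w}, 2^{-m-2}\}$ (your two cases with $\lambda = 1$), gets a degree-$O(qw)$ $(2^{w+2},\eps)$-SBQP approximation from Lemma~\ref{l:qma-to-sbqp}, and uses the $\log m$ slack in Corollary~\ref{c:large-error-to-sbqp} (with $c=1$) to absorb $M = 2^{w+2}$, yielding $qw = \Omega(d\sqrt{mw})$. The only cosmetic difference is that you clamp $\eps$ to handle tiny $w$, whereas the paper simply assumes $w \ge \log 12$ and absorbs the remaining case into the constant.
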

\begin{proof}
    Fix $n(m)$ and $f : \pmone^{n(m)} \to \pmone$.
    Assume $\apdeg_{1-1/m}(f) > d \geq 1$.
    Suppose there is a \QMAdt{} protocol for $\AND_m \circ f^m$ with witness length $w = O(m)$ and query complexity $q$.
    We assume that $w \geq \log(12)$; the case where $w < \log(12)$ follows immediately with a smaller constant in the $\Omega$.
    Let $\eps = \max\{2^{-w}, 2^{-m-2}\} \in [2^{-m-2}, 1/12]$.
    The premises of Corollary~\ref{c:large-error-to-sbqp} are satisfied with $c = 1$.
    Thus, by Corollary~\ref{c:large-error-to-sbqp}, there exists a constant $\tau$ such that every $(M, \eps)$-SBQP approximation to $\AND_m \circ f^m$ with $M \leq \eps 2^{\tau \sqrt{m \log(1/\eps)} \log m}$ has degree $\Omega(d \sqrt{m \log(1/\eps)})$.

    By Lemma~\ref{l:qma-to-sbqp}, there is a degree-$O(qw)$ polynomial $p$ that $(2^{w+2}, \eps)$-SBQP approximates $\AND_m \circ f^m$.
    Since $\log(1/\eps) = \Omega(w)$, we have $\tau \sqrt{m \log(1/\eps)} \log m = \Omega(\sqrt{mw} \log m)$.
    Thus for large $m$, $2^{w+2} \leq \eps 2^{\tau \sqrt{m\log(1/\eps)} \log m}$.
    So $O(qw) = \deg(p) = \Omega(d \cdot \sqrt{m \log(1/\eps)}) = \Omega(d \sqrt{mw})$.
    Rearranging yields the desired lower bound.
\end{proof}

Observe that the proof of Theorem~\ref{t:qma-dt-tradeoffs-c-big} can actually tolerate witness lengths all the way up to $\tau' m \log m$ for some fixed constant $\tau' \in (0,1)$.
Roughly speaking, it shows that whenever $w \leq \tau' m \log m$, any \QMAdt{} protocol for $\AND_m \circ f^m$ with witness length $w$ must have query complexity at least $\Omega(\apdeg_{1 - 1/m}(f) \sqrt{m/w})$.
Combining this observation with Theorem~\ref{t:cnf-high-odeg} we show that even a constant-factor improvement in witness length over the trivial protocol for batch verifying DNFs incurs a large increase in query complexity.
\begin{corollary}
    For all sufficiently large $n$ and all $\delta > 0$ there exists an explicitly given function $f : \pmone^n \to \pmone$ computable by a polynomial-size, constant-width DNF and a constant $k \in \N$ such that for all $m = n^{O(1)} \leq n$, the following hold.
    \begin{enumerate}
        \item There is a \QMAdt{} protocol for $\AND_m \circ f^m$ with witness length $k m \log n$ and query complexity $O(k\sqrt{m})$.
        \item There is a constant $\tau' \in (0,1)$ such that every \QMAdt{} protocol for $\AND_m \circ f^m$ with witness length at most $\tau' m \log n$ must have query complexity $\min\{\Omega(n^{1-\delta} \sqrt{m/w}), \Omega(n^{1-\delta} m/w)\}$.
    \end{enumerate}
\end{corollary}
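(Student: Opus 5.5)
The plan is to take $f$ to be Sherstov's DNF from Theorem~\ref{t:cnf-high-odeg}, prove item~1 with the baseline Grover protocol, and prove item~2 by rerunning the argument of Theorem~\ref{t:qma-dt-tradeoffs-c-big} with witness lengths allowed up to order $m\log m$. Concretely, fix $\delta>0$ and apply Theorem~\ref{t:cnf-high-odeg} with $c$ large enough, say $c=1$. This gives an explicit polynomial-size, width-$k$ DNF $f$ with $\apdeg_{1-1/n^c}(f)=\Omega(n^{1-\delta})$. Since $m\le n$, we have $1-1/m\le 1-1/n$, so $\apdeg_{1-1/m}(f)\ge \apdeg_{1-1/n}(f)=\Omega(n^{1-\delta})$; set $d=\Omega(n^{1-\delta})$.

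\textbf{Item 1.} The DNF has $s=n^{O(1)}$ terms of width $k$. For each true copy, the witness is the index of a satisfied term, which takes $O(\log n)$ bits, and checking that term takes $k$ classical queries. Concatenate the $m$ indices; the total length is $O(m\log n)$, which we write as $km\log n$ after adjusting constants. The verifier then runs Grover search (\cite{hoy-mos-dew:quantum-search}, bounded-error version) over the $m$ blocks for one whose claimed term fails. Each Grover step costs $O(k)$ queries, for a total of $O(k\sqrt m)$ queries. Completeness and soundness are immediate, since false inputs have no satisfying term in some block.

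\textbf{Item 2.} Suppose a protocol has witness length $w\le\tau' m\log n$ and $q$ queries. I will split into two regimes.
\begin{itemize}
\item[(a)] If $w\le m$, let $\eps=\max\{2^{-w},2^{-m-2}\}$ and follow the proof of Theorem~\ref{t:qma-dt-tradeoffs-c-big} verbatim. This gives $q=\Omega(d\sqrt{m/w})$.
\item[(b)] If $m<w\le \tau' m\log n$, set $\eps=2^{-m-2}$, which is $2^{-\Omega(m)}$. Apply Corollary~\ref{c:large-error-to-sbqp}; it yields degree $\Omega(d\sqrt{m\cdot m})=\Omega(dm)$ for all $M\le \eps 2^{\tau m\log m}$. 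Lemma~\ref{l:qma-to-sbqp} gives a $(2^{w+2},\eps)$-SBQP approximation of degree $O(q\max\{w,m\})=O(qw)$. Hence $qw=\Omega(dm)$, so $q=\Omega(dm/w)$.
\end{itemize}
Taking the minimum over the two regimes gives the stated bound.

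The main obstacle is the parameter check that $2^{w+2}\le\eps\,2^{\tau\sqrt{m\log(1/\eps)}\log m}$, together with relating $\log n$ to $\log m$. The exponent on the right is of order $\tau m\log m$. Since $m=n^{O(1)}$ and we need $w\le \tau'm\log n$ to be at most roughly $\tau m\log m$, we need $\log n=O(\log m)$; I will read the hypothesis $m=n^{O(1)}$ as also forcing $m\ge n^{\Omega(1)}$. Under that reading, choosing $\tau'$ as a small multiple of $\tau$, adjusted by the polynomial relation between $m$ and $n$, makes the inequality hold for large $n$. I would also verify that the constant $\tau$ from Corollary~\ref{c:large-error-to-sbqp} does not depend on $w$ in the case $c\ge1$. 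This holds because $\tau$ depends only on $c$ when $c\ge1$, so $\tau'$ is a genuine constant.
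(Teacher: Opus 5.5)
Your proposal is correct and follows essentially the paper's proof: Sherstov's DNF, the Grover-based baseline protocol for item~1, and a rerun of the argument of Theorem~\ref{t:qma-dt-tradeoffs-c-big} for item~2. The only difference is that the paper covers your two regimes at once by taking $\eps=\max\{2^{-w},2^{-m-2}\}$ and comparing degree $\Omega(d\sqrt{m\min\{m,w\}})$ against a degree-$O(qw)$ SBQP approximation; your reading $m\ge n^{\Omega(1)}$ also matches the paper, which sets $m=n^a$ for a constant $a\in(0,1]$ and chooses $\tau'$ proportional to $a$.
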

\begin{proof}
    By Theorem~\ref{t:cnf-high-odeg} for every $n$ and $\delta > 0$ there exists an explicitly given function $f : \pmone^n \to \pmone$ computable by a polynomial-size, constant-width DNF such that $\apdeg_{1 - 1/n}(f) > n^{1-\delta}$.
    Let $k \in \N$ be a constant such that the DNF computing $f$ has width at most $k$ and size a most $n^k$.
    Henceforth we refer to $f$ and the DNF computing it interchangeably.
    Let $m = n^a$ for some constant $a \in (0,1]$.
    Consider the following \QMAdt{} protocol for $\AND_m \circ f^m$.
    \begin{enumerate}
        \item The witness consists of indices $i_1, i_2, \ldots, i_m \in ([n^k])^m$ such that the $i_j$th term in the $j$th copy of $f$ is satisfied.
        \item The verifier, given query access to $x$ and the witness $(i_1, i_2, \ldots, i_m)$ uses Grover's search algorithm to look for $j \in [m]$ such that the $i_j$th term of the $j$th copy of $f$ is not satisfied.
        \item If such an index is found, the verifier rejects. Else, it accepts.
    \end{enumerate}
    If $(\AND_m \circ f^m)(x) = -1$ then the verifier, given the honest witness, will not find an unsatisfied term among those indicated by the witness. Thus the protocol has perfect completeness.
    If $(\AND_m \circ f^m)(x) = 1$ then there must be some $j$ such that the $j$th copy of $f$ is not satisfied. It follows that none of the terms of $f$ are satisfied in that copy, and so given any witness $(i_1, i_2, \ldots, i_m)$ the $i_j$th term of the $j$th copy of $f$ will not be satisfied.
    By properties of Grover search, the verifier finds that index and rejects with probability $2/3$. Thus the protocol is sound.
    The witness length of the protocol is at most $k m \log n$ since each term requires $k \log n$ bits to describe.
    Whether or not each term is satisfied can be computed exactly with $k$ queries.
    One can thus implement the search over these checks with $O(k \sqrt{m})$ queries using standard techniques (see, e.g., \cite{hoy-mos-dew:quantum-search}).
    This proves the upper bound.

    The lower bound holds due to the proof of Theorem~\ref{t:qma-dt-tradeoffs-c-big}.
    Since $m \leq n$, $\apdeg_{1-1/m}(f) \geq \apdeg_{1-1/n}(f) > n^{1-\delta}$.
    For convenience, let $d = n^{1-\delta}$.
    Setting $\eps = \max\{2^{-w}, 2^{-m-2}\}$, we can apply Corollary~\ref{c:large-error-to-sbqp} as in the proof of Theorem~\ref{t:qma-dt-tradeoffs-c-big}.
    This yields a constant $\tau$ such that every $(M, \eps)$-SBQP approximation to $\AND_m \circ f^m$ with $M \leq \eps 2^{\tau \sqrt{m \log(1/\eps)} \log m}$ must have degree $\Omega(d \sqrt{m \log(1/\eps)}) = \Omega(d \sqrt{m \cdot \min\{m,w\}})$.
    Due to our setting of $\eps$ and $m$, there exists a constant $\tau'' > 0$ such that the degree lower bound holds for all $(M, \eps)$-SBQP approximations with $M \leq 2^{\tau'' a \sqrt{m \cdot \min\{m - w\}} \log n}$.
    Set $\tau' = \tau'' a/2$.
    Suppose there is a \QMAdt{} protocol for $\AND_m \circ f^m$ with witness length $w \leq \tau' m \log n$.
    Then by Lemma~\ref{l:qma-to-sbqp} there is a $(2^{w+2}, \eps)$-SBQP approximation to $\AND_m \circ f^m$ with degree $O(q \min\{w, \max\{w, m\}\}) = O(qw)$.
    Combining the upper and lower bounds, $q = \Omega(d \sqrt{m \cdot \min\{m,w\}} / w) = \min\{\Omega(d\sqrt{m/w}), \Omega(dm/w)\}$.    
\end{proof}

Our techniques also allow us to get query complexity lower bounds for batch verifying $f$ from lower bounds on $\apdeg_{1 - 1/m^c}(f)$ for arbitrarily small constants $c > 0$ and on $\apdeg_{1/3}(f)$.

\begin{theorem} \label{t:qma-dt-lb-small-error}
    Let $n$ be a function of $m$ and let $\{f : \pmone^{n(m)} \to \pmone\}_{m \in \N}$ be an arbitrary family of functions. The following statements hold for all sufficiently large $m$.
    \begin{enumerate}
        \item For every $c > 0$, if $\apdeg_{1-1/m^c}(f) > d \geq 1$ then every \QMAdt{} protocol for $\AND_m \circ f^m$ with witness length at most $O(m)$ requires $\Omega(d)$ queries.
        \item If $\apdeg_{1/3}(f) > d \geq 1$ then every \QMAdt{} protocol for $\AND_m \circ f^m$ with witness length at most $m - 3$ requires $\Omega(d)$ queries.
    \end{enumerate}
\end{theorem}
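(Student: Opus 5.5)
Both parts follow the template of the proof of Theorem~\ref{t:qma-dt-tradeoffs-c-big}. In each case we assume a \QMAdt{} protocol with witness length $w$ and $q$ queries. Lemma~\ref{l:qma-to-sbqp} converts it into a low-degree $(2^{w+2},\eps)$-SBQP approximation to $F=\AND_m\circ f^m$ for a suitable $\eps$. We then compare the degree of that approximation against the lower bound from the matching corollary. The only change from the earlier theorem is that $\eps$ is now taken to be $2^{-\Theta(m)}$ instead of $2^{-w}$. This makes $\log(1/\eps)=\Theta(m)$, so the degree lower bound grows like $d\cdot m$, while the upper bound on the approximation's degree is $O(q\cdot m)$. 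The factors of $m$ cancel and leave $q=\Omega(d)$.

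\emph{Part 1.} Let $w\le Cm$ and fix $c>0$. Set $\eps=2^{-m-2}$, which lies in $[2^{-m-2},1/12]$ and is $2^{-\Omega(m)}$, so Corollary~\ref{c:large-error-to-sbqp} applies whether $c\ge 1$ or $c\in(0,1)$. It gives a constant $\tau$ (depending only on $c$, since the $\Omega$ constant here is fixed) such that every $(M,\eps)$-SBQP approximation with $M\le \eps\,2^{\tau\sqrt{m\log(1/\eps)}\log m}=2^{-m-2}\cdot 2^{\Theta(\tau m\log m)}$ has degree $\Omega(d\sqrt{m\cdot m})=\Omega(dm)$. Because $m\log m$ eventually dominates $w+2+m+2=O(m)$, the choice $M=2^{w+2}$ is admissible for large $m$. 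Lemma~\ref{l:qma-to-sbqp} then produces such an approximation of degree $O(q\max\{w,m+2\})=O(qm)$. Hence $qm=\Omega(dm)$, i.e.\ $q=\Omega(d)$.

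\emph{Part 2.} Let $w\le m-3$. Corollary~\ref{c:const-error-to-sbqp} shows that every $(2^{m-1},2^{-m-2})$-SBQP approximation to $F$ has degree $\Omega(dm)$. Apply Lemma~\ref{l:qma-to-sbqp} with $\eps=2^{-m-2}$. It yields a $(2^{w+2},\eps)$-SBQP approximation, and $2^{w+2}\le 2^{m-1}$ because $w\le m-3$. An $(M',\eps)$-SBQP approximation with $M'\le M$ is also an $(M,\eps)$-SBQP approximation, so this is in particular a $(2^{m-1},\eps)$-SBQP approximation. Its degree is $O(q\max\{w,m+2\})=O(qm)$, and comparing with the corollary gives $q=\Omega(d)$.

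The main obstacle is Part 1, specifically checking that the constant $\tau$ from Corollary~\ref{c:large-error-to-sbqp} is uniform when $c<1$. There we rely on $\eps=2^{-\Omega(m)}$ holding with a fixed constant, which our choice $\eps=2^{-m-2}$ provides. We must also confirm that the slack $2^{\Theta(m\log m)}$ absorbs the factor $2^{w+m+4}$ for every $w=O(m)$ once $m$ is large enough depending on $C$. Part 2 needs nothing beyond the threshold $w\le m-3$, which was chosen exactly so that $2^{w+2}\le 2^{m-1}$.
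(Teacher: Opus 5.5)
Your proposal is correct and matches the paper's proof step for step. In Part 1 you fix $\eps = 2^{-m-2}$, use Corollary~\ref{c:large-error-to-sbqp} to get an $\Omega(dm)$ degree bound for every $M$ up to $2^{\Omega(m\log m)}$, and compare it with the degree-$O(qm)$ approximation from Lemma~\ref{l:qma-to-sbqp}; in Part 2 you combine Corollary~\ref{c:const-error-to-sbqp} with the same lemma via $2^{w+2} \le 2^{m-1}$, and your explicit monotonicity-in-$M$ remark just spells out a step the paper uses implicitly.
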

\begin{proof} \textbf{(Part 1)} Fix $c > 0$ and $f : \pmone^{n(m)} \to \pmone$. Assume $\apdeg_{1 - 1/m^c}(f) > d \geq 1$, and suppose there is a \QMAdt{} protocol for $\AND_m \circ f^m$ with witness length $w = O(m)$ and query complexity $q$. Assume that $w \geq \log(12)$; the complementary case follows immediately.
    Let $\eps = 2^{-m-2}$.
    By Corollary~\ref{c:large-error-to-sbqp} there is a constant $\tau$ that depends only on $c$ such that every $(M, \eps)$-SBQP approximation to $\AND_m \circ f^m$ with $M \leq \eps 2^{\tau \sqrt{m \log(1/\eps)} \log m}$ must have degree $\Omega(d\sqrt{m \log(1/\eps)}) = \Omega(dm)$.

    By Lemma~\ref{l:qma-to-sbqp} there is a degree-$O(qm)$ polynomial that $(2^{w+2}, \eps)$-SBQP approximates $\AND_m \circ f^m$.
    We have $\eps 2^{\tau \sqrt{m \log(1/\eps)} \log m} = 2^{\Omega(m \log m)} \geq 2^{O(m)} = 2^{w+2}$.
    So we must have $O(qm) = \deg(p) = \Omega(dm)$, which implies $q = \Omega(d)$ as claimed.
    
    \textbf{(Part 2)} Suppose $\apdeg_{1/3}(f) > d \geq 1$, and suppose there is a \QMAdt{} protocol for $\AND_m \circ f^m$ with witness length $w \leq m-3$ and query complexity $q$.
    By Corollary~\ref{c:const-error-to-sbqp}, for sufficiently large $m$, every $(M, 2^{-m-2})$-SBQP approximation with $M \leq 2^{m-1}$ must have degree $\Omega(dm)$.
    By Lemma~\ref{l:qma-to-sbqp} there is a degree-$O(qm)$ polynomial that $(2^{w+2}, 2^{-m-2})$-SBQP approximation to $\AND_m \circ f^m$.
    Since $w \leq m - 3$, $2^{w+2} \leq 2^{m-1}$. The query lower bound follows.
\end{proof}

\subsection{\QMAdt{} Lower Bounds for Natural Functions}

Some functions have the property that they contain as sub-functions the problem of batch verifying smaller copies of themselves.
For such functions, a lower bound on the \QMAdt{} complexity of batch verification immediately implies a lower bound on the \QMAdt{} complexity of the functions themselves.
In this subsection, we use this insight along with Theorem~\ref{t:qma-dt-lb-small-error} to get \QMAdt{} lower bounds for element distinctness, surjectivity, and $k$-distinctness.
Our lower bounds make use of the observation by Sherstov and Thaler~\cite{she-tha:vanishing-error} that the surjectivity and $k$-element distinctness functions contain the batch verification problem for smaller instances of themselves as a subfunction.

\begin{theorem} \label{t:qma-dt-lower-bounds} ~
    \begin{enumerate}
        \item \label{item:and-or-lb} $\QMA^\dt(\AND_m \circ \OR_n^m) = \Omega((mn)^{1/3})$. Consequently, $\QMA^\dt(\AND_{n^{1/3}} \circ \OR_{n^{2/3}}^{n^{1/3}}) = \Omega(n^{1/3})$.
        \item \label{item:surj-lb} $\QMA^\dt(\SURJ_n) = \widetilde{\Omega}(n^{3/7})$.
        \item \label{item:ked-lb} $\QMA^\dt(k\ED_n) = \widetilde{\Omega}(n^{\frac{3k-1}{7k-1}})$.
    \end{enumerate}
\end{theorem}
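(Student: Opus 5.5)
The plan for all three items is the same. Write the target function as, or restrict it to, a function of the form $\AND_{m'} \circ g^{m'}$ for a suitable inner function $g$ and a suitable number of copies $m'$. Then apply Part~2 of Theorem~\ref{t:qma-dt-lb-small-error} and balance witness length against queries. Suppose a \QMAdt{} protocol for $\AND_{m'} \circ g^{m'}$ has witness length $w$ and query complexity $q$. Either $w > m'-3$, so the cost $w+q$ is $\Omega(m')$. Or $w \le m'-3$, and then Part~2 gives $q = \Omega(\apdeg_{1/3}(g))$. Hence
\begin{equation*}
\QMA^\dt(\AND_{m'} \circ g^{m'}) = \Omega(\min\{m', \apdeg_{1/3}(g)\}).
\end{equation*}
We then choose $m'$ so that the two terms match. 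We also use that \QMAdt{} complexity cannot increase under restriction: fixing some input bits of a protocol for a function yields a protocol for the restricted subfunction with the same witness length and query count.

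\textbf{Item~\ref{item:and-or-lb}.} For any $m' \mid m$, the function $\AND_m \circ \OR_n^m$ equals $\AND_{m'} \circ g^{m'}$ with $g = \AND_{m/m'} \circ \OR_n^{m/m'}$, since conjunction regroups associatively. By the Bun--Thaler/Sherstov bound, $\apdeg(g) = \Omega(\sqrt{mn/m'})$. Taking $m' \approx (mn)^{1/3}$, both terms of the minimum are $\Omega((mn)^{1/3})$.

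This requires $m' \le m$, i.e.\ $n \lesssim m^2$, and $m'$ large enough for Theorem~\ref{t:qma-dt-lb-small-error} to apply. Rounding $m'$ to a divisor of $m$ costs only constants, and divisibility can also be handled by padding. In the stated consequence, $m = n^{1/3}$ and $mn = n$, so $m' = n^{1/3} = m$ and the inner function is just $\OR_{n^{2/3}}$. This gives $\Omega(n^{1/3})$ directly.

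The regime $n \gg m^2$ is the one place I expect difficulty. The natural choice $m' = m$, $g = \OR_n$ only gives $\min\{m, \sqrt n\} = m$. I would first check whether the claimed bound is intended only in the regime $n \lesssim m^2$ (which covers the headline consequence), rather than hunt for a different decomposition.

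\textbf{Items~\ref{item:surj-lb} and~\ref{item:ked-lb}.} Following Sherstov and Thaler, for every $m$ the function $\SURJ_n$ contains $\AND_m \circ \SURJ_{n/m}^m$ as a subfunction, and $k\ED_n$ contains $\AND_m \circ (k\ED_{n/m})^m$, up to polylogarithmic changes in parameters. I will take this embedding as given.

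For surjectivity, Bun--Kothari--Thaler gives $\apdeg(\SURJ_{n/m}) = \widetilde\Omega((n/m)^{3/4})$. Balancing $m = (n/m)^{3/4}$ gives $m = n^{3/7}$, so $\QMA^\dt(\SURJ_n) = \widetilde\Omega(n^{3/7})$.

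For $k$-distinctness, Mande--Thaler--Zhu gives $\apdeg(k\ED_{n/m}) = \widetilde\Omega((n/m)^{(3k-1)/(4k)})$. Solving $m^{(7k-1)/(4k)} = n^{(3k-1)/(4k)}$ gives $m = n^{(3k-1)/(7k-1)}$, as claimed.

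In both cases $m$ is a positive power of $n$, so the ``sufficiently large $m$'' hypothesis holds. The inner approximate degree is polynomially large, so $d \ge 1$. The only care needed is to track the polylogarithmic losses from the subfunction embeddings and the $\widetilde\Omega$ notation.
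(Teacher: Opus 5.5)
Your proposal matches the paper's proof step for step. It uses the same regrouping $\AND_m \circ \OR_n^m = \AND_{m'} \circ (\AND_{m/m'} \circ \OR_n^{m/m'})^{m'}$ with $m' = (mn)^{1/3}$, the same shift-and-concatenate embeddings of $\AND_m \circ \SURJ_{N/m,R/m}^m$ and $\AND_m \circ k\ED_{N/m,N/m}^m$, and the same balancing via Part~2 of Theorem~\ref{t:qma-dt-lb-small-error} with $m = N^{3/7}$ and $m = n^{(3k-1)/(7k-1)}$; your caveat about Item~\ref{item:and-or-lb} is also well founded, since the paper's proof silently needs $m' = (mn)^{1/3} \le m$, and the general bound actually fails when $n \gg m^2 \log^3 n$ (e.g.\ for $m = 1$, $\QMA^\dt(\OR_n) = O(\log n)$ via the naive index witness), though the headline case $m = n^{1/3}$ is unaffected.
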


Before presenting the proof of our lower bounds, we briefly mention lower bounds that follow directly from prior work.
For $\AND_{n^{1/3}} \circ \OR_{n^{2/3}}^{n^{1/3}}$, the best known lower bound is $\Omega(n^{1/6})$, which follows---via the connection between \QMAdt{} complexity and threshold degree in~\cite{dal-etal:quantum-proofs}---from the fact that the threshold degree of $\AND_{n^{1/3}} \circ \OR_{n^{2/3}}^{n^{1/3}}$ is $\Omega(n^{1/3})$.
For $\SURJ_n$, the best known lower bound is $\widetilde{\Omega}(n^{1/4})$, due to a threshold degree lower bound of $\widetilde{\Omega}(\sqrt{n})$~\cite{bun-tha:large-ac-0}.
Finally, for $k\ED_n$, we have $\QMA^\dt(k\ED_n) \geq \QMA^\dt(\ED_n) = \widetilde{\Omega}(n^{2/5})$, which was proved implicitly in~\cite{she-tha:vanishing-error}.

\begin{proof}[Proof of Theorem~\ref{t:qma-dt-lower-bounds}]~

    \textbf{(Part \ref{item:and-or-lb})}
    Note that for every $m'$ that divides $m$, $\AND_m \circ \OR_n^m = \AND_{m'} \circ (\AND_{m/m'} \circ \OR_n^{m/m'})^{m'}$.
    By a lower bound from Section~\ref{sec:prelims}, $\apdeg(\AND_{m/m'} \circ \OR_n^{m/m'}) = \Omega(\sqrt{mn/m'})$.
    By Theorem~\ref{t:qma-dt-lb-small-error}, every \QMAdt{} protocol for $\AND_{m'} \circ (\AND_{m/m'} \circ \OR_n^{m/m'})^{m'}$ with witness length at most $m' - 3$ must have query complexity $\Omega(\sqrt{mn/m'})$. Setting $m' = (mn)^{1/3}$, we get that every \QMAdt{} protocol or $\AND_m \circ \OR_n^m$ either has witness length at least $(mn)^{1/3} - 3$ or query complexity $\Omega(\sqrt{mn/(mn)^{1/3}}) = \Omega((mn)^{1/3})$, which means $\QMA^\dt(\AND_m \circ \OR_n^m) = \Omega((mn)^{1/3})$.

    \textbf{(Part \ref{item:surj-lb})}
    Set $R = N/2$ as is typical.
    Observe that for all $m$ that divide $N/2$, $\QMA^\dt(\SURJ_{N,R}) \geq \QMA^\dt(\AND_m \circ \SURJ_{N/m, R/m}^m)$.
    This is due to the fact that evaluating $\AND_m \circ \SURJ_{N/m, R/m}^m$ amounts to evaluating whether $m$ different lists of numbers from $[R/m]$ each contain every $r \in [R/m]$.
    One way to do this is to, for all $i \in [m]$, add $(i-1)R/m$ to each number in the $i$th list, concatenate all the lists, and then evaluate $\SURJ_{N,R}$ on the concatenated list.
    It follows that any \QMAdt{} protocol for $\SURJ_{N,R}$ can be used to solve $\AND_m \circ \SURJ_{N/m, R/m}^m$ with no additional cost.
    Recall that $\apdeg(\SURJ_{N/m, R/m}) = \widetilde{\Omega}((N/m)^{3/4})$.
    By Theorem~\ref{t:qma-dt-lb-small-error}, every \QMAdt{} protocol for $\AND_m \circ \SURJ_{N/m, R/m}^m$ with witness length at most $m-3$ must have query complexity $\widetilde{\Omega}((N/m)^{3/4})$.
    Setting $m = N^{3/7}$, every \QMAdt{} protocol for $\AND_m \circ \SURJ_{N/m, R/m}^m$ must either have witness length at least $N^{3/7} - 3$ or query complexity at $\widetilde{\Omega}(N^{3/7})$.
    Hence $\QMA^\dt(\SURJ_n) = \widetilde{\Omega}(N^{3/7})$, which is $\widetilde{\Omega}(n^{3/7})$.

    \textbf{(Part \ref{item:ked-lb})}
    Like in the case of surjectivity, $\QMA^\dt(k\ED_{N,N}) \geq \QMA^\dt(\AND_m \circ k\ED_{N/m, N/m}^m)$. When one shifts the input to $\AND_m \circ k\ED_{N/m, N/m}^m$ as in part~\ref{item:surj-lb}, the shifted input is a ``yes'' instance of $k\ED_{N,N}$ if and only if the original instance is a ``yes'' instance of $\AND_m \circ k\ED_{N/m, N/m}^m$.
    Recall that $\apdeg(k\ED_{N/m, N/m}) = \widetilde{\Omega}((N/m)^{3/4 - 1/(4k)})$.
    So Theorem~\ref{t:qma-dt-lb-small-error} tells us that every \QMAdt{} protocol for $\AND_m \circ k\ED_{N/m,N/m}^m$ with witness length at most $m - 3$ must have query complexity $\widetilde{\Omega}((N/m)^{3/4 - 1/(4k)})$. Setting $m = n^{\frac{3k-1}{7k-1}}$ yields the desired lower bound.
\end{proof}

%% file: sections/qma-cc-batch-ver.tex
In this section we show that our lower bound technique lifts to the communication model.
To do so, we prove a modified version of the pattern matrix method~\cite{she:pattern-matrix} for QMA.
Our analysis builds on prior generalizations of the pattern matrix method by Gavinsky and Sherstov~\cite{gav-she:np-conp}, Klauck~\cite{kla:arthur-merlin-communication}, and Sherstov~\cite{she:dnf-cnf}.

\begin{lemma}[{\cite[Lemma 10]{lin-shr:factorization}}] \label{lem:linial-shraibman}
    Let $P$ be a quantum communication protocol (with or without prior entanglement) with cost $c$ on input sets $X, Y$ which outputs values in $\zo$. Then there exist real matrices $A$ and $B$ with $\|A\|_F \le 2^c \sqrt{|X|}$ and $\|B\|_F \le 2^{c} \sqrt{|Y|}$ such that
    \begin{equation*}
        \big[ \E_P[P(x, y)] \big]_{x \in X, y \in Y} = AB.
    \end{equation*}
\end{lemma}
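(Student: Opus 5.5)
We prove the Linial--Shraibman factorization by induction on the rounds of $P$, tracking the joint quantum state as a sum of tensor products of vectors that depend only on $x$ or only on $y$. Without loss of generality $P$ alternates messages of one qubit each, so it has at most $c$ rounds. Prior entanglement is a fixed state $\sum_j \lambda_j \ket{a_j}\ket{b_j}$, independent of the inputs; it can be absorbed into the workspaces because everything below is linear in the initial state. After $t$ rounds we show by induction that the global state can be written as
\begin{equation*}
\ket{\phi_t(x,y)} = \sum_{s \in \zo^t} \ket{\alpha_s(x)} \otimes \ket{\beta_s(y)},
\end{equation*}
where the sum runs over the classical values of the transmitted qubits, and each $\ket{\alpha_s(x)}$ and $\ket{\beta_s(y)}$ has norm at most $1$. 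Each round is a unitary $U_x$ on Alice's register together with the channel qubit, followed by transfer of that qubit. Expanding the outgoing channel qubit in the computational basis doubles the number of terms while keeping each factor of norm at most $1$, because it is a projection of a unitary image. The shared entanglement only changes the initial decomposition into a Schmidt sum.

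The final acceptance probability is $\norm{\Pi \ket{\phi_c}}^2$, where $\Pi$ is a measurement projector on one party's output qubit. Expanding this gives
\begin{equation*}
\sum_{s,s'} \ang{\alpha_{s'}(x), \Pi_A \alpha_s(x)} \, \ang{\beta_{s'}(y), \beta_s(y)},
\end{equation*}
a sum of at most $2^{2c}$ products of an $x$-only scalar and a $y$-only scalar, each of absolute value at most $1$. So set $A_{x,(s,s')} = \ang{\alpha_{s'}(x), \Pi_A\alpha_s(x)}$ and $B_{(s,s'),y} = \ang{\beta_{s'}(y),\beta_s(y)}$. Then $\E_P[P(x,y)] = (AB)_{x,y}$. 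Each row of $A$ has at most $2^{2c}$ entries of modulus at most $1$, so its squared norm is at most $2^{2c}$; hence $\norm{A}_F \le 2^c\sqrt{|X|}$, and likewise $\norm{B}_F \le 2^c \sqrt{|Y|}$. The inner products may be complex. We take real parts, which is legitimate because the total is real, and we split each complex entry into real and imaginary parts; this costs at most a constant factor, which the actual lemma avoids with a slightly sharper accounting.

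The main obstacle is keeping the norm bound exactly $2^c$ rather than $2^{O(c)}$. In particular, the entanglement must not add terms. The fix is to treat the Schmidt sum as one vector inside Alice's and Bob's combined initial state, since it does not factor, and to index only by the transmitted qubits. For that to work, the expansion has to be done at the level of density-free, input-dependent vector pairs, with the entanglement written as a single bipartite vector that is treated linearly throughout.
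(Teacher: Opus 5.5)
The paper does not prove this lemma; it quotes it from Linial and Shraibman. Taken on its own, your argument is the Yao--Kremer decomposition. It handles protocols \emph{without} prior entanglement. The entangled case, however, is not proved, and the statement explicitly covers it; the QMA communication model of this paper, with unlimited shared entanglement, needs it.

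Your inductive claim and your construction of $A$ and $B$ need the final state to be a sum of $2^c$ \emph{product} vectors $\ket{\alpha_s(x)}\otimes\ket{\beta_s(y)}$. Only then is each term of the acceptance probability an $x$-only scalar times a $y$-only scalar. This already fails at $t=0$ when the initial state is $\sum_j\lambda_j\ket{a_j}\ket{b_j}$.

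Suppose you keep that state as a single vector $\ket{\phi}$, as your last paragraph suggests. The same induction gives $\sum_{s}(A_s(x)\otimes B_s(y))\ket{\phi}$ with contractions $A_s(x)$, $B_s(y)$. The acceptance probability becomes $\sum_{s,s'}\bra{\phi}M_{s,s'}(x)\otimes N_{s,s'}(y)\ket{\phi}$, with $M_{s,s'}=A_{s'}^{*}\Pi A_s$ and $N_{s,s'}=B_{s'}^{*}B_s$. These terms do not split into scalars, and you never say how to factor them.

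Suppose instead you expand the Schmidt sum. You get $2^{2c}r^2$ scalar terms. Bounding each by $1$ in modulus, however the weights $\lambda_j\lambda_{j'}$ are split between the two sides, only yields row norms up to $2^c\sqrt{r}$, which is unbounded in the Schmidt rank $r$. So your final paragraph names the obstacle but does not supply the idea that removes it. Moreover, ``the entanglement must not add terms'' is not what happens in a correct proof.

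The missing step is a Tsirelson-type vectorization. For operators $M$ on Alice's side and $N$ on Bob's side,
$\bra{\phi}M\otimes N\ket{\phi}=\langle (M^{*}\otimes I)\ket{\phi},(I\otimes N)\ket{\phi}\rangle$.
This is an inner product of a vector depending only on $x$ with one depending only on $y$, of Euclidean norms at most $\norm{M}\le 1$ and $\norm{N}\le 1$ (operator norms). Apply it with $M=M_{s,s'}(x)$ and $N=N_{s,s'}(y)$. Concatenating the resulting $2^{2c}$ vectors gives the row of $A$ indexed by $x$ and the column of $B$ indexed by $y$, each of norm at most $2^c$ regardless of $r$. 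In effect, every scalar entry of your construction is replaced by a vector block of norm at most $1$. In Schmidt coordinates, this amounts to putting $\lambda_j$ entirely on Alice's side and $\lambda_{j'}$ on Bob's, and using orthonormality of the Schmidt bases (Bessel's inequality) instead of an entrywise bound.

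Separately, your concession that making the matrices real ``costs a constant factor'' is unnecessary. Since the total is real, $\mathrm{Re}\langle u,v\rangle=\langle \mathrm{Re}\,u,\mathrm{Re}\,v\rangle+\langle \mathrm{Im}\,u,\mathrm{Im}\,v\rangle$. Replacing $u$ by $(\mathrm{Re}\,u,\mathrm{Im}\,u)$ and $v$ by $(\mathrm{Re}\,v,\mathrm{Im}\,v)$ preserves norms and gives exactly $2^c$. As written, your argument only proves the bound up to a constant.

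Finally, you need only one qubit per message, not strict alternation. Padding with dummy messages to force alternation could inflate $c$.
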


\begin{lemma} \label{l:pattern-matrix-method}
    Let $F : X \times Y \to \pmone$ and let $\Psi = [\Psi_{x,y}]_{x \in X, y \in Y}$ be a real matrix such that $\norm{\Psi}_1 = 1$.
    Let
    \begin{align*}
        \alpha &= \sum_{x \in X, y \in Y : F(x, y) = -1 \land \Psi(x, y) \leq 0} |\Psi(x, y)|, \\
         \beta &= \sum_{x \in X, y \in Y : F(x, y) = -1 \land \Psi(x, y) > 0} |\Psi(x, y)|, \\
         Q &= \log \frac{\alpha}{\|\Psi\| \sqrt{|X||Y|}}.
    \end{align*}
    Then for every \QMAcc{} protocol with witness length $w$ and communication cost $c$ computing $F$, if $w \leq \log(\alpha/\beta) - 2$ then $c \geq \min\{\Omega(Q/w), \Omega(Q/\log(1/\alpha))\}$.
\end{lemma}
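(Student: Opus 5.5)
The plan is to mirror Lemma~\ref{l:qma-to-sbqp} in the communication setting and then apply Lemma~\ref{lem:linial-shraibman} together with the trace-norm/spectral-norm duality $\ang{A,B} \le \norm{A}_\Sigma \norm{B}$. Start from a \QMAcc{} protocol with witness length $w$ and cost $c$. The Marriott--Watrous amplification (Theorem~\ref{t:marriott-watrous}, adapted to communication) gives a protocol with error $\nu = \min\{2^{-2w}, \alpha/8\}$ (or a similar quantity), witness length $w$, and cost $O(c \log(1/\nu)) = O(c \cdot \max\{w, \log(1/\alpha)\})$. Replace the witness by the maximally mixed state; Bob can prepare it locally. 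This yields a witness-free protocol $P'$ whose acceptance matrix $\Pi = [\Pr[P' \text{ accepts } (x,y)]]$ satisfies two conditions: $\Pi_{x,y} \le \nu$ whenever $F(x,y)=1$, and $\Pi_{x,y} \ge 2^{-w-1}$ whenever $F(x,y) = -1$. Lemma~\ref{lem:linial-shraibman} then factors $\Pi = AB$ with $\norm{\Pi}_\Sigma \le \norm{A}_F\norm{B}_F \le 2^{2c'}\sqrt{|X||Y|}$, where $c'$ is the amplified cost.

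Next I would lower bound a correlation of $\Pi$ with $\Psi$ and compare it with the upper bound $|\ang{\Pi,\Psi}| \le \norm{\Pi}_\Sigma \norm{\Psi} \le 2^{2c'} \sqrt{|X||Y|}\,\norm{\Psi}$. The natural test quantity is $-\ang{\Pi,\Psi}$, split by the three regions. On true inputs with $\Psi \le 0$, the contribution is at least $2^{-w-1}\alpha$. On true inputs with $\Psi>0$, the contribution is at least $-\beta$, since $\Pi \le 1$. On false inputs, the contribution is at least $-\nu \norm{\Psi}_1 = -\nu$. Hence
\[
-\ang{\Pi,\Psi} \ge 2^{-w-1}\alpha - \beta - \nu .
\]
The hypothesis $w \le \log(\alpha/\beta) - 2$ gives $\beta \le 2^{-w-2}\alpha$, and choosing $\nu \le 2^{-w-3}\alpha$ leaves at least $2^{-w-3}\alpha$. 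To get this $\nu$, I would set $\nu = \min\{2^{-w-3}\alpha,\tfrac12\}$, so that $\log(1/\nu) = O(w + \log(1/\alpha))$.

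Combining the two bounds gives $2^{2c'} \ge 2^{-w-3}\alpha/(\norm{\Psi}\sqrt{|X||Y|}) = 2^{Q - w - 3}$, so $c' \ge (Q - w - 3)/2$. The main obstacle is this loss of $w$ in the exponent. The statement's bound $\min\{\Omega(Q/w), \Omega(Q/\log(1/\alpha))\}$ means $c \cdot \max\{w, \log(1/\alpha)\} = \Omega(Q)$. I would handle it by cases. If $w \ge Q/2$, then since the witness counts toward total cost, or by interpreting the bound as trivial there, the claim holds; I need to check that the statement is meant for $w + c$, else argue $Q/w = O(1)$. Otherwise $c' = \Omega(Q)$, and dividing by the amplification overhead $O(\max\{w,\log(1/\alpha)\})$ gives the claim. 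One should also double-check that $Q \le \log(1/\alpha)$-type degeneracies (e.g.\ $Q$ negative) make the statement vacuous.
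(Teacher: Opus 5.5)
Your proposal is correct and follows essentially the same route as the paper: amplify to error about $\alpha 2^{-w}$ (the paper uses $\delta=\alpha 2^{-w-2}$), replace the witness by the maximally mixed state, and lower bound $\ang{\Pi,-\Psi}$ by splitting it over the $\alpha$, $\beta$, and false-input regions. You then compare this with $\norm{\Pi}_\Sigma\norm{\Psi}\le 4^{c'}\norm{\Psi}\sqrt{|X||Y|}$ via Lemma~\ref{lem:linial-shraibman}, arriving at $2c'+w+O(1)\ge Q$, exactly as the paper does. The paper passes from that inequality to the stated bound with a bare ``hence''. Your case split is therefore, if anything, more careful than the original: when $w\ge Q/2$ the claimed bound is only $O(1)$, and otherwise $c'=\Omega(Q)$ and you divide by the $O(w+\log(1/\alpha))$ amplification overhead.
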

\begin{proof}
    Let $P$ be any \QMAcc{} protocol with witness length $w$, communication $c$, and error $1/3$ computing $F$.
    Assume $w \leq \log(\alpha/\beta) - 2$.
    We will show that $c \geq \min\{\Omega(Q/w), \Omega(Q/\log(1/\alpha))\}$.
    Let $\delta = \alpha 2^{-w-2} \leq 1/4$.
    By (the communication analog of) Theorem~\ref{t:marriott-watrous}, there exists a \QMAcc{} protocol $P'$ with witness length $w$ and communication $c' = O(c \log(1/\delta))$ that computes $F$ with completeness and soundness error at most $\delta$.
    Let $P''$ be the ordinary (witness-free) quantum communication protocol with communication $c'$ that simulates $P'$ with the maximally mixed state as the witness, outputs 1 if $P'$ outputs $-1$, and outputs 0 if $P'$ outputs 1 (i.e., converts the output to the $\zo$ basis). Then
    \begin{align*}
         F(x, y) = -1 &\implies \E_{P''} [P''(x, y)] \in [2^{-w}(1-\delta), 1], \text{ and} \\
         F(x, y) = 1 &\implies \E_{P''} [P''(x, y)] \in [0, \delta].
    \end{align*}
    Let $\Pi = \big[ \E_{P''}[P''(x, y)] \big]_{x \in X, y \in Y}$. Then
    \begin{align*}
        \ang{\Pi, -\Psi} &\ge \alpha \cdot 2^{-w}(1-\delta) - \beta - (1-\alpha-\beta)\delta \\
        &\ge \frac{3}{4}\alpha \cdot 2^{-w} - \beta - \delta \\
        &= \alpha \cdot 2^{-w-1} - \beta \\
        &\ge \alpha \cdot 2^{-w-2},
    \end{align*}
    where the last inequality follows from our assumption that $w \le \log(\alpha/\beta) - 2$. Meanwhile, by Lemma~\ref{lem:linial-shraibman}, there are matrices $A, B$ with $AB = \Pi$ and $\|A\|_F \|B\|_F \le 4^{c'} \sqrt{|X||Y|}$. Thus
    \begin{align*}
        \ang{\Pi, -\Psi} &\le \|\Pi\|_{\Sigma} \cdot \|\Psi\| \\
        &\le \|A\|_F \|B\|_F \cdot \|\Psi\| \\
        &\le 4^{c'} \|\Psi\| \sqrt{|X||Y|}.
    \end{align*}
    Combining the two inequalities yields $2c' + w + 2 \ge Q$, and hence $c \ge \min\{\Omega(Q/w), \Omega(Q / \log(1/\alpha))\}$.
\end{proof}

The functions we obtain \QMAcc{} lower bounds for are obtained by block composing hard query problems with a constant-size ``indexing gadget.''
Functions composed with indexing gadgets are closely related to their ``pattern matrices.''
These matrices, combined with the relaxed SBQP duals constructed in the previous section, allow us to apply Lemma~\ref{l:pattern-matrix-method} to obtain a communication complexity analog of Theorems~\ref{t:qma-dt-tradeoffs-c-big} and \ref{t:qma-dt-lb-small-error}.

\begin{definition}
    The \emph{indexing gadget on $k$ bits} is the function $g_k : \pmone^k \times ([k] \times \pmone) \to \pmone$ defined as $g(x, (j,w)) = x_j \oplus w$.
\end{definition}

\begin{definition}
    The $k$-Pattern Matrix of a function $\psi : \pmone^n \to \R$ is the $2^{nk} \times (2k)^n$ matrix $\mathrm{PM}_k(\psi)$ given by
    \begin{align*}
        \mathrm{PM}_k(\psi)[(x^1, \ldots, x^n), ((j_1, w_1), \ldots, (j_n, w_n))] &= \psi(\ldots, g_k(x^i, (j_i, w_i)), \ldots) \\
        &= \psi(x|_J \oplus w)
    \end{align*}
    where $J = (j_1, \ldots, j_n)$ and $x|_J = (x^1_{j_1}, \ldots, x^n_{j_n})$.
    Note that when $\psi$ is a Boolean function, $\mathrm{PM}_k(\psi)$ is the communication matrix of $\psi \circ g_k^n$.
\end{definition}

\begin{lemma}[{\cite[Theorem~4.3]{she:pattern-matrix}}] \label{l:pm-spectral-norm}
    For every function $\psi : \pmone^n \to \R$,
    \begin{equation*}
        \norm{\mathrm{PM}_k(\psi)} = \sqrt{2^{nk} \cdot (2k)^n} \cdot \max_{S \subseteq [n]} \left( |\widehat{\psi}(S)| k^{-|S|/2} \right).
    \end{equation*}
\end{lemma}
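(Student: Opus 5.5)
The approach is to expand $\psi$ in the Fourier basis and show that the pattern matrices of distinct characters are ``bi-orthogonal.'' Then the spectral norm of the sum is the maximum of the spectral norms of the pieces, and each piece's norm can be computed exactly. Write $A_S = \mathrm{PM}_k(\chi_S)$. The map $\psi \mapsto \mathrm{PM}_k(\psi)$ is linear, so
\[
\mathrm{PM}_k(\psi) = \sum_{S \subseteq [n]} \widehat{\psi}(S) A_S .
\]
Explicitly, $A_S[x,(J,w)] = \chi_S(w) \prod_{i \in S} x^i_{j_i}$.

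\textbf{Step 1 (bi-orthogonality).} We show that $A_S A_T^{\mathsf T} = 0$ and $A_S^{\mathsf T} A_T = 0$ for $S \neq T$.
\begin{itemize}
\item For the first identity, the $(x,x')$ entry of $A_S A_T^{\mathsf T}$ is a sum over $(J,w)$. The sum over $w \in \pmone^n$ contributes the factor $\sum_w \chi_S(w)\chi_T(w) = 0$.
\item For the second identity, the $((J,w),(J',w'))$ entry of $A_S^{\mathsf T} A_T$ is a sum over $x = (x^1,\dots,x^n)$. Pick $i \in S \triangle T$. Exactly one factor $x^i_{j_i}$ or $x^i_{j'_i}$ appears, so summing over $x^i \in \pmone^k$ gives $0$.
\end{itemize}

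\textbf{Step 2 (norm of a sum).} Let $A = \sum_S c_S A_S$ with $c_S = \widehat{\psi}(S)$. By the first identity, $AA^{\mathsf T} = \sum_S c_S^2 A_S A_S^{\mathsf T}$. By the second identity, $(A_SA_S^{\mathsf T})(A_TA_T^{\mathsf T}) = 0$ for $S \neq T$. These are PSD matrices with pairwise orthogonal ranges, so they are simultaneously diagonalizable with disjoint supports. Hence
\[
\|A\|^2 = \|AA^{\mathsf T}\| = \max_S c_S^2 \|A_S\|^2 .
\]

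\textbf{Step 3 (computing $\|A_S\|$).} Compute
\[
(A_SA_S^{\mathsf T})[x,x'] = \sum_{J,w} \prod_{i\in S} x^i_{j_i}x'^i_{j_i} = 2^n k^{n-|S|} \prod_{i \in S} \langle x^i, x'^i\rangle .
\]
This matrix is $2^n k^{n-|S|}$ times a tensor product with one factor per coordinate:
\begin{itemize}
\item for $i \in S$, the Gram matrix $HH^{\mathsf T}$, where $H$ is the $2^k \times k$ matrix listing all of $\pmone^k$; its columns are orthogonal of squared norm $2^k$, so $\|HH^{\mathsf T}\| = 2^k$;
\item for $i \notin S$, the all-ones $2^k \times 2^k$ matrix, which also has norm $2^k$.
\end{itemize}
Norms of tensor products multiply, so $\|A_S\|^2 = 2^n k^{n-|S|} 2^{nk}$. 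That is, $\|A_S\| = \sqrt{2^{nk}(2k)^n}\, k^{-|S|/2}$. Substituting into Step 2 gives exactly the claimed formula.

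The main care point is Step 2: we need both orthogonality relations, not just one, to get an exact maximum rather than a triangle-inequality upper bound. The rest is bookkeeping of the factors $2^n$, $k^{n-|S|}$ and $2^{nk}$.
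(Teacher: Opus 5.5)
Your proof is correct: the bi-orthogonality relations $A_SA_T^{\mathsf T}=0$ and $A_S^{\mathsf T}A_T=0$ for $S\neq T$, the resulting identity $\|\mathrm{PM}_k(\psi)\|=\max_S|\widehat{\psi}(S)|\,\|A_S\|$, and the Kronecker-product computation $\|A_S\|^2=2^{nk}(2k)^n k^{-|S|}$ all check out. The paper gives no proof of its own and simply cites Sherstov's Theorem~4.3, which proceeds essentially as you do (Fourier decomposition into pattern matrices of characters, bi-orthogonality, then the exact spectrum of each piece), except that Sherstov records the full multiset of singular values rather than only the norm.
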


\begin{theorem} \label{t:qma-cc-tradeoff-c-large}
    Let $k \geq 2$ be a constant, $n$ be a function of $m$, and $\{f : \pmone^n \to \pmone\}_{m \in \N}$ be a family of functions.
    If for sufficiently large $m$, $\apdeg_{1-1/m}(f) > d \geq 1$, then every \QMAcc{} protocol for $(\AND_m \circ f^m) \circ g^{nm}_k$ with witness length $w = O(m)$ must have communication cost $\Omega_k(d\sqrt{m/w})$.
\end{theorem}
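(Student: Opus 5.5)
We will take the relaxed SBQP dual $\Phi$ for $F=\AND_m\circ f^m$ built in the proof of Theorem~\ref{t:sbqp-lb} and lift it through the pattern matrix construction. Concretely, we apply Corollary~\ref{c:large-error-to-sbqp} with $c=1$, $\eps=\max\{2^{-w},2^{-m-2}\}$, exactly as in Theorem~\ref{t:qma-dt-tradeoffs-c-big}. This yields $\Phi:(\pmone^n)^m\to\R$ with the following properties:
\begin{itemize}
\item $\norm{\Phi}_1=1$;
\item $\ang{\Phi,F}\ge 2\eps$;
\item pure high degree $D=\Omega(d\sqrt{m\log(1/\eps)})=\Omega(d\sqrt{mw})$;
\item misclassified true mass $\beta_0:=\sum_{F=-1,\Phi>0}\Phi\le \eps/M$ with $M=\eps 2^{\tau\sqrt{m\log(1/\eps)}\log m}$, so $\beta_0\le 2^{-\Omega(\sqrt{mw}\log m)}$.
\end{itemize}
We then set $\Psi=\mathrm{PM}_k(\Phi)/\big(2^{nmk}(2k)^{nm}\big)$, so that $\Psi$ is the normalized pattern matrix.

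First, we compute the parameters of Lemma~\ref{l:pattern-matrix-method} for $\Psi$. For each fixed column $(J,w')$, the map $x\mapsto x|_J\oplus w'$ pushes the uniform distribution on rows forward to the uniform distribution on $(\pmone^{n})^m$. Hence every statistic of $\Psi$ that depends only on the values $\Phi(z)$ and $F(z)$ at $z=x|_J\oplus w'$ equals the corresponding statistic of $\Phi$. In particular $\norm{\Psi}_1=1$, the $\beta$ of the lemma equals $\beta_0$, and the lemma's $\alpha$ equals $\alpha_0:=\sum_{F=-1,\Phi\le0}|\Phi|$.

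Next we lower bound $\alpha_0$. From $\ang{\Phi,F}\ge 2\eps$ and balancedness (pure high degree $\ge1$) we have $\sum\Phi=0$. Writing $\ang{\Phi,F}=-\sum_{F=-1}\Phi+\sum_{F=1}\Phi = -2\sum_{F=-1}\Phi$ gives $\sum_{F=-1}\Phi\le-\eps$, hence $\alpha_0\ge\eps+\beta_0\ge\eps$. Therefore $\log(\alpha/\beta)\ge\Omega(\sqrt{mw}\log m)$, which exceeds $w+2$ for $w=O(m)$ and large $m$. This is the same comparison as in Theorem~\ref{t:qma-dt-tradeoffs-c-big}, so the witness hypothesis of the lemma holds. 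We also get $\log(1/\alpha)\le\log(1/\eps)=O(w)$, absorbing the constant cases $w<\log 12$ as before.

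The main obstacle is the spectral term $Q$. By Lemma~\ref{l:pm-spectral-norm},
\[
\norm{\Psi}\sqrt{|X||Y|}=\max_S|\widehat\Phi(S)|\,k^{-|S|/2}\cdot 2^{nm}\cdot\frac{\sqrt{2^{nmk}(2k)^{nm}}\cdot\sqrt{2^{nmk}(2k)^{nm}}}{2^{nmk}(2k)^{nm}}.
\]
Since $|\widehat\Phi(S)|\le 2^{-nm}\norm{\Phi}_1$ and $\widehat\Phi(S)=0$ for $|S|<D$, this quantity is at most $k^{-D/2}$. Thus $Q\ge \log\alpha+\tfrac D2\log k\ge \tfrac D2\log k-O(w)$. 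We must check that $D\log k$ dominates $w$. This holds because $D=\Omega(d\sqrt{mw})\ge\Omega(\sqrt{mw})\ge\Omega(w)$ when $w=O(m)$, up to constants that we handle by taking $k$ large enough or by absorbing them into the $\Omega_k$. This constant-juggling is where the proof could fail: for $k=2$ the $-O(w)$ term may not be dominated when $d=1$, and then we would fall back on the trivial bound.

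Finally, Lemma~\ref{l:pattern-matrix-method} gives $c\ge\Omega(Q/w)=\Omega_k(d\sqrt{mw}/w)=\Omega_k(d\sqrt{m/w})$, and the alternative branch $Q/\log(1/\alpha)$ is of the same order since $\log(1/\alpha)=O(w)$.
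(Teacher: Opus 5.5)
Your proposal follows the paper's proof essentially step for step: the same dual $\Phi$ from Corollary~\ref{c:large-error-to-sbqp} with $\eps=\max\{2^{-w},2^{-m-2}\}$, the normalized pattern matrix, $\alpha\ge\eps$ and $\beta\le\eps/M$, the spectral bound from Lemma~\ref{l:pm-spectral-norm}, and then Lemma~\ref{l:pattern-matrix-method}. The constant issue you flag (which the paper also passes over quickly) is settled by your fallback, and you do not need large $k$: if $\frac{D}{2}\log k<2\log(1/\eps)$, then $d\sqrt{m/w}\le d\sqrt{m/\log(1/\eps)}=O(1)$ and the claimed bound is trivial; otherwise $Q\ge\frac{D}{4}\log k$.

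One slip: the normalization must be $2^{-nmk}k^{-nm}=2^{nm}/(|X||Y|)$, not $1/(2^{nmk}(2k)^{nm})$. Your pushforward argument computes averages over $z$, so with your constant you get $\norm{\Psi}_1=2^{-nm}$, not $1$. Your spectral computation, with its extra factor $2^{nm}$, already implicitly uses the correct normalization.
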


\begin{proof}
    Fix $k$ and $f$ as in the theorem statement, and suppose there is a \QMAcc{} protocol for $(\AND_m \circ f^m) \circ g^{nm}_k$ with witness length $w = O(m)$.
    Assume $w \geq \log(12)$; the case where $w < \log(12)$ follows immediately by taking a smaller constant in the $\Omega$.
    For convenience, let $h = \AND_m \circ f^m$.
    Let $\eps = \max\{2^{-w}, 2^{-m-2}\} \in [2^{-m-2}, 1/12]$.
    By Corollary~\ref{c:large-error-to-sbqp} there exists a constant $\tau$ such that every $(M, \eps)$-SBQP approximation to $\AND_m \circ h^m$ with $M = \eps 2^{\tau \sqrt{m \log(1/\eps)} \log m}$ has degree $\Omega(d \sqrt{m \log(1/\eps)}) = \Omega(d \sqrt{mw})$.
    Furthermore, its proof (via Theorem~\ref{t:sbqp-lb}) gives a relaxed $(M, \eps, \Omega(d\sqrt{mw}))$-SBQP dual $\Phi : \pmone^{mn} \to \R$ for $h$ which has the following properties.
    \begin{enumerate}
        \item $\ang{\Phi, h} > 2 \eps$,
        \item $\norm{\Phi}_1 = 1$,
        \item $\Phi$ has pure high degree at least $d' = \Omega(d \sqrt{mw})$, and
        \item $\sum_{x : h(x) = -1 \land \Phi(x) > 0} \Phi(x) \leq \eps/M = 2^{-\tau \sqrt{m\log(1/\eps)} \log m}$.
    \end{enumerate}

    Let us define $X = (\pmone^k)^{nm}$, $Y = ([k] \times \pmone)^{nm}$, $F : X \times Y \to \pmone$ as $F = h \circ g_k^{nm}$, and $\Psi : X \times Y \to \R$ as $\Psi = \mathrm{PM}_k(2^{-nmk} \cdot k^{-nm} \cdot \Phi) = 2^{-nmk} \cdot k^{-nm} \cdot (\Phi \circ g_k^{nm})$. Note that for every $z \in \pmone^{nm}$, $w \in \pmone^{nm}$, and $J \in [k]^{nm}$, there are exactly $2^{nm(k-1)}$ vectors $x \in (\pmone^k)^{nm}$ such that $x|_J \oplus w = z$. It follows that for every $z \in \pmone^{nm}$, there are $2^{nmk} \cdot k^{nm}$ pairs $(a,b)$ such that $g_k(a,b) = z$.
    Thus
    \begin{align*}
        \norm{\Psi}_1 &= 2^{-nmk} \cdot k^{-nm} \sum_{x \in X, y \in Y} |(\Phi \circ g_k^{nm})(x,y)| \\
        &= 2^{-nmk} \cdot k^{-nm} \sum_{z \in \pmone^n} 2^{nmk} \cdot k^{nm} |\Phi(z)| \\
        &= \norm{\Phi}_1 = 1.
    \end{align*}

    We now define $\alpha$, $\beta$, and $Q$ as in Lemma~\ref{l:pattern-matrix-method} and get appropriate bounds on them. First,
    \begin{align*}
        \beta &= \sum_{x \in X, y \in Y : F(x,y) = -1 \land \Psi(x,y) > 0} |\Psi(x,y)| \\
        &= \sum_{z : h(z) = -1 \land \Phi(z) > 0} |\Phi(z)| \\
        &\leq 2^{-\tau \sqrt{m \log(1/\eps)} \log m}.
    \end{align*}
    Next,
    \begin{align*}
        \alpha &= \sum_{x \in X, y \in Y : F(x,y) = -1 \land \Psi(x,y) \leq 0} |\Psi(x,y)| \\
        &= \sum_{z : h(z) = -1 \land \Phi(z) \leq 0} |\Phi(z)|.
    \end{align*}
    Let $A^- = \sum_{z : h(z) = 1 \land \Phi(z) \leq 0} |\Phi(z)|$ and $A^+ = \sum_{z : h(z) = 1 \land \Phi(z) > 0} |\Phi(z)|$.
    We have the following system of equations due to the definitions of correlation and L1 norm, and the fact that $\Phi$ is balanced.
    \begin{align*}
        \norm{\Phi}_1 &= \alpha + \beta + A^- + A^+, \\
        \ang{\Phi, h} &= \alpha - \beta - A^- + A^+, \text{ and} \\
        0 &= -\alpha + \beta - A^- + A^+.
    \end{align*}
    Simplifying this system yields
    \begin{equation*}
        \alpha = \frac{\ang{\Phi, h}}{2} + 2\beta > \frac{\ang{\Phi, h}}{2} > \eps.
    \end{equation*}
    Now we bound $\norm{\Psi}$. By definition of the Fourier transform, for all $S \subseteq [nm]$,
    \begin{equation*}
        |\widehat{\Phi}(S)| = 2^{-nm} \left| \sum_{x \in \pmone^{nm}} \Phi(x) \chi_S(x) \right| \leq 2^{-nm} \norm{\Phi}_1 = 2^{-nm}.
    \end{equation*}
    Applying Lemma~\ref{l:pm-spectral-norm} and using the fact that $\widehat{\Phi}(S) = 0$ for all $|S| \leq d'$ gives
    \begin{equation*}
        \norm{\Psi} < \left( 2^{nmk} (2k)^{nm} \right)^{1/2} 2^{-nmk} k^{-nm} 2^{-nm} k^{-d'/2}
        = \left( 2^{nmk} (2k)^{nm} \right)^{-1/2} k^{-d'/2}.
    \end{equation*}
    Hence,
    \begin{align*}
        Q &= \log \frac{\alpha}{\norm{\Psi} \sqrt{|X||Y|}} \\
        &\geq \log \eps - \frac{1}{2}\log(|X||Y|) + \frac{nmk}{2} + \frac{nm \log(2k)}{2} + \frac{d'}{2} \log k \\
        &= \log \eps + \frac{d'}{\log k} \\
        &= - w + \frac{d \sqrt{mw}}{\log k} \\
        &= \Omega_k(d \sqrt{mw}).
    \end{align*}
    The second equality comes from plugging in the exact sizes of $X$ and $Y$. The final equality is implied by $w = O(m)$ and $d \geq 1$.
    Finally, note that $\log(\alpha/\beta) > \tau \sqrt{m\log(1/\eps)} \log m - w = \Omega(\sqrt{mw} \log m) > w + 2$ for large $m$.
    So, applying Lemma~\ref{l:pattern-matrix-method} and using the fact that $\log(1/\alpha) \leq w$, we get that the communication cost of the \QMAcc{} protocol must be $\Omega_k(Q/w) = d \sqrt{m/w}$.
\end{proof}

\begin{theorem}
    Let $n$ be a function of $m$ and let $\{f : \pmone^n \to \pmone\}_{m \in \N}$ be a family of functions. The following statements for all constants $k \geq 2$ and sufficiently large $m$.
    \begin{enumerate}
        \item For every $c > 0$, if $\deg_{1-1/m^c}(f) > d \geq 1$ then every \QMAcc{} protocol for $(\AND_m \circ f^m) \circ g_k^{nm}$ with witness length $w = O(m)$ must have communication cost $\Omega_k(d)$.
        \item If $\deg_{1/3}(f) > d \geq 1$ then every \QMAcc{} protocol for $(\AND_m \circ f^m) \circ g_k^{nm}$ with witness length at most $m - 4$ must have communication cost $\Omega_k(d)$.
    \end{enumerate}
\end{theorem}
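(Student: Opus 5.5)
The plan is to mirror the proof of Theorem~\ref{t:qma-cc-tradeoff-c-large}, substituting Corollary~\ref{c:large-error-to-sbqp} with $\eps = 2^{-m-2}$ in Part~1 and Corollary~\ref{c:const-error-to-sbqp} in Part~2. In both parts, set $h = \AND_m \circ f^m$. From the proof of the relevant corollary (via Theorem~\ref{t:sbqp-lb}), extract a relaxed $(M,\eps,d')$-SBQP dual $\Phi$ for $h$ satisfying $\norm{\Phi}_1 = 1$, $\ang{\Phi,h} \ge 2\eps$, and pure high degree $d'$.

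Then form $\Psi = 2^{-nmk}k^{-nm}(\Phi \circ g_k^{nm})$. Exactly as before, $\norm{\Psi}_1 = 1$ and $\beta = \sum_{h(z)=-1,\Phi(z)>0}\Phi(z) \le \eps/M$. The same linear system as before gives $\alpha = \ang{\Phi,h}/2 + 2\beta > \eps$. Lemma~\ref{l:pm-spectral-norm}, together with $|\widehat\Phi(S)| \le 2^{-nm}$, gives $Q \ge \log\eps + \Omega(d'\log k)$. Finally, apply Lemma~\ref{l:pattern-matrix-method}, which requires checking $w \le \log(\alpha/\beta) - 2$.

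\textbf{Part 1.} Take $\eps = 2^{-m-2}$, $M = \eps\, 2^{\tau\sqrt{m\log(1/\eps)}\log m}$, and $d' = \Omega(d\sqrt{m\log(1/\eps)}) = \Omega(dm)$. The premises of Corollary~\ref{c:large-error-to-sbqp} hold for any $c>0$ because $\eps = 2^{-\Omega(m)}$. Then
\[
\log(\alpha/\beta) \ge \log(\eps/\beta) \ge \log M = \Omega(m\log m) - O(m),
\]
which exceeds $w+2$ for large $m$ since $w = O(m)$. We also have $Q \ge -(m+2) + \Omega_k(dm) = \Omega_k(dm)$. Here the $-m$ term is absorbed by adjusting constants. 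Since $\log k \ge 1$ and the constant in $d' = \Omega(dm)$ is fixed, this needs care. I would instead argue that if $Q$ were small, the bound is trivial, or else note that $d'$ carries the factor $\log k$ and take $k$ large enough. Alternatively, absorb the term because $\Omega_k$ may hide a $k$-dependence only when $d'\log k/2$ dominates $m+2$; I expect this to be the delicate bookkeeping point. Lemma~\ref{l:pattern-matrix-method} then gives
\[
c \ge \min\{\Omega(Q/w), \Omega(Q/\log(1/\alpha))\}.
\]
Here $\log(1/\alpha) \le m+2$ and $w = O(m)$, so $c = \Omega_k(dm/m) = \Omega_k(d)$.

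\textbf{Part 2.} Use Corollary~\ref{c:const-error-to-sbqp}, whose proof invokes Theorem~\ref{t:sbqp-lb} with $\eps = 2^{-m-2}$ and $M = 2^{m-1}$, after first amplifying $\apdeg_{1/3}$ to $\apdeg_{0.99}$ at a constant-factor loss in $d$. This gives $\Phi$ with $d' = \Omega(dm)$ and $\beta \le \eps/M = 2^{-2m-1}$, while $\alpha > 2^{-m-2}$. Hence $\log(\alpha/\beta) > m - 1$, so the hypothesis $w \le \log(\alpha/\beta) - 2$ holds whenever $w \le m-4$; this explains the $m-4$, versus $m-3$ in the query version. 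The same computation gives $Q = \Omega_k(dm) - O(m)$, and dividing by $\max\{w, \log(1/\alpha)\} = O(m)$ yields $\Omega_k(d)$.

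\textbf{Main obstacle.} The hard step is ensuring that $Q = \log\eps + \Theta(d'\log k)$ is genuinely $\Omega(dm)$ despite the $-(m+2)$ term. When $d$ is a small constant, this needs the hidden constant in $d' = \Omega(dm)$ times $\log k/2$ to beat $1$. I would handle it by noting that the statement is trivial (cost $\ge 1$) when $d$ is bounded, and that for $d$ larger than a suitable constant, $d'\log k/2 \ge 2(m+2)$. The remaining steps are routine instantiations of the earlier lemmas.
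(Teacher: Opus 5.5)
Your proposal is correct and takes essentially the same route as the paper: rerun the pattern-matrix argument of Theorem~\ref{t:qma-cc-tradeoff-c-large} with the dual $\Phi$ coming from Corollary~\ref{c:large-error-to-sbqp} at $\eps = 2^{-m-2}$ (Part 1) or from Corollary~\ref{c:const-error-to-sbqp} (Part 2), bound $\alpha$, $\beta$ and $Q$, check that $w \le \log(\alpha/\beta)-2$, and divide $Q=\Omega_k(dm)$ by $\max\{w,\log(1/\alpha)\}=O(m)$ via Lemma~\ref{l:pattern-matrix-method}. Your case split on bounded $d$ is valid: when $d$ is bounded the claim is trivial because any correct protocol must communicate at least one qubit, and otherwise $d'\log k/2$ dominates $m+2$. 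This handles the $\log\eps=-(m+2)$ term in $Q$, which the paper absorbs without comment.
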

\begin{proof}
    \textbf{(Part 1)}
    The proof is identical to that of Theorem~\ref{t:qma-cc-tradeoff-c-large}, except with $\eps = 2^{-m-2}$.
    Since the premises of
    Corollary~\ref{c:large-error-to-sbqp} are met with this setting of parameters, we get the necessary dual $\Phi$ via Theorem~\ref{t:sbqp-lb}.
    Following the proof of Theorem~\ref{t:qma-cc-tradeoff-c-large}, we get $\alpha > \eps = 2^{-m-2}$, $\beta \leq 2^{-\tau m \log m}$ ($\tau$ being a constant), and $Q = \Omega_k(dm)$.
    Consider any \QMAcc{} protocol with witness length $w \leq m$.
    That $w = O(m)$ implies $w \leq \log(\alpha/\beta) - 2$.
    Applying Lemma~\ref{l:pattern-matrix-method}, we get that the communication cost of the protocol must be at least $\Omega(Q/\log(1/\alpha)) = \Omega_k(d)$.

    \textbf{(Part 2)} We use the same proof again, except we obtain the dual $\Phi$ via Corollary~\ref{c:const-error-to-sbqp}. In this case, we get $\alpha > 2^{-m-2}$, $\beta \leq 2^{-2m}$, and $Q = \Omega_k(dm)$. We have $\log(\alpha/\beta) - 2 \geq m - 4 \geq w$, so the communication lower bound follows via Lemma~\ref{l:pattern-matrix-method}.
\end{proof}

By a straightforward application of these two theorems, our lower bounds on the \QMAdt{} complexity of the two-level AND-OR tree, surjectivity, and $k$-element distinctness (Theorem~\ref{t:qma-dt-lower-bounds}) yield matching \QMAcc{} lower bounds on those functions composed with an indexing gadget.